\newtheorem{theorem}{Theorem}[section]
\newtheorem{lemma}[theorem]{Lemma}
\newtheorem{corollary}[theorem]{Corollary}
\theoremstyle{definition}
\newtheorem{definition}[theorem]{Definition}
\newtheorem{remark}[theorem]{Remark}
\newtheorem{example}[theorem]{Example}
\def\<{\langle}
\def\>{\rangle}
\def\Graph{{\sf G}}
\def\Ball{{\sf B}}
\def\sT{{\sf T}}
\def\dist{{\sf dist}}
\def\tnu{\tilde{\nu}}
\def\de{{\rm d}}
\def\ed{\stackrel{{\rm d}}{=}}
\def\lsdp{\lambda^{\mbox{\tiny\rm SDP}}}
\def\bsigma{{\boldsymbol \sigma}}
\def\btau{{\boldsymbol \tau}}
\def\bxi{{\boldsymbol \xi}}
\def\tF{\widetilde{F}}
\def\td{\tilde{d}}
\def\bnu{{\boldsymbol \nu}}
\def\bv{{\boldsymbol v}}
\def\bu{{\boldsymbol u}}
\def\bA{{\boldsymbol A}}
\def\bD{{\boldsymbol D}}
\def\bX{{\boldsymbol X}}
\def\bB{{\boldsymbol B}}
\def\bR{{\boldsymbol R}}
\def\bT{{\boldsymbol T}}
\def\bS{{\boldsymbol S}}
\def\bH{{\boldsymbol H}}
\def\btau{\pmb{\tau}}
\def\bz{{\boldsymbol z}}
\def\rd{{\rm d}}
\def\bLambda{{\boldsymbol \Lambda}}
\def\e{{\boldsymbol 1}}
\def\sG{{\sf G}}
\def\us{\underline{{\sf s}}}
\def\SDP{{\sf SDP}}
\def\PSD{{\sf PSD}}
\def\E{\mathbb{E}}
\def\P{\mathbb{P}}
\def\R{\mathbb{R}}
\def\reals{\mathbb{R}}
\def\cF{\mathcal{F}}
\def\cE{\mathcal{E}}
\def\cA{\mathcal{A}}
\def\cW{\mathcal{W}}
\def\cI{\mathcal{I}}
\def\cT{\mathcal{T}}
\def\cG{\mathcal{G}}
\def\cM{\mathcal{M}}
\def\root{{\o}}
\def\1{\mathbbm{1}}
\def\eps{\varepsilon}
\def\oE{E^{o}}
\def\ER{Erd\H{o}s-R\'enyi } 
\def\numcycle{\#_c}
\def\toloc{\Rightarrow}
\def\nue{\nu_e}
\DeclareMathOperator{\Var}{Var}
\DeclareMathOperator{\Tr}{Tr}
\DeclareMathOperator{\diag}{diag}
\DeclareMathOperator{\Pois}{Poisson}
\DeclareMathOperator{\Normal}{Normal}
\DeclareMathOperator{\Binom}{Binom}
\DeclareMathOperator{\sign}{sign}
\DeclareMathOperator{\OP}{\mathcal{O}}
\DeclareMathOperator{\Id}{Id}
\newcommand{\Ker}{\operatorname{Ker}}
\renewcommand{\Im}{\operatorname{Im}}
\def\cond{\mathrm{c}}
\def\bc{\overline{\mathrm{c}}}
\def\tcond{\tilde{\mathrm{c}}}
\def\true{\mathrm{true}}
\title{How Well Do Local Algorithms Solve Semidefinite Programs?}
\author{Zhou~Fan\footnote{Department of Statistics, Stanford
    University}
\and
Andrea~Montanari\footnote{Department of Electrical
    Engineering and Department of Statistics, Stanford University}
}
\begin{document}

\maketitle

\begin{abstract}
Several probabilistic models from high-dimensional statistics and machine learning reveal
an intriguing --and yet poorly understood-- dichotomy. Either simple local algorithms succeed in estimating
the object of interest, or even sophisticated semi-definite programming (SDP) relaxations fail.
 In order to explore this phenomenon, we study a classical SDP  relaxation of the 
minimum graph bisection problem, when applied to \ER random graphs with bounded average degree 
$d>1$, and obtain several types of results. First, we use a dual witness construction (using the so-called 
non-backtracking matrix of the graph) to upper bound the SDP value. Second, we prove that a 
simple local algorithm approximately solves the SDP to within a factor
$2d^2/(2d^2+d-1)$ of the upper bound. In particular,
the local algorithm is at most $8/9$ suboptimal, and $1+O(1/d)$ suboptimal for large degree.

We then analyze a more sophisticated local algorithm, which aggregates information according to the 
harmonic measure on the limiting Galton-Watson (GW) tree. The resulting lower bound is expressed in terms of the
conductance of the GW tree and matches surprisingly well the empirically
determined SDP values on large-scale \ER graphs.

We finally consider  the planted partition model. In this case, purely local algorithms are known to fail, but they do
succeed if a small amount of side information is available. Our results imply quantitative bounds on the threshold for partial recovery using SDP in this model.
\end{abstract}

\section{Introduction}

Semi-definite programming (SDP) relaxations are among the most powerful tools available to 
the algorithm designer. However, while efficient specialized solvers exist for several important applications
\cite{burer2003nonlinear,weinberger2008fast,nesterov2013first}, generic SDP
algorithms are not well suited for large-scale problems. 
At the other end of the spectrum, local algorithms attempt to solve
graph-structured problems by taking, at each vertex of the graph, a decision that is only based on a bounded-radius neighborhood 
of that vertex \cite{suomela2013survey}. As such, they can be implemented in linear time, or constant time on a distributed platform. 
On the flip side, their power is obviously limited.

Given these fundamental differences, it is surprising that these two classes of algorithms 
behave similarly on a  number of probabilistic models arising from statistics and machine learning.
Let us briefly review two well-studied  examples of this  phenomenon. 

\vspace{0.2cm}

In the (generalized) \emph{hidden clique problem}, a random graph $G$ over $n$
vertices is generated as follows:
A subset $S$ of $k$ vertices is chosen uniformly at random among all $\binom{n}{k}$ sets of that size.
Conditional on $S$, any two vertices $i$, $j$ are connected by an edge independently with probability 
$p$ if $\{i,j\}\subseteq S$ and probability $q<p$ otherwise. 
Given a single realization of this random graph $G$, we are requested to find the set $S$.
(The original formulation \cite{jerrum1992large} of the problem uses $p=1$, $q=1/2$
but it is useful to consider the case of general $p$, $q$.)

SDP relaxations for the hidden clique problem were studied in a number of papers, beginning with the 
seminal work of Feige and Krauthgamer \cite{feige2000finding,ames2011nuclear,meka2015sum,deshpande2015improved,barak2016nearly}. 
Remarkably, even the most powerful among these relaxations --which are constructed through the sum-of-squares (SOS) hierarchy--
fail unless $k\gtrsim \sqrt{n}$ \cite{barak2016nearly}, while exhaustive search succeeds with high probability
as soon as $k\ge C_0\log n$ for $C_0=C_0(p,q)$ a constant.
Local algorithms can be formally defined only for a sparse version of this model, whereby $p,q=\Theta(1/n)$
and hence each node has bounded average degree \cite{montanari2015finding}.
In this regime, there exists an optimal local algorithm for this problem that is related to the `belief propagation'
heuristic in graphical models. The very same algorithm can be applied to dense graphs (i.e. 
$p,q=\Theta(1)$),  and was proven to succeed if and only if $k\ge C_1\sqrt{n}$ \cite{deshpande2015finding}. 
Summarizing, the full power of the SOS hierarchy, despite having a much larger
computational burden, does not qualitatively improve upon the
performance of simple local heuristics.

\vspace{0.2cm}

As a second example, we consider  the \emph{two-groups symmetric stochastic block model} (also known as the planted partition problem)
that has attracted considerable attention in recent years as a toy model for community detection in networks 
\cite{decelle2011asymptotic,krzakala2013spectral,mossel2013proof,massoulie2014community,bordenaveetal,guedon2015community}.
A random graph $G$ over $n$ vertices is generated by partitioning 
the vertex set into two subsets\footnote{To avoid notational nuisances, we assume $n$ even.} 
$S_{+}\cup S_{-}$ of size $n/2$ uniformly at random. Conditional on this
partition, any two vertices $i$, $j$ 
are connected by an edge independently with probability 
$a/n$ if $\{i,j\}\subseteq S_{+}$ or $\{i,j\}\subseteq S_{-}$ (the two vertices are on the same side of the partition),
and with probability $b/n$ otherwise (the two vertices are on different sides).
Given a single realization of the random graph $G$, we are requested to identify the partition.

While several `success' criteria have been studied for this model, for the sake of simplicity we will focus 
on weak recovery (also referred to as `detection' or `partial recovery'). Namely, we want to attribute 
$\{+,-\}$ labels to the vertices so that --with high probability-- at least $(1/2+\eps)n$ vertices are labeled correctly 
(up to a global sign flip that cannot be identified). It was conjectured in \cite{decelle2011asymptotic} that this is possible if
and only if $\lambda >1$, where $\lambda\equiv (a-b)/\sqrt{2(a+b)}$  is an
effective `signal-to-noise ratio' parameter.
This conjecture followed from the heuristic analysis of a local algorithm based --once again-- on belief-propagation.
The conjecture was subsequently proven in \cite{mossel2012stochastic,mossel2013proof,massoulie2014community} through the analysis of 
carefully constructed spectral algorithms.
While these algorithms are, strictly speaking,  not local, they are related to the linearization of belief propagation around a 
`non-informative fixed point'.

Convex optimization approaches for this problem are based on the classical SDP relaxation of the 
minimum-bisection problem. Denoting by $\bA = \bA_G$ the adjacency matrix of $G$, the minimum bisection 
problem is written as
\begin{align}
\mbox{\rm maximize} &\;\;\; \<\bsigma,\bA \bsigma\>\, ,\\
\mbox{\rm subject to} &\;\;\; \bsigma\in\{+1,-1\}^n\, , \;\; \<\bsigma,\e\>=0\, .
\end{align}
The following SDP relaxes the above problem, where $d=(a+b)/2$ is the
average degree:
\begin{align}
\mbox{\rm maximize} &\;\;\; \<\bA-\frac{d}{n}\e\e^{\sT},\bX \>\, ,\label{eq:SDP}\\
\mbox{\rm subject to}&\;\;\; \bX \succeq 0\, , \;\; \bX_{ii}=1 \;\;\forall i\nonumber\, .
\end{align}
(Here, the term $-(d/n)\e\e^{\sT}$ can be thought of as a relaxation of the hard  constraint $\<\bsigma,\e\>=0$.)
This SDP relaxation  has a weak recovery threshold $\lsdp$ that appears to be very close to the ideal one 
$\lambda=1$.
Namely, Gu\'edon and Vershynin \cite{guedon2015community} proved $\lsdp\le C$ for $C$ a universal constant, while \cite{montanari2016semidefinite}
established $\lsdp = 1+o_d(1)$ for large average degree $d$.

Summarizing, also for the planted partition problem local algorithms (belief propagation) and SDP relaxations
behave in strikingly similar ways\footnote{An important remark is that strictly
local algorithms are ineffective in the 
planted partition problem. This peculiarity is related to the symmetry of the model, and can be resolved in several ways,
 for instance by an oracle that reveals an arbitrarily small fraction of the true vertex labels, or running belief propagation a 
logarithmic (rather than constant) number of iterations. We refer to Section \ref{sec:SBM} for further discussion of this point.}. In addition to the above rigorous results, numerical evidence suggests
that the two thresholds are  very close for all degrees $d$, and that the reconstruction accuracy above these
thresholds is also  very similar \cite{javanmard2016phase}. 

\vspace{0.2cm}

The conjectural picture emerging from these and similar examples can be described as follows. For statistical inference problems
on sparse random graphs, SDP relaxations are no more powerful than local algorithms (eventually supplemented 
with a small amount of side information to break symmetries). On the other hand, any  information that is genuinely non-local
is not exploited even by sophisticated SDP hierarchies.  
Of course, formalizing this picture is  of utmost practical interest, since it would entail a dramatic 
simplification of  algorithmic options.

\begin{figure}[t!]
\begin{center}
\includegraphics[width=.75\textwidth]{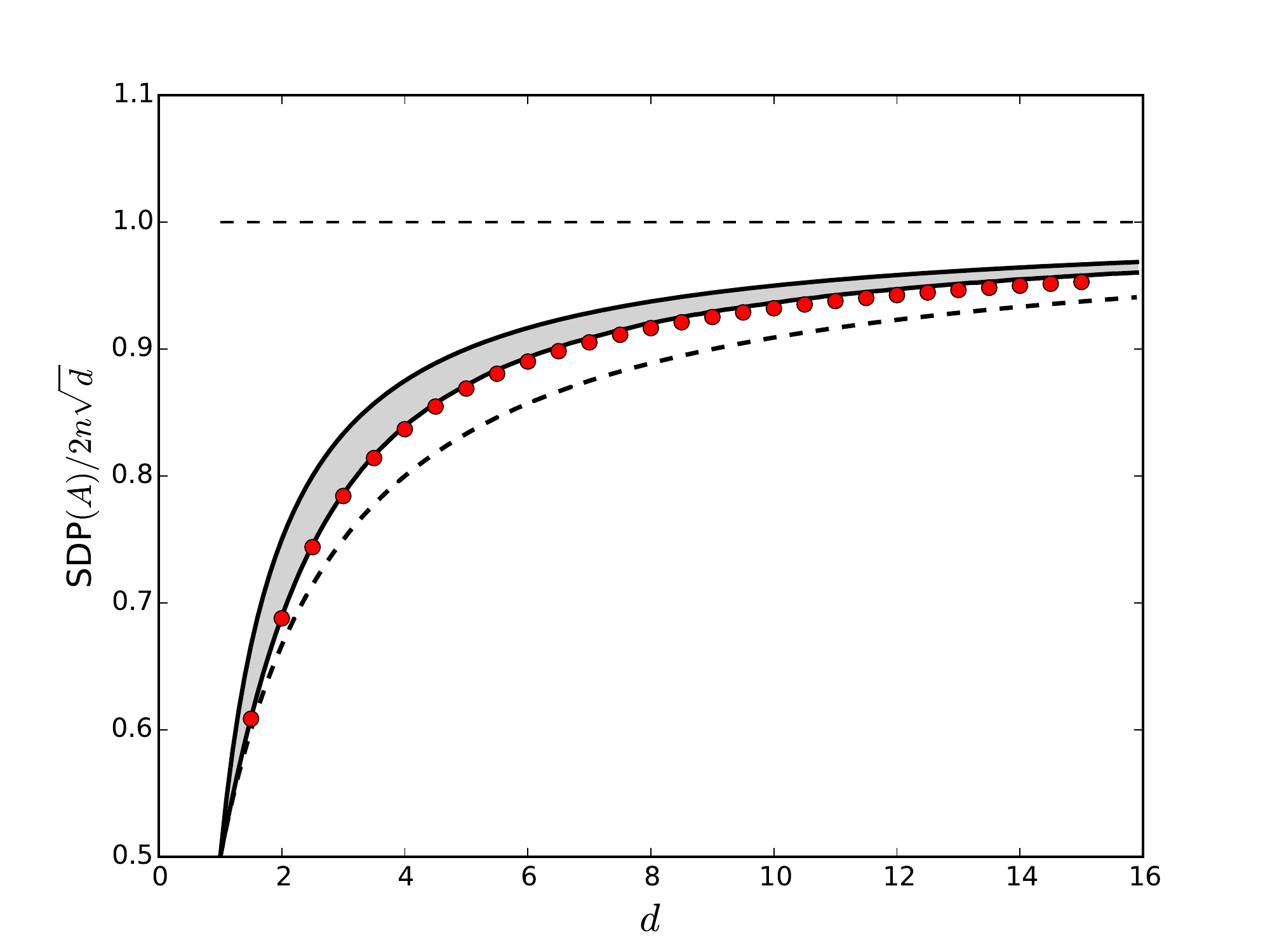}
\end{center}
\caption{Typical value $\SDP(\bA_G)$ of the min-bisection SDP for large \ER random graphs with average degree $d$,
normalized by the large degree formula $2n\sqrt{d}$. Circles: numerical simulations with graphs of size $n=10^6$. Solid lines:
Upper bound from Theorem \ref{thm:ER} and local algorithm lower bound (evaluated
numerically) from Theorem \ref{thm:Harmonic}. Lower dashed line: 
Explicit local-algorithm lower bound from Theorem \ref{thm:ER}. 
(The small inconsistency between numerical SDP values and the lower bound at large $d$ is due to non-asymptotic effects
that appear to vanish as $n\to\infty$.)}\label{fig:SDP_Value}
\end{figure}
With this general picture in mind, it is natural to ask:
\emph{Can semidefinite programs be (approximately) solved by local algorithms for a large class of random graph models}?
A positive answer to this question would clarify the equivalence between local algorithms and SDP relaxations.

Here, we address this problem by considering the  semidefinite program (\ref{eq:SDP}),
for two simple graph models, the \ER random graph with average degree $d$, $G\sim \Graph(n,d/n)$,
and the two-groups symmetric block model, $G\sim \Graph(n,a/n,b/n)$.
We establish the following results (denoting by $\SDP(\bA_G)$  the value of (\ref{eq:SDP})).
\begin{description}
\item[Approximation ratio of local algorithms.] We prove that there exists a simple local algorithm that 
approximates $\SDP(\bA_G)$  (when $G\sim\Graph(n,d/n)$) within a factor $2d^2/(2d^2+d-1)$, 
asymptotically for large $n$. In particular, the local algorithm is at most a
factor $8/9$ suboptimal, and $1+O(1/d)$ 
suboptimal for large degree.

Note that $\SDP(\bA_G)$ concentrates tightly around its expected value.  When we write that an algorithm \emph{approximates
$\SDP(\bA_G)$}, we mean that it returns a feasible solution whose value satisfies the claimed approximation guarantee.
\item[Typical SDP value.] Our proof provides upper and lower bounds on $\SDP(\bA_G)$ for
$G\sim\Graph(n,d/n)$, implying in particular $\SDP(\bA_G)/n = 2\sqrt{d}(1-\Theta(1/d)) +o_n(1)$ where
the term $\Theta(1/d)$ has explicit upper and lower bounds. While the lower bound is based on the 
analysis of a local algorithm, the upper bound follows from a dual witness construction which is of independent interest.

Our upper and lower bounds are plotted in Fig.\ \ref{fig:SDP_Value} together with the results of numerical simulations.
\item[A local algorithm based on harmonic measures.] The simple local algorithm
above uses randomness available at each vertex of 
$G$ and aggregates it uniformly within a neighborhood of each vertex. We analyze a different local algorithm that aggregates information in proportion to 
the harmonic measure of each vertex. We characterize the  value achieved by this
algorithm in the large $n$ limit in terms of the conductance of a random
Galton-Watson tree.  Numerical data (obtained by evaluating this value and also
solving the SDP (\ref{eq:SDP}) on large random graphs), as well as a large-$d$
asymptotic expansion, suggest that this lower
bound is very accurate, cf.\ Fig.\ \ref{fig:SDP_Value}.
\item[SDP detection threshold for the stochastic block model.] We then turn to the weak recovery problem in the two-group
symmetric stochastic block model $G\sim \Graph(n,a/n,b/n)$. As above, it is more convenient to parametrize this 
model by the average degree $d=(a+b)/2$ and the signal-to-noise ratio $\lambda = (a-b)/\sqrt{2(a+b)}$.
It was known from \cite{guedon2015community} that the threshold for SDP to achieve weak recovery is $\lsdp(d) \le 10^4$, and in
\cite{montanari2016semidefinite} that $\lsdp(d)\le 1+o_d(1)$ for large degree. Our results provide more precise information, implying in particular 
$\lsdp(d) \le \min(2-d^{-1}, 1+C\, d^{-1/4})$ for $C$ a universal constant.
\end{description}

\section{Main results}\label{sec:results}

In this section we recall the notion of local algorithms, as specialized to
solving the problem (\ref{eq:SDP}). We then state formally our main results. 
For general background on local algorithms, we refer to \cite{hatami2014limits,gamarnik2014limits,lyons2014factors}: this line of work is briefly 
discussed in Section \ref{sec:Related}.

Note that the application of local algorithms to solve SDPs is not entirely obvious,
since local algorithms are normally defined to return a quantity for each vertex in $G$,
instead of a matrix $\bX$ whose rows and columns are indexed by those vertices.
Our point of view will be that a local algorithm can solve the SDP (\ref{eq:SDP}) by returning, for each vertex $i$,
a random variable $\xi_i$, and the SDP solution $\bX$ associated to this local
algorithm is the covariance matrix of $\bxi=(\xi_1,\ldots,\xi_n)$ with respect
to the randomness of the algorithm execution.
An arbitrarily good approximation of this solution
$\bX$ can be obtained by repeatedly sampling the vector $\bxi\in \R^n$ (i.e.\ by
repeatedly running the algorithm with independent randomness).

Formally, let $\cG$ be the space of (finite or) locally finite rooted graphs,
i.e.\ of pairs $(G,\root)$ where $G=(V,E)$ is a locally finite graph and
$\root\in V$ is a distinguished root vertex. 
We denote by $\cG^*$ the space of tuples $(G,\root,\bz)$ where $(G,\root) \in \cG$ and
$\bz:V \to \R$ associates a real-valued mark to each vertex of $G$. 
Given a graph $G=(V,E)$ and a vertex $i\in V$, we denote by $\Ball_{\ell}(i;G)$ the subgraph induced by vertices
$j$ whose graph distance from $i$ is at most $\ell$, rooted at $i$. If $G$ carries marks $\bz:V\to\R$, it
is understood that $\Ball_{\ell}(i;G)$ inherits the `same' marks. We will write in this case 
$(\Ball_{\ell}(i;G),\bz)$ instead of the cumbersome (but more explicit) notation $(\Ball_{\ell}(i;G),i,\bz_{\Ball_{\ell}(i;G)})$.
\begin{definition}\label{def:Local}
A {\bf radius-$\pmb{\ell}$ local algorithm} for the semidefinite program (\ref{eq:SDP}) 
is any measurable function $F:\cG^* \to \R$ such that 
\begin{enumerate}
\item $F(G_1,\root_1,\bz_1)=
F(G_2,\root_2,\bz_2)$ if $(\Ball_\ell(\root_1;G_1),\bz_1) \simeq (\Ball_\ell(\root_2;G_2),\bz_2)$, where $\simeq$ denotes
graph isomorphism that preserves the root vertex and
 vertex marks. 
\item Letting $\bz = (z(i))_{i\in V}$ be i.i.d.\ with $z(i)\sim\Normal(0,1)$, we have 
$\E_{\bz}\big\{F(G,\root,\bz)^2\big\}=1$. (Here and below $\E_{\bz}$ denotes expectation with respect to the
random variables $\bz$).
\end{enumerate}
We denote the set of such functions $F$ by $\cF_*(\ell)$. 

A {\bf local algorithm} is a radius-$\ell$ local algorithm for some fixed $\ell$ (independent of the graph).
The set of such functions is denoted by $\cF_*\equiv \cup_{\ell\ge 1}\cF_*(\ell)$. 
\end{definition}
We apply a local algorithm to a fixed graph $G$ by generating
i.i.d.\ marks $\bz=(z(i))_{i \in V}$ as $z(i) \sim \Normal(0,1)$,
and producing the random variable $\xi_i=F(G,i,\bz)$ for each vertex $i \in V$. 
In other words, we use the radius-$\ell$ local algorithm to compute, for each
vertex of $G$, a function
of the ball of radius $\ell$ around that vertex that depends on the additional
randomness provided by the $z(i)$'s in this ball. The covariance matrix
$\bX=\E_{\bz}\{\bxi\bxi^{\sT}\}$ is a feasible point for the SDP
(\ref{eq:SDP}), achieving the value $n\,\cE(F;G)$ where
\begin{align}
\cE(F;G)\equiv \frac{1}{n}\E_{\bz}\left\{\sum_{i,j\in V}\Big((\bA_G)_{ij}-\frac{d}{n}\Big) F(G,i,\bz) F(G,j,\bz)\right\}\, .\label{eq:ValueDef}
\end{align}
We are now in position to state our main results.

\subsection{\ER random graphs $G\sim\Graph(n,d/n)$}
\label{sec:MainER}
We first prove an optimality guarantee, in the large $n$ limit, for the value
achieved by a simple local algorithm (or more precisely, a sequence of simple
local algorithms) solving (\ref{eq:SDP}) on the \ER graph.
\begin{theorem}\label{thm:ER}
Fix $d\ge 0$ and let $\bA=\bA_{G_n}$ be the adjacency matrix of the \ER random graph
$G_n \sim \sG(n,d/n)$. Then for $d< 1$, almost surely,
$\lim_{n\to\infty} \SDP(\bA)/n=d$. For $d\ge 1$, almost surely,
\begin{align}
2\sqrt{d}\left(1-\frac{1}{d+1}\right)
\leq \liminf_{n \to \infty} \frac{1}{n}\SDP(\bA)
\leq \limsup_{n \to \infty} \frac{1}{n}\SDP(\bA)
\leq 2\sqrt{d}\left(1-\frac{1}{2d} \right). \label{eq:UpperLowerMain}
\end{align}
Furthermore, there exists a sequence of local algorithms that achieves the lower
bound. Namely, for each $\eps>0$,
there exist $\ell(\eps)>0$ and $F\in\cF_*(\ell(\eps))$  such that, almost
surely,
\begin{align} 
\lim_{n\to\infty}\cE(F;G_n) \ge 2\sqrt{d}\left(1-\frac{1}{d+1}\right)-\eps\,
.\label{eq:locallowermain}
\end{align}
\end{theorem}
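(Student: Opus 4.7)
The plan is to prove the three claims of Theorem~\ref{thm:ER} separately: the subcritical case $d<1$, the lower bound for $d\ge 1$ achieved by an explicit local algorithm, and the upper bound for $d\ge 1$ via an SDP dual witness.

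\emph{Subcritical case $d<1$.} Components of $G_n$ are almost surely trees of size $O(\log n)$ (with only $o(n)$ vertices in components containing cycles). The block-diagonal feasible point $\bX$ equal to $\e\e^\sT$ on each component yields $\langle\bA,\bX\rangle=2|E(G_n)|$ and $\e^\sT\bX\e=\sum_C|C|^2=O(n\log n)$; dividing by $n$ gives $\SDP(\bA)/n\ge 2|E|/n - O(\log n/n)\to d$. The matching upper bound follows from $|X_{ij}|\le 1$ (a consequence of $\bX\succeq 0$ with unit diagonal) and $\e^\sT\bX\e\ge 0$, which together yield $\SDP(\bA)\le 2|E(G_n)|$.

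\emph{Local algorithm lower bound for $d\ge 1$.} Consider the self-normalized family
\begin{equation*}
F_\ell(G,\root,\bz) = \frac{\sum_{u\in \Ball_\ell(\root;G)} \alpha^{\dist(u,\root)} z_u}{\bigl(\sum_{u\in \Ball_\ell(\root;G)} \alpha^{2\dist(u,\root)}\bigr)^{1/2}},
\end{equation*}
parameterized by $\alpha>0$ and $\ell\in\mathbb{N}$, with $\E_\bz[F_\ell^2]=1$ by construction as required by Definition~\ref{def:Local}. Local weak convergence of $G_n\sim\sG(n,d/n)$ to the Poisson$(d)$ Galton--Watson tree, combined with the bounded dependence of $F_\ell$ on the radius-$\ell$ marked ball, implies $\cE(F_\ell;G_n)$ concentrates around an expectation over two adjacent rooted GW trees joined at a random edge. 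Computing this expectation by decomposing $\Ball_\ell(i)\cap\Ball_\ell(j)$ into the $i$-side and $j$-side branches, the sums reduce to geometric series in $\beta=\alpha^2 d$, with the self-normalization introducing moment corrections from the generation sizes $M_s^{(i)},M_s^{(j)}$ of the branching process. The $-(d/n)\e\e^\sT$ term contributes only $O(d^{2\ell+1}/n)\to 0$ for each fixed $\ell$, since only pairs at graph distance $\le 2\ell$ have nonzero correlation. Optimizing over $\alpha$ and letting $\ell\to\infty$ yields the claimed bound $2\sqrt{d}(1-1/(d+1))$.

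\emph{Upper bound via dual witness for $d\ge 1$.} Strong duality for (\ref{eq:SDP}) gives
\begin{equation*}
\SDP(\bA) \;=\; \min\Bigl\{\sum_{i=1}^n y_i \;:\; \diag(\boldsymbol{y}) - \bA + (d/n)\e\e^\sT \succeq 0\Bigr\}.
\end{equation*}
A constant witness $\boldsymbol{y}=y_0\e$ fails because $\bA$ has extreme eigenvalues of order $\sqrt{\log n/\log\log n}$ from high-degree vertices. The plan is to route through the non-backtracking matrix $\bB_G$, whose bulk spectrum concentrates on the disk $\{|z|\le\sqrt{d}\}$ with a single Perron outlier near $d$ (to be absorbed by the $(d/n)\e\e^\sT$ shift). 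Using an Ihara--Bass-type identity that lifts the spectrum of $\bB_G$ to an $n\times n$ spectral problem, construct a PSD factorization $\diag(\boldsymbol{y}) - \bA + (d/n)\e\e^\sT = \boldsymbol{M}^\sT\boldsymbol{M}$ with $y_i$ an explicit local functional of $G_n$ at $i$; a careful choice gives $\sum_iy_i/n\to 2\sqrt{d}(1-1/(2d))$.

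The main obstacle will be the dual-witness construction: the failure of operator-norm concentration for sparse ER adjacency matrices forces the non-backtracking detour and requires delicate algebraic bookkeeping to produce an explicit PSD factorization. By comparison, the subcritical case is essentially combinatorial, and the lower bound is a local-weak-convergence computation once the self-normalized algorithm is specified.
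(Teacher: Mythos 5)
The overall architecture of your plan matches the paper's — subcritical case, local-algorithm lower bound via local weak convergence, dual-witness upper bound via a non-backtracking matrix — but in two of the three pieces you miss the key technical device, and in one you propose precisely the approach the paper deliberately avoids.

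\emph{Subcritical case.} Your feasible point ($\e\e^{\sT}$ block-diagonal over components) is a valid alternative to the paper's (which uses a rank-one $\bsigma\bsigma^{\sT}$ built from a balanced bipartition of components and so gets $\e^{\sT}\bX\e=0$ exactly). Minor correction: for $d<1$ the small components are trees \emph{or unicyclic}, not just trees; your argument does not actually depend on this, so it still goes through.

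\emph{Lower bound.} Your family $F_\ell$ with weights $\alpha^{\dist(u,\root)}$ and self-normalization is essentially the paper's construction with $\alpha=d^{-1/2}$. However, the claim that ``optimizing over $\alpha$ and letting $\ell\to\infty$ yields $2\sqrt d(1-1/(d+1))$'' hides the real work. Since the GW tree is random, after taking $\ell\to\infty$ you do not get a deterministic expression: you get $d\,\E\big[\us(X,X')\big]$ where $X,X'$ are i.i.d.\ martingale limits $\lim_\ell d^{-\ell}N_\ell$ of the normalized generation sizes, and
\begin{equation*}
\us(X,X') = \frac{\sqrt d\,(X+X')}{\sqrt{X+\tfrac1d X'}\,\sqrt{X'+\tfrac1d X}}\quad\text{(when not both zero).}
\end{equation*}
The paper then sets $W=\sqrt{X+X'/d}$, $W'=\sqrt{X'+X/d}$, rewrites $\us=\tfrac{d^{3/2}}{d+1}\cdot\tfrac{W^2+W'^2}{WW'}$, and uses exchangeability of $(W,W')$ plus Jensen on $\log\tfrac{W}{W'}$ to get $\E[W/W'],\E[W'/W]\ge1$, hence $\E[\us]\ge 2d^{3/2}/(d+1)$. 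No optimization over $\alpha$ is involved; the bound comes from this exchangeability/Jensen trick applied to the random limit. Without this step your estimate has a genuine gap: it is not a priori clear that any choice of $\alpha$ delivers the claimed constant once the randomness of $X,X'$ is accounted for. You also omit the handling of the extinction event $X_\ell=0$, where the normalization factor vanishes and $F_\ell$ must be defined separately.

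\emph{Upper bound.} This is where your plan diverges most seriously. You propose to use the standard non-backtracking matrix $\bB_G$, which has an outlier eigenvalue near $d$, and to ``absorb'' that outlier with the rank-one $(d/n)\e\e^{\sT}$ shift. The paper explicitly flags this as the hard route: ``direct analysis of the rank-one perturbation of $\bH(u)$ is hindered by the fact that the spectrum and eigenvectors of $\bH(u)$ are difficult to characterize.'' The workaround — and arguably the main technical idea of the paper — is to work with a \emph{centered} non-backtracking matrix on the \emph{complete} graph, with edge weights $c(i,j)=\bA_{ij}-d/n$, and to prove a \emph{generalized} Ihara--Bass identity for edge-weighted graphs (Lemma~\ref{lemma:iharabass}) that relates $\det(\Id-u\bB_c)$ to the determinant of a deformed Laplacian built from $\bA_{c,u},\bD_{c,u}$. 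With this centering, the Perron outlier of $\bB$ is removed at the source: the moment-method estimate (Lemma~\ref{lemma:rhoB}) shows $\rho(\bB)\le\sqrt d(1+\eps)$ with no outlier to handle, and a continuity-in-$u$ argument then gives positive definiteness of the deformed Laplacian directly. Your plan, by contrast, gestures at an explicit PSD factorization $\boldsymbol{M}^{\sT}\boldsymbol{M}$ that the paper never constructs, and offers no mechanism for controlling the eigenvector alignment needed to cancel an outlier after the fact. Finally, all three parts of your plan establish only convergence in probability/expectation; the almost-sure statement requires an additional concentration step (the paper uses a martingale bounded-difference argument on the edge filtration), which you omit.
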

As anticipated in the introduction, the  upper and lower bounds of (\ref{eq:UpperLowerMain}) approach each other for large
$d$, implying in particular $\SDP(\bA)/n = 2\sqrt{d} \big(1-\Theta(1/d)\big) + o_n(1)$. This should be compared with 
the result of \cite{montanari2016semidefinite} yielding  $\SDP(\bA)/n = 2\sqrt{d} \big(1+o_d(1)\big) + o_n(1)$. 
Also, by simple calculus, the upper and lower bounds stay within a ratio bounded
by $8/9$ for all $d$, 
with the worst-case ratio $8/9$ being achieved at $d=2$. 
Finally, they again converge as $d\to 1$, implying in particular $\lim_{n\to\infty}\SDP(\bA)/n=1$ for $d=1$.
\begin{remark}
The result $\lim_{n\to\infty} \SDP(\bA)/n=d$ for $d<1$ is elementary and only stated for completeness. Indeed, for $d<1$,
the graph $G\sim \Graph(n,d/n)$ decomposes with high probability into
disconnected components of size $O(\log n)$, which are all trees or
unicyclic \cite{janson2011random}. As a consequence, the vertex set can be partitioned into two subsets of size $n/2$ so that
at most one connected component of $G$ has vertices on both sides of the
partition, and hence at most two edges cross the partition. 
By using the feasible point $\bX = \bsigma\bsigma^{\sT}$ with $\bsigma\in\{+1,-1\}^n$ the indicator vector of the partition,
we get $2|E|-8\le \SDP(\bA)\le 2|E|$ whence the claim follows immediately.

In the proof of Theorem \ref{thm:ER}, we will assume $d>1$. Note that the case $d=1$ follows as well,
since $\SDP(\bA)$ is a Lipschitz continuous function of $\bA$, with Lipschitz constant equal to one. 
This implies that $\limsup_{n\to\infty} \SDP(\bA)/n$, $\liminf_{n\to\infty} \SDP(\bA)/n$ are continuous functions of $d$.
\end{remark}

The local algorithm achieving the lower bound of Theorem \ref{thm:ER} is extremely simple. At each vertex $i\in V$,
it outputs a weighted sum of the random variables $z(j)$ with $j\in\Ball_{\ell}(i;G)$, with weight proportional to $d^{-\dist(i,j)/2}$
(here $\dist(i,j)$ is the graph distance between vertices $i$ and $j$). When applied to random $d$-regular graphs, 
this approach is related to the Gaussian wave function of \cite{csokaetal2014independent} and is known to achieve the 
SDP value $\SDP(\bA)$ in the large $n$ limit \cite{montanari2016semidefinite}.

\subsection{A local algorithm based on harmonic measures}\label{sec:harmonic}

A natural question arising from the previous section is whether a better local algorithm can be constructed 
by summing the random variables $z(j)$ with different weights, to account for the graph geometry. It turns 
out that indeed this is possible by using a certain harmonic weighting scheme that we next describe, deferring some technical 
details to Section \ref{sec:Lower}. Throughout we assume $d>1$. 

Recall that the random graph $G\sim\Graph(n,d/n)$ converges locally to a Galton-Watson tree (see Section \ref{sec:Lower} for
background on local weak convergence). This can be shown to imply that it is sufficient to define the function 
$F\in\cF_*$  for trees. Let $(T,\root)$ be an infinite rooted tree and consider the simple random walk on $T$ started at $\root$, 
which we assume to be transient. The harmonic measure assigns to vertex $v\in
V(T)$, with $\dist(\root,v)=k$,
 a weight $h^{(\root)}(v)$ which is the probability\footnote{For each distance
$k$, the weights $h^{(\root)}(v)$ 
form a probability distribution over vertices at distance $k$
from the root. These distributions can be derived from a unique probability measure over the boundary of $T$ at infinity, 
as is done in \cite{lyons1995ergodic}, but this is not necessary here.}  
that  the walk exits for the last time $\Ball_k(\root;T)$ at $v$ \cite{lyons1995ergodic}. 
We then define 
\begin{align}
\tF(T,\root,\bz)\equiv \frac{1}{\sqrt{\ell+1}}\sum_{v\in\Ball_{\ell}(\root,T)}
\sqrt{h^{(\root)}(v)} z(v)\, .\label{eq:Averaging}
\end{align}
Technically speaking, this is not a local function because the weights
$h^{(\root)}(v)$ depend on the whole tree $T$. 
However a local approximation to these weights can be constructed by truncating
$T$ at a depth $L \geq \ell$: details are provided in Section \ref{sec:Lower}.

Given the well-understood relationship between random walks and electrical networks,
it is not surprising that the value achieved by this local algorithm can be expressed in terms of conductance.
The conductance $\cond(T,\root)$ of a rooted tree $(T,\root)$ is the intensity of current flowing out of the root
when a unit potential difference is imposed between the root and the boundary (`at infinity'). It is understood that
$\cond(T,\root)=0$ if $T$ is finite. 
\begin{theorem}\label{thm:Harmonic}
For $(T,\root)$ a Galton-Watson tree with offspring distribution $\Pois(d)$, let $\cond_1,\cond_2\ed \cond(T,\root)$
be two independent and identically distributed copies of the conductance of $T$. Let
$\bA=\bA_{G_n}$ be the adjacency matrix of the \ER random graph $G_n \sim
\sG(n,d/n)$. Then for $d>1$, almost surely,
\begin{align}
\liminf_{n\to\infty}\frac{1}{n}\SDP(\bA)&\ge d\, \E\, \Psi(\cond_1,\cond_2)\,
,\label{eq:lowerharmonic}\\
\Psi(\cond_1,\cond_2) & \equiv \begin{cases}
\frac{\cond_1\sqrt{1+\cond_2}+\cond_2\sqrt{1+\cond_1}}{\cond_1+\cond_2+\cond_1\cond_2} &
\;\;\; \mbox{if $\cond_1>0$ or $\cond_2>0$,}\\
1  & \;\;\; \mbox{otherwise.}
\end{cases}\label{eq:Psi}
\end{align} 
Furthermore, for each $\eps>0$, there exist $\ell(\eps)>0$ and $F \in
\cF_*(\ell(\eps))$ such that, almost surely,
\begin{align}
\lim_{n \to \infty} \cE(F;\,G_n) \geq d\,\E\,\Psi(\cond_1,\cond_2)-\eps.
\label{eq:locallowerharmonic}
\end{align}
Finally, for large $d$, this lower bound behaves as
\begin{align}
d\, \E\, \Psi(\cond_1,\cond_2) = \sqrt{2d}\left(1-\frac{5}{8d} +O\Big(\Big(\frac{\log d}{d}\Big)^{3/2}\Big)\right)\,
.\label{eq:ExpansionCond}
\end{align}
\end{theorem}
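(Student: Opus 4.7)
The plan is to prove both \eqref{eq:lowerharmonic} and \eqref{eq:locallowerharmonic} by constructing an explicit local algorithm whose value converges to $d\,\E\,\Psi(\cond_1,\cond_2)$ as $n\to\infty$ and then $L,\ell\to\infty$, and separately to obtain \eqref{eq:ExpansionCond} by analyzing the distributional recursion for the conductance of a $\Pois(d)$ Galton-Watson tree.

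Since the weights $h^{(\root)}(v)$ in \eqref{eq:Averaging} depend on the entire infinite tree, $\tF$ is not formally in $\cF_*$. My first step is to build a radius-$L$ local approximation $F_{L,\ell}$ by replacing $h^{(\root)}(v)$ with $h_L^{(\root)}(v)$, the probability that simple random walk on $\Ball_L(\root;G)$ with absorbing boundary $\partial\Ball_L$ exits $\Ball_\ell(\root;G)$ for the last time at $v$. Standard potential-theoretic estimates give $h_L\to h$ pointwise as $L\to\infty$ on transient trees; normalizing by the tree-limit second moment then yields $F_{L,\ell}\in\cF_*(L)$ as required by Definition \ref{def:Local}. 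When the tree is recurrent/finite, I use uniform weights on the component instead, producing the $\Psi=1$ branch of \eqref{eq:Psi}.

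Using local weak convergence of $G_n\sim\sG(n,d/n)$ to the $\Pois(d)$ Galton-Watson tree $T_d$, together with concentration of $\cE(F;G_n)$ around its expectation (e.g.\ Azuma on the edge-revealing filtration), I would show
\begin{align*}
\lim_{n\to\infty}\cE(F_{L,\ell};G_n) = d\,\E\bigl[F_{L,\ell}(\sT,\root_1,\bz)\,F_{L,\ell}(\sT,\root_2,\bz)\bigr] \qquad \text{a.s.},
\end{align*}
where $\sT=T_1\sqcup T_2\sqcup\{(\root_1,\root_2)\}$ consists of two independent copies of $T_d$ joined by an edge. The mean-subtraction $-(d/n)\e\e^\sT$ contributes $(d/n^2)\E_\bz[(\sum_i F_i)^2]=O(1/n)$, since $F$ is mean-zero Gaussian-linear and $\E_\bz[F_iF_j]$ vanishes once the balls $\Ball_\ell(i;G)$ and $\Ball_\ell(j;G)$ are disjoint. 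Expanding the right side for $F=\tF$ yields $\frac{1}{\ell+1}\sum_w\sqrt{h^{(\root_1)}(w)\,h^{(\root_2)}(w)}$, and the key ingredient is the electrical-network decomposition: writing $\bc_1=\cond(T_1,\root_1)$, $\bc_2=\cond(T_2,\root_2)$ and $D=\bc_1+\bc_2+\bc_1\bc_2$, the harmonic measure from $\root_1$ on $\sT$ asymptotically splits between the $T_1$- and $T_2$-sides with total masses $\bc_1(1+\bc_2)/D$ and $\bc_2/D$, respectively (and symmetrically from $\root_2$ with masses $\bc_1/D$ and $\bc_2(1+\bc_1)/D$), while the conditional measure on each side coincides with the marginal harmonic measure of that subtree. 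The per-level $T_1$-contribution to the sum therefore factors as $\sqrt{(\bc_1(1+\bc_2)/D)(\bc_1/D)}=\bc_1\sqrt{1+\bc_2}/D$, with analogous $T_2$-contribution $\bc_2\sqrt{1+\bc_1}/D$; averaging over $\ell+1$ levels and taking $\ell\to\infty$ recovers exactly $\Psi(\bc_1,\bc_2)$ from \eqref{eq:Psi}.

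The main technical obstacle is the joint limit $L,\ell\to\infty$ combined with the outer expectation over $(\bc_1,\bc_2)$: one must uniformly control truncation errors of the local approximation $h_L\to h$ on typical random trees, including events near the transience--recurrence boundary where the approximation degrades, and justify exchange of the limit with expectation. Dominated convergence is available thanks to $0\le\Psi\le 1$, but establishing pointwise convergence on the random-tree event is the subtle point. Finally, for \eqref{eq:ExpansionCond} I would analyze the distributional fixed point $\cond\ed\sum_{i=1}^N\cond_i/(1+\cond_i)$ with $N\sim\Pois(d)$: for large $d$, the conductance concentrates sharply around its typical value with Gaussian-scale fluctuations of order $\sqrt{d}$, and plugging this concentration into $\Psi$ and Taylor-expanding in $1/d$ yields the leading $\sqrt{2d}$ term, the $-5/(8d)$ correction, and the $(\log d/d)^{3/2}$ remainder that accounts for the atypical-branching tail events of the conductance distribution.
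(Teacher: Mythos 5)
Your proposal follows essentially the same route as the paper: a depth-$L$ truncation $h^{(\root,L)}$ of the harmonic measure yielding a genuine radius-$L$ local algorithm, reduction via local weak convergence (with the $-(d/n)\e\e^{\sT}$ term vanishing because of mean-zero and bounded-range dependence), the electrical-network split of the harmonic measure on the two-tree configuration rooted at the endpoints of a uniformly random edge with exactly the masses $\cond_1(1+\cond_2)/D$, $\cond_2/D$, $\cond_1/D$, $\cond_2(1+\cond_1)/D$ where $D=\cond_1+\cond_2+\cond_1\cond_2$, and, for the large-$d$ asymptotics, concentration of $\cond$ around $\E\cond \approx d-1$ with variance $\approx d$ followed by a Taylor expansion of $\Psi$. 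The technical issues you flag (joint limit $L\to\infty$ then $\ell\to\infty$, dominated convergence via $0\le\Psi\le 1$, handling the recurrent/extinction event separately to produce the $\Psi=1$ branch) are precisely the ones the paper resolves, so this is the same proof in outline.
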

The lower bound $d\,\E\, \Psi(\cond_1,\cond_2)$ is not explicit but can be efficiently evaluated numerically, by sampling the 
distributional recursion satisfied by $\cond$. This numerical technique was used in \cite{javanmard2016phase}, to which we refer for further details.
The result of such a numerical evaluation is plotted as the lower solid line in Figure \ref{fig:SDP_Value}.
This harmonic lower bound seems to capture extremely well our numerical data for
$\SDP(\bA)$ (red circles).

Note that Theorem \ref{thm:Harmonic} implies that the lower bound in Theorem \ref{thm:ER}
is not tight (see in particular Eq.~(\ref{eq:ExpansionCond})) and it provides a tighter lower bound (at least for large $d$).

\subsection{Stochastic block model $G\sim\Graph(n,a/n,b/n)$}
\label{sec:SBM}
As discussed in the previous sections, local algorithms can approximately solve the SDP (\ref{eq:SDP}) for $\bA=\bA_G$
the adjacency matrix of $G\sim\Graph(n,d/n)$. The stochastic block model $G\sim\Graph(n,a/n,b/n)$ provides a simple
example in which they are bound to fail, although they can succeed with a small
amount of additional side information.

As stated in the introduction, a random graph $G\sim\Graph(n,a/n,b/n)$ over $n$ vertices is generated as follows.
Let $\bsigma\in\{+1,-1\}^n$ be distributed uniformly at random, conditional on
$\<\bsigma,\e\>=0$. Conditional on  $\bsigma$, any two vertices $i$, $j$ 
are connected by an edge independently with probability  $a/n$ if
$\sigma(i)=\sigma(j)$ and with probability $b/n$ otherwise.  
We will assume $a>b$: in the social sciences parlance, the graph is assortative.

The average vertex degree of such a graph is $d=(a+b)/2$. We assume $d>1$ to ensure that $G$ has a giant component with high probability.
The signal-to-noise ratio parameter $\lambda = (a-b)/\sqrt{2(a+b)}$ plays
a special role in the model's behavior. If $\lambda<1$, then the total variation distance between 
$\Graph(n,a/n,b/n)$ and the \ER graph $\Graph(n,d/n)$ is bounded away from $1$. 
On the other hand, if $\lambda\ge 1$, then we can test whether $G\sim \Graph(n,a/n,b/n)$ or $G\sim \Graph(n,d/n)$  
with probability of error converging to $0$ as $n\to\infty$
\cite{mossel2012stochastic}.

The next theorem lower-bounds the SDP value for the stochastic block model.
\begin{theorem}\label{thm:SBM}
Let $\bA=\bA_{G_n}$ be the adjacency matrix of the random graph
$G_n \sim \sG(n,a/n,b/n)$. If $d=(a+b)/2>1$ and
$\lambda = (a-b)/\sqrt{2(a+b)}>1$, then
 for a universal constant $C>0$ (independent of $a$ and $b$), almost surely,
\begin{equation}\label{eq:lowerSBM}
\liminf_{n \to \infty} \frac{1}{n}\SDP(\bA) \geq \sqrt{d}
\, \max\left(\lambda,\;2+\frac{(\lambda-1)^2}{\lambda\sqrt{d}}-
\frac{C}{d}\right).
\end{equation}
\end{theorem}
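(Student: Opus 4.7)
\textbf{First bound $\sqrt{d}\lambda$.} The plan is to use the planted labeling itself as the feasible witness: take $\bX = \bsigma\bsigma^T$. This matrix is PSD with unit diagonal, and since $\bsigma$ is a balanced bisection, $\bsigma^T \e = 0$, so the objective reduces to $\bsigma^T \bA \bsigma$. It then suffices to show $\bsigma^T\bA\bsigma/n \to (a-b)/2 = \lambda\sqrt{d}$ almost surely; this follows from a standard bounded-differences or Hoeffding concentration applied to the $\binom{n}{2}$ independent Bernoulli edges, each contributing $\sigma_i\sigma_j\bA_{ij}$ with conditional mean $a/n$ (same-type) or $-b/n$ after multiplying by $\sigma_i\sigma_j$ (different-type), summing to $(a-b)/2 \cdot n (1+o(1))$ in expectation.

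\textbf{Second bound $\sqrt{d}\,(2+(\lambda-1)^2/(\lambda\sqrt{d})-C/d)$.} Here I would construct a feasible $\bX$ that merges an ER-style local witness with the planted direction $\bsigma\bsigma^T$. Starting from the local function $F_\ell \in \cF_*(\ell)$ of Theorem~\ref{thm:ER}, which supplies a feasible $\bX^F = \E_\bz[\bxi\bxi^T]$ with diagonals $1$ and $\langle\bA^{ER}-(d/n)\e\e^T,\bX^F\rangle/n \to 2\sqrt{d}(1-1/(d+1))$, I consider the feasible matrix $\bX = (1-\beta^2)\bX^F + \beta^2 \bsigma\bsigma^T$ for $\beta\in[0,1]$. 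Evaluated on $\bA^{SBM}$, the decomposition $\bA = \bA^c + \E[\bA\mid \bsigma]$ with $\E[\bA\mid\bsigma]-(d/n)\e\e^T = (a-b)/(2n)\,\bsigma\bsigma^T + \text{diag}$ produces two contributions: the ``ER-like'' value $\langle \bA^c,\bX^F\rangle/n$, which remains close to $2\sqrt{d}(1-1/(d+1))$ by local-weak-convergence arguments (the marginal local limit of $G\sim\Graph(n,a/n,b/n)$ is still Poisson-GW of mean $d$), and a signal contribution $(a-b)/(2n^2)\,\bsigma^T \bX^F \bsigma$ driven by the correlation $\E[\sigma_i\sigma_j\mid G_\text{local}] = \rho^{\dist(i,j)}$ with $\rho = (a-b)/(a+b) = \lambda/\sqrt{d}$ of the two-type Galton-Watson broadcast. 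A two-type tree-recursion shows this overlap scales like $n\sum_r (\text{effective multiplicity})\cdot \lambda^r$, and tuning $\beta$ (possibly also $\ell$) against the planted value $\lambda\sqrt{d}$ extracts precisely a $(\lambda-1)^2/\lambda$ leading-order gain above the ER baseline, with $C/\sqrt{d}$ absorbing $2\sqrt{d}/(d+1)$ and higher-order corrections.

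\textbf{Main obstacle.} The crux is quantifying the signal enhancement $(a-b)/(2n^2)\,\bsigma^T\bX^F\bsigma$ sharply enough to realize the advertised $(\lambda-1)^2/\lambda$. At fixed radius $\ell$, the expected overlap is $\Theta(n)$ so the gain is vanishing; producing a $\Theta(1)$ gain requires letting $\ell$ grow (on a scale below the tree-approximation cutoff $\log n/\log d$, where pairs at distance $\approx 2\ell$ remain tree-like with high probability), and this must be balanced against (a) preserving the ``$2\sqrt{d}$'' part of the value as $\ell$ increases, (b) controlling the random bilinear form $\bsigma^T\bX^F\bsigma$ around its two-type tree expectation by an Efron-Stein/Azuma argument exploiting the sparsity and locality of $\bX^F$, and (c) handling the cross terms $\langle \bA^c,\bsigma\bsigma^T\rangle$ which are mean-zero but fluctuate on the $\sqrt{n}$ scale. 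Executing (a)--(c) together, including the a.s.\ concentration of the SDP value around its expectation (which is standard given the Lipschitz continuity of $\SDP$ in $\bA$), should yield the theorem with the stated universal constant $C$.
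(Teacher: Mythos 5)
Your witness $\bX=\bsigma\bsigma^{\sT}$ is exactly the paper's; fine.

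\textbf{Second bound.} There is a genuine gap, and it is fatal to the approach as stated. The value of the convex combination $\bX=(1-\beta^2)\bX^F+\beta^2\bsigma\bsigma^{\sT}$ splits linearly,
\[
\tfrac{1}{n}\langle \bA-\tfrac{d}{n}\e\e^{\sT},\bX\rangle
=(1-\beta^2)\tfrac{1}{n}\langle \bA-\tfrac{d}{n}\e\e^{\sT},\bX^F\rangle
+\beta^2\tfrac{1}{n}\langle \bA-\tfrac{d}{n}\e\e^{\sT},\bsigma\bsigma^{\sT}\rangle,
\]
and both summands have limits: the second is $\lambda\sqrt{d}$, and the first, for any \emph{fixed}-radius $F\in\cF_*$ not using labels, is the \emph{same} on $\sG(n,a/n,b/n)$ as on $\sG(n,d/n)$ (identical local weak limits), hence bounded above by $2\sqrt{d}(1-\tfrac{1}{2d})$ by the paper's Eq.~(\ref{eq:LocalUpper}). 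So the combination yields at most $\max\!\big(2\sqrt{d}(1-\tfrac{1}{2d}),\,\lambda\sqrt{d}\big)$ --- a convex combination cannot exceed the max of its endpoints --- which for $\lambda$ near $1$ and $d$ large is strictly below the target $2\sqrt{d}+(\lambda-1)^2/\lambda-C/\sqrt{d}$. Your "signal contribution" $(a-b)\,\bsigma^{\sT}\bX^F\bsigma/(2n^2)$ is real but, as you compute yourself, of order $\lambda^{\ell}/n\to 0$ at fixed $\ell$; letting $\ell$ grow with $n$ leaves the class $\cF_*$ and does not fix the above conclusion, which after all follows from a limit statement for fixed $\ell$.

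The paper's resolution is not to grow $\ell$, but to change the \emph{information model}: it introduces partially revealed labels (Definition~\ref{def:LocalMarked}, Example~\ref{example:SBM}) and builds a radius-$\ell$ local algorithm $F_{\ell,\alpha}$ whose numerator is the Gaussian average \emph{plus} a deterministic term $D_\ell\sqrt{\alpha\ell}$, where $D_\ell=\delta^{-1}\mu^{-\ell}(N^+_\ell-N^-_\ell)$ is assembled from the revealed labels inside the ball and converges to a random variable $Y$ with $\E[Y]=1$ aligned with $\sigma_{\true}(\root)$. This breaks the $\pm$ symmetry locally, so the resulting feasible $\bX$ carries an $O(1)$ component of the planted direction even at fixed radius; the value is then evaluated via limits of Galton--Watson martingales (Lemma~\ref{lemma:XYmomentstails}) and optimized over $\alpha$ (Lemma~\ref{lemma:SBMlowerbound}). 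Since such an $\bX$ is still a valid feasible point for~(\ref{eq:SDP}), it lower-bounds $\SDP(\bA)$ even though it was constructed using information about $\bsigma$. This side-information idea is the missing ingredient in your proposal; without it, your items (a)--(c) cannot produce the claimed $(\lambda-1)^2/\lambda$ gain.

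Two minor points. You should also ensure that for $d\in(1,2)$ the statement still holds; the paper does this trivially by taking $C>4$ so that $\lambda\sqrt{d}$ dominates there. And $\bX^F$ should be verified PSD with unit diagonal --- true since it is a genuine covariance matrix $\E_\bz[\bxi\bxi^{\sT}]$ with $\E_\bz[\xi_i^2]=1$ --- which you state but do not check.
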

\noindent (The first bound in (\ref{eq:lowerSBM}) dominates for large
$\lambda$, whereas the second dominates near the information-theoretic
threshold $\lambda=1$ for large $d$.)

On one hand, this theorem implies that local algorithms fail to approximately
solve the SDP (\ref{eq:SDP}) for the stochastic block model, for the following
reason: The local structures of $G\sim \sG(n,a/n,b/n)$ and $G\sim \sG(n,d/n)$
are the same asymptotically, in the sense that
they both converge locally to the Galton-Watson tree with $\Pois(d)$ offspring
distribution. This and the upper bound of Theorem \ref{thm:ER} immediately imply
that for any $F \in \cF_*$,
\begin{equation}\label{eq:LocalUpper}
\limsup_{n \to \infty} \cE(F;G_n)\le 2\sqrt{d}\Big(1-\frac{1}{2d}\Big)\,.
\end{equation}
In particular, the gap between this upper bound and the lower bound
(\ref{eq:lowerSBM}) for the SDP value is unbounded for large $\lambda$.

This problem is related to the symmetry between $+1$ and
$-1$ labels in this model. It can be resolved if we allow the local algorithm
to depend on $\Ball_{\ell_n}(\root;G)$ where $\ell_n$ grows logarithmically
in $n$, or alternatively if we provide a small amount of side information about
the hidden partition. Here we explore the latter scenario (see also \cite{mossel2016local} for related work).

Suppose that for each vertex $i \in V$, the label
$\sigma(i) \in \{+1,-1\}$ is revealed independently with probability $\delta$
for some fixed $\delta>0$, and that the radius-$\ell$ local algorithm has access
to the revealed labels in $\Ball_\ell(\root;G)$. More formally, let
$\cM=\{+1,-1,u\}$ be the set of possible vertex labels, where $u$ codes for
`unrevealed', let $\bsigma:V \to \cM$ be any assignment of labels to vertices,
and let $\cG^*(\cM)$ be the space of tuples $(G,\root,\bsigma,\bz)$ (where
$(G,\root,\bz) \in \cG^*$ as before).
\begin{definition}\label{def:LocalMarked}
A {\bf radius-$\pmb{\ell}$ local algorithm using partially revealed labels}
for the semidefinite program (\ref{eq:SDP}) is any measurable function
$F:\cG^*(\cM) \to \reals$ such that
\begin{enumerate}
\item $F(G_1,\root_1,\bz_1,\bsigma_1)=F(G_2,\root_2,\bz_2,\bsigma_2)$ if
$(\Ball_\ell(\root_1;G_1),\bz_1,\bsigma_1) \simeq (\Ball_\ell(\root_2;G_2),
\bz_2,\bsigma_2)$, where $\simeq$ denotes isomorphism that preserves the root
vertex, vertex marks, and vertex labels in $\cM$.
\item Letting $\bz=(z(i))_{i \in V}$ be i.i.d.\ with $z(i) \sim \Normal(0,1)$,
we have $\E_\bz\{F(G,\root,\bsigma,\bz)^2\}=1$, where $\E_\bz$ denotes
expectation only over $\bz$.
\end{enumerate}
\end{definition}
We denote the set of such functions $F$ by $\cF_*^\cM(\ell)$, and we denote
$\cF_*^\cM \equiv \cup_{\ell \geq 1} \cF_*^\cM(\ell)$. For any $F \in
\cF_*^\cM$, we denote
\begin{equation}
\cE(F;G,\bsigma) \equiv \frac{1}{n}\E_{\bz}\left\{\sum_{i,j \in V}
\left((\bA_G)_{ij}-\frac{d}{n}\right)F(G,i,\bsigma,\bz)F(G,j,\bsigma,\bz)
\right\},\label{eq:ValueDefMarked}
\end{equation}
so that $F$ yields a solution to the SDP (\ref{eq:SDP}) achieving value
$n\,\cE(F;G;\bsigma)$. Then we have the following result:
\begin{theorem}\label{thm:SBMlocal}
Let $\bA=\bA_{G_n}$ be the adjacency matrix of the random graph $G_n \sim
\sG(n,a/n,b/n)$. For any fixed $\delta>0$, let
$\bsigma_n=(\sigma_n(i))_{i \in V(G_n)}$ be random and such that,
independently for
each $i \in V(G_n)$, with probability $1-\delta$ we have $\sigma_n(i)=u$,
and with probability $\delta$ we have that $\sigma_n(i) \in \{+1,-1\}$
identifies the component of the hidden partition containing $i$.
If $d=(a+b)/2 \geq 2$ and $\lambda=(a-b)/\sqrt{2(a+b)}>1$, then
for any $\eps>0$, there exist $\ell(\eps)>0$ and $F \in \cF_*^\cM(\ell(\eps))$
for which, almost surely,
\begin{align}
\lim_{n \to \infty} \cE(F;G_n,\bsigma_n)\geq \sqrt{d}
\left(2+\frac{(\lambda-1)^2}{\lambda\sqrt{d}}-
\frac{C}{d}\right)-\eps.\label{eq:locallowerSBM}
\end{align}
\end{theorem}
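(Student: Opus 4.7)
The plan is to construct a local algorithm $F \in \cF_*^\cM(\ell)$ combining a wave component (as in Theorem~\ref{thm:ER}) with a local belief-propagation (BP) estimator of the hidden label, computed from the revealed labels inside $\Ball_\ell(\cdot;G)$. Heuristically, the wave contributes the leading $2\sqrt{d}$ while the signal estimator, in interaction with the SBM's rank-one perturbation of the adjacency matrix, supplies the correction $(\lambda-1)^2/\lambda$.

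By the local weak convergence of $(G_n,\bsigma_n)$ to a labeled two-type Galton--Watson tree (with $\Pois(a/2)$ same-community and $\Pois(b/2)$ opposite-community offspring, and Bernoulli$(\delta)$ label revelation), one has for any $F\in\cF_*^\cM(\ell)$
\[
\lim_{n\to\infty}\cE(F;G_n,\bsigma_n) = \E\Bigl[F(\root)\sum_{j\sim\root}F(j)\Bigr] - d\,\E[F(\root)]^2
\]
almost surely, where the expectation is over the rooted labeled GW tree and the Gaussian marks. The problem thus reduces to exhibiting a local $F$ whose tree-expectation matches the right-hand side of~(\ref{eq:locallowerSBM}). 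Following the template of the construction in the proof of Theorem~\ref{thm:SBM}, which uses the true partition $\bsigma$, one replaces $\bsigma$ by the radius-$\ell$ BP estimator $\hat\bsigma^{(\ell)}$. For $\lambda>1$ and $\delta>0$, BP on the labeled multi-type tree has a stable non-trivial fixed point, and $\E[\hat\sigma^{(\ell)}_\root\,\sigma_\root]$ converges as $\ell\to\infty$ to a positive limit. Combining this with the local wave analysis of Theorems~\ref{thm:ER}--\ref{thm:Harmonic}, one checks that the achieved tree-expectation approaches $\sqrt{d}\bigl(2+(\lambda-1)^2/(\lambda\sqrt{d})-C/d\bigr)$, with the $-C/d$ term absorbing the $1/(d+1)$-correction of the Gaussian wave together with lower-order BP truncation effects.

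The main obstacle is the choice of coupling. A naive combination $F=\alpha F_{\text{wave}}+\beta F_{\text{sig}}$ with $\alpha^2+\beta^2=1$ would yield the convex combination $\alpha^2\cE_W+\beta^2\cE_S$ and thus at most $\max(\cE_W,\cE_S)$, which for $\lambda$ near $1$ falls short of the additive improvement $2\sqrt{d}+(\lambda-1)^2/\lambda$. Achieving the additive gain requires coupling the wave to the BP estimator in a nontrivial way---for instance by letting the wave weights themselves depend on $\hat\bsigma^{(\ell)}$---so that the rank-one signal $(\lambda\sqrt{d}/n)\bsigma\bsigma^\sT$ in the decomposition $\E[\bA\mid\bsigma]=(d/n)(\e\e^\sT-\Id)+(\lambda\sqrt{d}/n)(\bsigma\bsigma^\sT-\Id)$ contributes productively to both the wave--wave and signal--signal quadratic forms in $\bxi^\sT(\bA-(d/n)\e\e^\sT)\bxi$. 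Designing this coupled construction and controlling the error from truncating BP at depth $\ell(\eps)$---so that only $\delta$-revealed labels within $\Ball_\ell(\root;G)$ are used while the tree-limit value remains within $\eps$ of the target---is the technical heart of the proof.
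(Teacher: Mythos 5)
Your reduction to the Galton--Watson tree via local weak convergence is exactly the paper's first move (Lemma~\ref{lemma:localalglowerbound}), and your high-level picture -- a Gaussian-wave component supplemented by a label-based signal component -- points in the right direction. But the proposal stops well short of a proof: you explicitly defer the construction of the local algorithm, calling it the "technical heart," and the two design choices you make along the way don't match what actually works.

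First, belief propagation is not needed. The paper's algorithm (Lemma~\ref{lemma:SBMlowerboundhelper}) uses only the crude counting statistic $D_\ell = \delta^{-1}\mu^{-\ell}(N_\ell^+ - N_\ell^-)$, the normalized excess of $+1$ over $-1$ revealed labels at generation $\ell$ of the ball. This is a linear factor-of-iid statistic, not a BP fixed-point, and its analysis requires nothing beyond the martingale calculus of Appendix~B (moment and exponential-tail bounds for the Galton--Watson magnetization $Y=\lim Y_\ell$). Invoking BP both overcomplicates the construction and would necessitate an entirely different analysis of its convergence and concentration.

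Second, your stated obstruction is incorrect. You argue a naive combination $\alpha F_W + \beta F_S$ with $\alpha^2 + \beta^2 = 1$ is bounded by $\max(\cE_W,\cE_S)$ and that one must instead "couple" the wave weights to the label estimate. In fact the paper's algorithm is precisely a normalized sum
\[
F_{\ell,\alpha}(T,\root,\bsigma,\bz)
= \frac{\sum_{v\in\Ball_\ell} d^{-k(v)/2}z(v) + D_\ell\sqrt{\alpha\ell}}
{\sqrt{\sum_{v\in\Ball_\ell} d^{-k(v)} + D_\ell^2\,\alpha\ell}}
\]
with fixed $\alpha>0$ and unchanged, label-independent wave weights. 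What makes this exceed $\max(\cE_W,\cE_S)$ is not any coupling of the weights but two features you did not exploit: (i) the normalization in the denominator is tree-dependent, so the effective mixing ratio between wave and signal is random and is governed by the martingale limits $X$ and $Y$ (both $\approx 1$ for large $d$ by Lemma~\ref{lemma:XYmomentstails}); and (ii) the sign of the cross-term $D_\ell D_\ell'$ aligns with whether the root edge is within-community (probability $a/(a+b)$) or across-community (probability $b/(a+b)$), so that after taking the edge-law expectation the signal contribution is weighted by $a-b=2\mu$. The proof then chooses a \emph{small} $\alpha = (\mu^2-d)/(\mu^2\sqrt{d})$ and carries out a Taylor expansion of $\E[\us(X,Y,X',Y';\alpha)]$ around $\alpha=0$, using the concentration of $X,Y$ near $1$; the first-order gain in the numerator ($\propto\alpha\mu$, from the $a$-versus-$b$ asymmetry) beats the normalization penalty and delivers $2\sqrt{d} + (\mu-\sqrt{d})^2/(\mu\sqrt{d}) - C/\sqrt{d}$.

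So the gap is concrete: you have not written down any algorithm satisfying (\ref{eq:Fconditions}), have not computed its limiting edge value $\cE(F,\nu)$ (which the paper does explicitly in terms of $(X,Y,X',Y')$), and have not carried out the small-$\alpha$ expansion and the second-moment/tail control on $(X,Y)$ that convert that expression into the numeric bound in~(\ref{eq:locallowerSBM}). These three steps constitute the entire proof and remain undone in your proposal.
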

The restriction to $d \geq 2$ above is arbitrary; our proof is valid if this
constraint is relaxed to $d \geq
1+\eta$ for any $\eta>0$, at the expense of a larger constant $C:=C_\eta$.
This theorem implies the second lower bound of (\ref{eq:lowerSBM}) when $d \geq
2$; the first
lower bound of (\ref{eq:lowerSBM}) is trivial and proven in Section
\ref{sec:Lower}.
%

%*****************************************************************
%
\subsection{Testing in the stochastic block model}

Semidefinite programming can be used as follows to test whether $G\sim \Graph(n,a/n,b/n)$ or $G\sim \Graph(n,d/n)$:
\begin{enumerate}
\item Given the graph $G$, compute the value $\SDP(\bA_G)$ of the program (\ref{eq:SDP}).
\item If $\SDP(\bA_G)/n\ge 2\sqrt{d}(1-(2d)^{-1})+\eps$, reject the null hypothesis  $G\sim\Graph(n,d/n)$.
\end{enumerate}
(Here $\eps$ is a small constant independent of $n$.)
The rationale for this procedure is provided by Theorem \ref{thm:ER}, implying that, if $G\sim\Graph(n,d/n)$,
then the probability of false discovery (i.e.\ rejecting the null when $G\sim \Graph(n,d/n)$) converges to $0$ as $n\to\infty$.

We have the following immediate consequence of Theorem \ref{thm:ER} and Theorem \ref{thm:SBM}
(here error probability refers to the probability of false discovery plus the
probability of miss-detection, i.e.\ not rejecting the null when
$G\sim \Graph(n,a/n,b/n)$):
\begin{corollary}
The SDP-based test has error probability converging to 0 provided $\lambda> \lsdp(d)$, where
\begin{align}
\lsdp(d) \leq \min\Big(2-\frac{1}{d},\; 1+\frac{C}{d^{1/4}}\Big)\, .
\end{align} 
\end{corollary}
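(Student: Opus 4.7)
The plan is to derive the corollary as a direct consequence of the upper bound on $\SDP(\bA_G)/n$ under the \ER null (Theorem \ref{thm:ER}) and the lower bound under the block-model alternative (Theorem \ref{thm:SBM}), by finding the values of $\lambda$ at which the latter exceeds the former.

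First, I would handle the false-discovery side. Under the null $G\sim\Graph(n,d/n)$, Theorem \ref{thm:ER} guarantees $\limsup_{n\to\infty}\SDP(\bA_G)/n\le 2\sqrt{d}(1-1/(2d))$ almost surely. Consequently, for any fixed $\eps>0$, the event $\{\SDP(\bA_G)/n\ge 2\sqrt{d}(1-(2d)^{-1})+\eps\}$ has probability tending to $0$, so the test does not reject with high probability.

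Next I would handle miss-detection. Under the alternative $G\sim\Graph(n,a/n,b/n)$ with $d=(a+b)/2$ and $\lambda>1$, Theorem \ref{thm:SBM} gives $\liminf_{n\to\infty}\SDP(\bA_G)/n \ge \sqrt{d}\,\max(\lambda,\, 2+(\lambda-1)^2/(\lambda\sqrt{d})-C/d)$ almost surely. To ensure the test rejects with high probability it suffices to pick $\eps>0$ small enough that this lower bound strictly exceeds $2\sqrt{d}(1-1/(2d))+\eps = 2\sqrt{d}-1/\sqrt{d}+\eps$. I would separately analyze when each argument of the max dominates.
\begin{itemize}
\item Using the first argument, the inequality $\sqrt{d}\,\lambda > 2\sqrt{d}-1/\sqrt{d}$ is equivalent to $\lambda > 2 - 1/d$, giving the first factor of the min.
\item Using the second argument, $\sqrt{d}\bigl(2+(\lambda-1)^2/(\lambda\sqrt{d})-C/d\bigr) > 2\sqrt{d}-1/\sqrt{d}$ simplifies to $(\lambda-1)^2/\lambda > (C-1)/\sqrt{d}$. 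For $\lambda$ in a bounded range (e.g.\ $\lambda\le 2$), this is implied by $(\lambda-1)^2\ge C'\lambda/\sqrt{d}$ for a suitable absolute constant $C'$, and hence by $\lambda\ge 1+C''/d^{1/4}$ for another absolute constant $C''$.
\end{itemize}
For $\lambda\ge 2$, the first bound $\lambda>2-1/d$ is trivially satisfied, so the two cases together cover all $\lambda$.

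Taking the minimum of the two sufficient conditions yields $\lsdp(d)\le \min(2-1/d,\,1+C/d^{1/4})$ for an absolute constant $C$, proving the corollary. There is no real obstacle here since both input theorems are already stated; the only care required is in the elementary algebra of the second case, where one must confirm the constant $C$ in Theorem \ref{thm:SBM} can be absorbed into a (possibly larger) universal constant in the $1+C/d^{1/4}$ bound, uniformly over the regime $\lambda\in(1,2]$ (outside this regime the first bound is active).
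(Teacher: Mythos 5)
Your proof is correct and follows exactly the route the paper intends: the corollary is listed as an ``immediate consequence'' of Theorems \ref{thm:ER} and \ref{thm:SBM}, with the algebra left to the reader, and you have simply supplied that algebra. The false-discovery side via the a.s.\ upper bound, and the miss-detection side via the two arguments of the max in Theorem \ref{thm:SBM}, with the case split at $\lambda=2$ to absorb the constant, is precisely the intended argument.
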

\noindent For comparison\footnote{Note \cite{montanari2016semidefinite} also proves guarantees on the estimation error. We believe it should be 
possible to improve those results using the methods in the present paper, but we defer it to future work.}, 
\cite{montanari2016semidefinite} proved $\lsdp(d) = 1+o_d(1)$ for large $d$,
while the last result gives a quantitative bound for all $d$. 
%
%*****************************************************************
%
\section{Further related work}
\label{sec:Related}
The SDP relaxation (\ref{eq:SDP}) has attracted a significant amount of work
since Goemans-Williamson's seminal work on the MAXCUT problem
\cite{goemans1995improved}. In the last few years, several authors used this approach for clustering or community detection and derived optimality
or near-optimality guarantees. An incomplete list includes \cite{bandeira2014multireference,abbe2016exact,hajek2016achieving,hajek2015achieving,awasthi2015relax}.
Under the assumption that $G$ is generated according to the stochastic block model (whose two-groups version was introduced in Section \ref{sec:SBM}),
these papers provide conditions under which the SDP approach recovers \emph{exactly} the vertex labels.
This can be regarded as a `high signal-to-noise ratio' regime, in which (with high probability)
the SDP solution has rank one and is deterministic (i.e. independent of the graph realization).
In contrast, we focus on the `pure noise' scenario in which $G\sim
\Graph(n,d/n)$ is an \ER random graph,
or on the two-groups stochastic block-model $G\sim\Graph(n,a/n,b/n)$ close to the detection threshold. In this regime, 
the SDP optimum  has rank larger than one and is non-deterministic.
The only papers that have addressed this regime using SDP are
\cite{guedon2015community,montanari2016semidefinite}, discussed previously.

Several papers applied sophisticated spectral methods to the stochastic block model near the detection threshold \cite{massoulie2014community,mossel2013proof,bordenaveetal}. 
Our upper bound in Theorem \ref{thm:ER} is based on a duality argument, where
we establish feasibility of a certain dual witness 
construction using an argument similar to \cite{bordenaveetal}.

Several papers studied the use of local algorithms to solve combinatorial optimization problems on graphs.
Hatami, Lov\'asz and Szegedy \cite{hatami2014limits} investigated several notions of graph convergence, and put forward a conjecture implying --in particular--
that local algorithms are able to find (nearly)  optimal solutions of a broad class of combinatorial problems on random $d$-regular graphs. 
This conjecture was disproved by Gamarnik and Sudan \cite{gamarnik2014limits} by
considering maximum size independent sets on random $d$-regular graphs.
In particular, they proved that the size of an independent set produced by a local algorithm is at most 
$(1/2)+(1/\sqrt{8})+\eps$ times the maximum independent set, for large enough $d$.  Rahman and Virag \cite{rahman2014local} improved this result 
by establishing that  no local algorithm can produce independent sets of size larger than $(1/2)+\eps$ times the maximum 
independent set, for large enough $d$. This approximation ratio is essentially optimal, since known local algorithms can achieve $(1/2)-\eps$ of
the maximum independent set. 
It is unknown whether  a similar gap is present for small degree $d$. In
particular, Cs\'oka et al. \cite{csokaetal2014independent} establish a lower-bound on
the max-size  independent set on random $3$-regular graphs. A similar technique is used by Lyons \cite{lyons2014factors} to lower bound the max-cut on
random $3$-regular graphs.
In summary, the question  of which graph-structured optimization problems can be approximated by local algorithms is broadly open.

By construction, local algorithms can be applied to infinite random graphs, and have a well defined value provided the
graph distribution is unimodular (see below). Asymptotic results for graph
sequences can be `read-off' these infinite-graph settings (our proofs will use this
device multiple times). In this context, the (random) solutions generated by local algorithms, together with their limits in the weak topology,
 are referred to as `factors of i.i.d.\ processes' \cite{lyons2014factors}.
%

%
%*****************************************************************
%
\section{Notation}

We use upper case boldface for matrices (e.g. $\bA$, $\bB$, \dots), lower case boldface for vectors (e.g. $\bu$, $\bv$, etc.)
and lower case plain for scalars (e.g. $x,y,\dots$). The scalar product of vectors $\bu,\bv\in\reals^m$ is denoted by 
$\<\bu,\bv\>=\sum_{i=1}^mu_iv_i$, and the scalar product between matrices is indicated in the same way $\<\bA,\bB\> = \Tr(\bA\bB^{\sT})$.

Given a matrix $\bA\in\reals^{m\times m}$, $\diag(\bA)\in\reals^m$ is the vector
that contains its diagonal entries. Conversely,
given $\bv\in\reals^m$, $\diag(\bv)\in\reals^{m\times m}$ is the diagonal matrix with entries $\diag(\bv)_{ii} = v_i$. 

We denote by $\e$ the all-ones vector and by $\Id$ the identity matrix.

We follow the standard big-Oh notation.
%
%*****************************************************************
%
\section{Upper bound: Theorem \ref{thm:ER}}

In this section, we prove the upper bound in Theorem \ref{thm:ER}.
Denote $\PSD_1:=\{\bX:\bX \succeq 0,\bX_{ii}=1\;\forall i\}$.
Introducing dual variables $\bLambda \succeq 0$ and $\bnu \in \R^n$
and invoking strong duality, we have
\begin{align}
\SDP(\bA)&=\max_{\bX \in \PSD_1} \min_{\bnu,\bLambda\succeq 0}
\left\<\bA-\frac{d}{n}\e\e^{\sT},\bX\right\>+\< \bLambda, \bX\>
-\<\bnu, \diag(\bX)-\e\>\\
&=\min_{\bnu,\bLambda\succeq 0} \max_{\bX \in \PSD_1} \<\bnu ,\e\>
+\left\<\bA-\frac{d}{n}\e\e^{\sT}+\bLambda-\diag(\bnu)\,,\,\bX\right\>\, .
\end{align}
The minimum over $\bLambda \succeq 0$ occurs at $\bLambda=0$, hence
$\SDP(\bA)$ is equivalently given by the value of the
dual minimization problem over $\bnu \in \R^n$:
\begin{align}
\mbox{\rm minimize} & \;\;\;\<\bnu,\e\> \label{eq:dualSDP}\\
\mbox{\rm subject to} & \;\;\;\bA-\frac{d}{n}\e\e^{\sT} \preceq \diag(\bnu).\nonumber
\end{align}

We prove the upper bound in Theorem \ref{thm:ER} by constructing a dual-feasible
solution $\bnu$, parametrized by a small positive constant
$\delta \in (0,1/\sqrt{d})$.
Denote the diagonal degree matrix of $\bA$ as $\bD:=\diag(\bA\e)$ and set
\begin{equation}\label{eq:nu}
u:=\frac{1}{\sqrt{d}}-\delta,\;\;\;\;
\bnu:=\begin{cases}
\diag\left(\frac{1+\delta-u^2}{u}\Id+u\bD\right) &\mbox{ if }
\frac{1+\delta-u^2}{u}\Id+u\bD-\bA+\frac{d}{n}\e\e^{\sT} \succeq 0 \\
\diag(\bD) & \text{ otherwise.}
\end{cases}
\end{equation}
The following is the main result of this section, which ensures that the first
case in the definition of $\bnu$ in (\ref{eq:nu}) holds with high probability.
\begin{theorem}\label{thm:psd}
For fixed $d>1$, let $\bA$ be the adjacency matrix of the \ER random graph
$G \sim \sG(n,d/n)$, and let $\bD:=\diag(\bA\e)$ be the diagonal degree matrix.
Then for any $\delta \in (0,1/\sqrt{d})$ and for $u=1/\sqrt{d}-\delta$,
with probability approaching 1 as $n \to \infty$,
\[\frac{1+\delta-u^2}{u}\Id+u\bD-\bA+\frac{(1-u^2)d}{n}\e\e^{\sT} \succ 0.\]
\end{theorem}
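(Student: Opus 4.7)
The approach is to exploit the Ihara-Bass identity connecting the claimed matrix to the non-backtracking operator of $G$. Multiplying the desired inequality by $u>0$ reduces it to
\[
\bH(u)+\delta\Id+\frac{u(1-u^2)d}{n}\,\e\e^{\sT}\;\succ\;0,\qquad \bH(u):=u^2\bD-u\bA+(1-u^2)\Id.
\]
The Ihara-Bass identity $\det(\bH(u))=(1-u^2)^{|E|-|V|}\det(\Id-u\bB)$, with $\bB$ the non-backtracking operator on the directed edges of $G$, ties the spectrum of $\bH(u)$ to the location of the spectrum of $\bB$ relative to $1/u$. My plan is therefore (i) to establish a sharp spectral gap for $\bB$, (ii) to translate this into a description of the spectrum of $\bH(u)$ (at most one negative outlier, plus an $o(1)$-nonnegative bulk), and (iii) to neutralize the outlier (if any) using the rank-one perturbation along $\e$.

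For step (i) I would invoke (or reprove via a tangle-free trace-method argument in the spirit of \cite{bordenaveetal}) the spectral theorem for the non-backtracking matrix of sparse \ER graphs: with probability $1-o(1)$, $\bB$ has a unique Perron eigenvalue $\mu_1=d+o(1)$, while every other eigenvalue satisfies $|\mu_k|\le\sqrt{d}+o(1)$. Since $u=1/\sqrt d-\delta<1/\sqrt d$, this control gives step (ii): via the eigenvector pairing attached to the Ihara-Bass identity, with high probability $\bH(u)$ has at most one negative eigenvalue (the image of $\mu_1$, present when $u\mu_1>1$), and all remaining eigenvalues are at least $-o(1)$.

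In step (iii) I would show that the outlier eigenvector of $\bH(u)$ is essentially parallel to $\hat\e:=\e/\sqrt n$, so that the rank-one correction lifts precisely this direction. A direct Rayleigh-quotient calculation, using the \ER concentration $2|E|/n\to d$, yields
\[
\hat\e^{\sT}\!\left(\bH(u)+\tfrac{u(1-u^2)d}{n}\,\e\e^{\sT}\right)\!\hat\e
\;\longrightarrow\;(1-u)(du^2+u+1)\;>\;0\quad\text{a.s.}
\]
Combined with a perturbation argument exploiting the closeness of the outlier eigenvector to $\hat\e$ (from the quantitative spectral gap on $\bB$), this gives $\bH(u)+\tfrac{u(1-u^2)d}{n}\,\e\e^{\sT}\succeq -o(1)\Id$ almost surely; the additive $\delta\Id$ then dominates the $o(1)$ slack for any fixed $\delta>0$, yielding strict positive definiteness.

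The hard part will be step (i): the sharp spectral bound on $\bB$ for \ER graphs, where the unbounded maximal degree (of order $\log n/\log\log n$) and short cycles force the standard trace method through a ``tangle-free'' reduction and delicate high-moment estimates on non-backtracking walk counts. A secondary subtle point is quantifying the alignment between the outlier eigenvector of $\bH(u)$ and $\e$ sharply enough that the rank-one correction covers the outlier uniformly on $\hat\e^{\perp}$, not merely in the $\hat\e$ direction; this is morally a Perron-Frobenius statement combined with the spectral gap on $\bB$, but tracking it through the Ihara-Bass correspondence is the technical heart of the argument.
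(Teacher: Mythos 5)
Your plan follows exactly the ``heuristic reasoning'' that the paper lays out and then \emph{discards}: pass the Ihara--Bass identity on $G$, use the non-backtracking spectral gap of Bordenave--Lelarge--Massouli\'e so that $\bH(u)$ has at most one negative eigenvalue, and cancel that eigenvalue with the rank-one $\e\e^{\sT}$ term by arguing the outlier eigenvector is close to $\hat\e$. The paper explicitly remarks that ``direct analysis of the rank-one perturbation of $\bH(u)$ is hindered by the fact that the spectrum and eigenvectors of $\bH(u)$ are difficult to characterize,'' and it therefore does something different. What you label as the ``technical heart'' --- ``quantifying the alignment between the outlier eigenvector of $\bH(u)$ and $\e$'' and pushing that through the Ihara--Bass correspondence --- is not a deferred technicality but the entire substance of the problem on your route, and it is a genuine gap: the Ihara--Bass identity is a determinant identity, so it ties eigenvectors of $\bB$ to kernel vectors of $\bH(u)$ only at the isolated values of $u$ where $\bH(u)$ is singular, and says nothing about the (negative-eigenvalue) eigenvector of $\bH(u)$ for the range of $u$ you need. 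Even granting some eigenvector correspondence, bounding the component of the outlier eigenvector in $\hat\e^{\perp}$ sharply enough that $\frac{u(1-u^2)d}{n}\e\e^{\sT}$ (a rank-one perturbation of size $O(1)$) controls it is not supplied by the spectral radius bound on $\bB$ alone. The Rayleigh quotient along $\hat\e$, which you compute correctly, is necessary but far from sufficient.

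The paper's actual proof is a different argument that avoids eigenvectors entirely. It (i) proves a \emph{generalized} Ihara--Bass identity for edge-weighted graphs (Lemma~\ref{lemma:iharabass}); (ii) applies this identity to the \emph{complete} graph with signed weights $c(i,j)=\bA_{ij}-d/n$, which produces a weighted non-backtracking operator $\bB$ of dimension $n(n-1)$ whose entries are the centered adjacency entries; (iii) shows by a tangle/coil-free trace method (Lemma~\ref{lemma:rhoB}) that this centered $\bB$ has spectral radius $\sqrt{d}(1+o(1))$ with \emph{no} outlier at $d$ --- the $-d/n$ shift in the weights removes the Perron eigenvalue; and (iv) concludes by continuity in $u$, starting from $u=0$ where the weighted Bethe Hessian equals $\Id$. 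Because the bulk already sits inside radius $\sqrt{d}$, the determinant never vanishes on $(0,1/\sqrt{d}-\delta/2)$ and hence $\Id+\bD_u-\bA_u$ stays strictly positive definite for all such $u$. Crucially, the rank-one term $\frac{d}{n}\e\e^{\sT}$ appears \emph{automatically} inside $\bA_u$ from the off-graph weights $-d/n$ on the complete graph, so no separate rank-one perturbation analysis is needed. In short, you would have to supply a quantitative eigenvector-localization argument that the paper shows no sign of having, and which the authors state they could not carry out; the paper substitutes a structural trick (centered weighted Ihara--Bass on the complete graph) that eliminates the need for it.
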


Let us first show that this implies the desired upper bound:
\begin{proof}[Proof of Theorem \ref{thm:ER} (upper bound)]
By construction, $\bnu$ as defined in (\ref{eq:nu})
is a feasible solution for the dual problem
(\ref{eq:dualSDP}). Let $\cE$ be the event where
\[\frac{1+\delta-u^2}{u}\Id+u\bD-\bA+\frac{d}{n}\e\e^{\sT} \succeq 0.\]
Then
\[\bnu^{\sT}\e=\left(\frac{1+\delta-u^2}{u}n+u\e^{\sT}\bA\e\right)\1\{\cE\}
+(\e^{\sT}\bA\e)\1\{\cE^c\}.\]
As $\e^{\sT}\bA\e \sim 2\Binom(\binom{n}{2},d/n)$, this implies
$\E[\e^{\sT}\bA\e] \leq dn$ and $\E[(\e^{\sT}\bA\e)^2] \leq d^2(n^2+1)$. Then
\begin{align*}
\E\left[\frac{1}{n}\SDP(\bA)\right]
&\leq \frac{1}{n}\E[\bnu^{\sT}\e]
\leq \frac{1+\delta-u^2}{u}+\frac{u}{n}\E[\e^{\sT}\bA\e]
+\frac{1}{n}\E\left[(\e^{\sT}\bA\e)\1\{\cE^c\}\right]\\
&\leq \frac{1+\delta-u^2}{u}+ud
+\frac{1}{n}\sqrt{\E[(\e^{\sT}\bA\e)^2]\P[\cE^c]}
\leq \frac{1+\delta-u^2}{u}+ud+d\sqrt{\P[\cE^c]}.
\end{align*}
By Theorem \ref{thm:psd}, $\P[\cE^c] \to 0$ as $n \to \infty$. Taking $n \to
\infty$ and then $\delta \to 0$,
\[\limsup_{n \to \infty} \E\left[\frac{1}{n} \SDP(\bA)\right]
\leq 2\sqrt{d}-\frac{1}{\sqrt{d}}=2\sqrt{d}\left(1-\frac{1}{2d}\right).\]

To obtain the bound almost surely rather than in expectation, note that
if $G$ and $G'$ are two fixed graphs that differ in one edge, with adjacency
matrices $\bA$ and $\bA'$, then
\[\left|\Tr\left(\bA-\frac{d}{n}\e\e^{\sT}\right)\bX-
\Tr\left(\bA'-\frac{d}{n}\e\e^{\sT}\right)\bX\right| \leq 2\]
for any feasible point $\bX$ of (\ref{eq:SDP}),
so $|\frac{1}{n}\SDP(\bA)-\frac{1}{n}\SDP(\bA')| \leq
\frac{2}{n}$. Let $e_1,\ldots,e_{\binom{n}{2}}$ be any ordering of the edges
$\{(i,j):1 \leq i<j \leq n\}$, and denote by
$\cF_0,\cF_1,\ldots,\cF_{\binom{n}{2}}$ the filtration where $\cF_l$ is
generated by $\bA_{e_1},\ldots,\bA_{e_l}$. Then by coupling, this
implies for each $l=1,\ldots,\binom{n}{2}$
\[|d_l|:=\left|\E\left[\frac{1}{n}\SDP(\bA) \bigg| \cF_l\right]-
\E\left[\frac{1}{n}\SDP(\bA) \bigg| \cF_{l-1}\right]\right| \leq
\1\{\bA_{e_l}=0\}\frac{d}{n}\cdot \frac{2}{n}+
\1\{\bA_{e_l}=1\}\left(1-\frac{d}{n}\right)\frac{2}{n}.\]
Hence $|d_l| \leq 2/n$ for each $l$, and
\[V_n:=\sum_{l=1}^{\binom{n}{2}} |d_l|^2 \leq
\frac{\e^{\sT}\bA\e}{2}\left(\left(1-\frac{d}{n}\right)\frac{2}{n}\right)^2
+\left(\binom{n}{2}-\frac{\e^{\sT}\bA\e}{2}\right)\left(\frac{2d}{n^2}\right)^2
\leq \frac{2d^2}{n^2}+\frac{2}{n^2}(\e^{\sT}\bA\e).\]
Bernstein's inequality yields $\P[\e^{\sT}\bA\e>3dn] \leq
\exp(-C_dn)$ for a constant $C_d>0$. Then, applying the
martingale tail bound of \cite[Theorem 1.2A]{pena}, for any $\eps>0$,
\begin{align*}
&\P\left[\left|\frac{1}{n}\SDP(\bA)-\E\left[\frac{1}{n}\SDP(\bA)\right]\right|
\geq \eps\right]\\
&\hspace{1in}\leq \exp\left(-\frac{\eps^2}{2\left(\frac{2d^2}{n^2}+\frac{6d}{n}+\frac{2\eps}
{n}\right)}\right)+\P\left[V_n \geq \frac{2d^2}{n^2}+\frac{6d}{n}\right]
\leq 2\exp\left(-C_{d,\eps}n\right)
\end{align*}
for a constant $C_{d,\eps}>0$. Then the Borel-Cantelli lemma implies
$|\frac{1}{n}\SDP(\bA)-\E[\frac{1}{n}\SDP(\bA)]|<\eps$ almost surely for all
large $n$, and the result follows by taking $\eps \to 0$.
\end{proof}

In the remainder of this section, we prove Theorem \ref{thm:psd}. Heuristically,
we might expect that Theorem \ref{thm:psd} is true by the following
reasoning: The matrix $\bH(u):=(1-u^2)\Id+u^2\bD-u\bA$ is the deformed
Laplacian, or Bethe Hessian, of the graph. By a relation in graph theory known
as the Ihara-Bass formula \cite{bass,kotanisunada}, the values of $u$ for which
this matrix is singular are the inverses of the non-trivial eigenvalues of a
certain ``non-backtracking matrix'' \cite{krzakala2013spectral,saadeetal}.
Theorem 3 of \cite{bordenaveetal} shows that this non-backtracking matrix
has, with high probability, the bulk of its spectrum supported on the complex
disk of radius approximately $\sqrt{d}$, with a single outlier eigenvalue at
$d$. From this, the observation that $\bH(0) \succeq 0$, and a continuity
argument in $u$, one deduces that $\bH(u)$ has, with high probability for large
$n$, only a single negative eigenvalue when $u \in (1/d,1/\sqrt{d})$. If the
eigenvector corresponding to this eigenvalue has positive alignment with
$\e \in \R^n$, then adding a certain multiple of the rank-one matrix
$\e\e^{\sT}$ should eliminate this negative eigenvalue.

Direct analysis of the rank-one perturbation of $\bH(u)$
is hindered by the fact that the
spectrum and eigenvectors of $\bH(u)$ are difficult to characterize. Instead,
we will study a certain perturbation of the non-backtracking matrix. We
prove Theorem \ref{thm:psd} via the following two steps:
First, we prove a generalization of the Ihara-Bass relation to edge-weighted
graphs. For any graph $H=(V,E)$, let $c:E \to \R$ be a set of possibly negative edge
weights. For each $u \in \R$ such that $|u| \notin \{|c(i,j)|^{-1}:\{i,j\} \in
E\}$, define the
$n \times n$ symmetric matrix $\bA_{c,u}$ and diagonal matrix $\bD_{c,u}$ by
\[\bA_{c,u}(i,j)=\begin{cases} \frac{uc(i,j)}{1-u^2c(i,j)^2} & \{i,j\} \in E \\
0 & \text{otherwise,} \end{cases}\;\;\;\;
\bD_{c,u}(i,j)=\begin{cases} \sum_{k:\{i,k\} \in E}
\frac{u^2c(i,k)^2}{1-u^2c(i,k)^2} & i=j \\
0 & \text{otherwise.} \end{cases}\]
Let $\oE$ denote the set of directed edges $\oE:=\{(i,j),(j,i):\{i,j\}
\in E\}$, and define the $|\oE| \times |\oE|$ weighted non-backtracking matrix
$\bB_c$, with rows and columns indexed by $\oE$, as
\[\bB_c((i,j),(i',j'))=\begin{cases} c(i',j') & i'=j,j' \neq i \\
0 & \text{otherwise.}\end{cases}\]
The following result relates $\bB_c$ with a generalized
deformed Laplacian defined by $\bA_{c,u}$ and $\bD_{c,u}$:
\begin{lemma}[Generalized Ihara-Bass formula]\label{lemma:iharabass}
For any graph $H=(V,E)$, edge weights $c:E \to \R$, $u \in \R$ with $|u| \notin
\{|c(i,j)|^{-1}:\{i,j\} \in E\}$, and the matrices $\bB_c$, $\bA_{c,u}$, and
$\bD_{c,u}$ as defined above,
\begin{align}
\det(\Id-u\bB_c) = \det(\Id+\bD_{c,u}-\bA_{c,u})
\prod_{\{i,j\}\in E}(1-u^2c^2(i,j)).
\end{align}
\end{lemma}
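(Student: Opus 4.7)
The plan is to follow the standard block-matrix proof of the classical Ihara--Bass formula (e.g.\ Bass's original argument), but carrying along the edge-weight diagonal matrix. First, I would introduce the source and target incidence matrices $\bS,\bT\in\R^{V\times\oE}$ by $\bS_{v,(i,j)}=\1\{v=i\}$, $\bT_{v,(i,j)}=\1\{v=j\}$, together with the edge-reversal permutation $\bJ\in\R^{\oE\times\oE}$ given by $\bJ_{e,e'}=\1\{e'=\bar e\}$ and the diagonal weight matrix $\bC\in\R^{\oE\times\oE}$ with $\bC_{e,e}=c(e)$ (which is well defined since $c$ is edge-symmetric). A direct check shows $\bB_c=(\bT^\sT\bS-\bJ)\bC$, and hence
\[
\Id-u\bB_c \;=\; (\Id+u\bJ\bC)\;-\;u\,\bT^\sT\bS\,\bC.
\]

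Next I would factor out $X:=\Id+u\bJ\bC$. Because $\bJ$ is an involution pairing only $e$ with $\bar e$, the matrix $\bJ\bC$ decomposes into $2\times 2$ blocks $\bigl(\begin{smallmatrix}0 & c(e)\\ c(e) & 0\end{smallmatrix}\bigr)$ over unordered edges, so that $X$ is block-diagonal with $2\times 2$ blocks whose determinants are $1-u^2c(e)^2$. This immediately yields
\[
\det(X)\;=\;\prod_{\{i,j\}\in E}\bigl(1-u^2c^2(i,j)\bigr),
\qquad X^{-1}\;=\;(\Id-u\bJ\bC)\,\bW^{-1},
\]
where $\bW:=\Id-u^2\bC^2$ is the (well-defined, by the hypothesis on $u$) diagonal $|\oE|\times|\oE|$ matrix. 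Therefore
\[
\det(\Id-u\bB_c)\;=\;\Bigl(\textstyle\prod_{\{i,j\}\in E}(1-u^2c^2(i,j))\Bigr)\,\det\!\bigl(\Id_\oE-u\,X^{-1}\bT^\sT\bS\bC\bigr),
\]
and Sylvester's identity $\det(\Id_\oE-MN)=\det(\Id_V-NM)$ converts the second factor into $\det(\Id_V-u\,\bS\bC X^{-1}\bT^\sT)$.

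The remaining step is to identify $u\,\bS\bC X^{-1}\bT^\sT$ with $\bA_{c,u}-\bD_{c,u}$. Expanding $X^{-1}=(\Id-u\bJ\bC)\bW^{-1}$ gives
\[
u\,\bS\bC X^{-1}\bT^\sT
\;=\;u\,\bS\bC\bW^{-1}\bT^\sT\;-\;u^2\,\bS\bC^2\bW^{-1}\,\bJ\bT^\sT,
\]
using $\bC\bJ\bC=\bC^2\bJ$ and that $\bW,\bC$ commute with $\bJ$ since $c^2$ is edge-symmetric. The identity $\bJ\bT^\sT=\bS^\sT$ (reversing a directed edge swaps source and target) then reduces the second term to $u^2\,\bS\bC^2\bW^{-1}\bS^\sT$. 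A direct entrywise computation, using that only $e=(i,j)$ contributes to $(\bS\cdot\bT^\sT)_{i,j}$ and only directed edges out of $i$ contribute to $(\bS\cdot\bS^\sT)_{i,i}$, gives
\[
u\,\bS\bC\bW^{-1}\bT^\sT=\bA_{c,u},\qquad u^2\,\bS\bC^2\bW^{-1}\bS^\sT=\bD_{c,u},
\]
so that $\Id_V-u\,\bS\bC X^{-1}\bT^\sT=\Id+\bD_{c,u}-\bA_{c,u}$, completing the proof. I do not foresee a genuine obstacle here: the main things to keep straight are the bookkeeping (which side $\bC$ sits on in the definition of $\bB_c$, and the edge-symmetry $c(\bar e)=c(e)$ that lets $\bC$ and $\bJ$ interact cleanly), the block-diagonal inversion of $\Id+u\bJ\bC$, and the Sylvester swap; none of these steps uses any randomness or structure of the graph beyond its combinatorics, so the identity is valid for every graph $H$ and every real $u$ avoiding the reciprocals of the $|c(e)|$.
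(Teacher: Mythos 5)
Your proof is correct, but it takes a genuinely different route from the paper's. The paper follows Kotani--Sunada: it decomposes $\R^{|\oE|}$ into the subspaces $C_-\oplus C_+$ of antisymmetric and symmetric edge functionals, exhibits a block decomposition of $\bB_c$ in terms of discrete differential/co-differential operators $\rd_\pm,\delta_\pm$, derives the factorization $(\Id-u\bB_c)(\Id+u\btau)^{-1}=\Id+\tfrac{1}{2}u\bS\bT(\Id+u\btau)^{-1}$, and then needs a spanning argument ($\Ker$ versus $\Im$) that only works for \emph{generic} $u$, with the general case recovered by a continuity argument at the end. You instead use Bass-style incidence matrices $\bS,\bT$, the edge-reversal involution $\bJ$, and the diagonal weight matrix $\bC$, observe $\bB_c=(\bT^\sT\bS-\bJ)\bC$, factor out the block-diagonal matrix $X=\Id+u\bJ\bC$ (whose determinant gives the edge-product), and invoke Sylvester's determinant identity $\det(\Id_{\oE}-MN)=\det(\Id_V-NM)$ to collapse the remaining factor to the $|V|\times|V|$ determinant. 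The Sylvester route is more elementary and applies uniformly for every $u$ with $|u|\notin\{|c(e)|^{-1}\}$, avoiding the genericity-then-continuity step; the paper's route is closer to the cited source \cite{kotanisunada} and reveals more structural information about the non-backtracking operator (its action on symmetric and antisymmetric parts). The key identities you need along the way --- $\bB_c=(\bT^\sT\bS-\bJ)\bC$, $X^{-1}=(\Id-u\bJ\bC)\bW^{-1}$ with $\bW=\Id-u^2\bC^2$, $\bC\bJ\bC=\bC^2\bJ$, $\bJ\bT^\sT=\bS^\sT$, and the entrywise identification of $u\bS\bC\bW^{-1}\bT^\sT$ with $\bA_{c,u}$ and $u^2\bS\bC^2\bW^{-1}\bS^\sT$ with $\bD_{c,u}$ --- all check out, relying only on the edge-symmetry $c(i,j)=c(j,i)$ and on $\oE$ being the set of orientations of edges.
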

\noindent When $c \equiv 1$, this recovers the standard Ihara-Bass identity.
The proof is a direct adaptation of the proof for the unweighted case in
\cite{kotanisunada}; for the reader's convenience we provide it in
Section \ref{subsec:iharabass}.

Second, we consider a weighted non-backtracking matrix
$\bB \in \R^{n(n-1) \times n(n-1)}$ of the above form for the complete graph
with $n$ vertices,
with rows and columns indexed by all ordered pairs $(i,j)$ of distinct indices
$i,j \in \{1,\ldots,n\}$, and defined as
\begin{equation}\label{eq:B}
\bB((i,j),(i',j'))=\begin{cases}
\bA_{i'j'}-\frac{d}{n} & i'=j,j'\neq i\\
0 & \text{otherwise}.
\end{cases}
\end{equation}
We prove in Section \ref{subsec:rhoB} that $\bB$ no longer has an outlier
eigenvalue at $d$, but instead has all of its eigenvalues contained within the
complex disk of radius approximately $\sqrt{d}$:
\begin{lemma}\label{lemma:rhoB}
Fix $d>1$, let $\bA$ be the adjacency matrix of the \ER random graph
$\sG(n,d/n)$, and define $\bB \in \R^{n(n-1) \times n(n-1)}$ by (\ref{eq:B}).
Let $\rho(\bB)$ denote the spectral radius of $\bB$. Then for any
$\eps>0$, with probability approaching 1 as $n \to \infty$,
\[\rho(\bB)\leq \sqrt{d}(1+\eps).\]
\end{lemma}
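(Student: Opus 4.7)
My plan is to adapt the trace/moment method of Bordenave--Lelarge--Massouli\'e \cite{bordenaveetal} for the non-backtracking matrix of the sparse \ER graph, exploiting a simple structural observation about $\bB$: since $\E[\bA_{i'j'}]=d/n$, the entries of $\bB$ are \emph{centered}, i.e.\ $\E[\bB]=0$ on the support of the non-backtracking pattern. In the standard (uncentered) analysis, the outlier eigenvalue near $d$ arises from walks that traverse each edge exactly once and wind through a tree-like piece of $G$ to accumulate a factor $d^k$. Centering by $d/n$ kills precisely this ``Perron'' contribution, and should therefore collapse the spectrum into a disk of radius $\sqrt{d}(1+\eps)$.

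To implement this, I would first bound $\rho(\bB)^{2k}\le \Tr((\bB\bB^*)^k)$ for an even power $2k$ to be chosen as a slowly growing function of $n$ (e.g.\ $k=k_n=\lfloor c\log n/\log\log n\rfloor$). Expand
\[
\E\,\Tr((\bB\bB^*)^k)=\sum_{w}\E\Big[\prod_{s=1}^{2k}\big(\bA_{w_s}-\tfrac{d}{n}\big)\Big],
\]
where the sum is over closed length-$2k$ walks $w$ on the \emph{complete} graph on $n$ vertices that satisfy a non-backtracking constraint on consecutive steps (with one reversal at step $k$ coming from the transpose). Writing $X_{ij}:=\bA_{ij}-d/n$, one has $\E[X_{ij}]=0$ and $|\E[X_{ij}^r]|\le d/n$ for $r\ge 2$; hence a walk $w$ contributes nothing in expectation unless every undirected edge it uses is traversed at least twice. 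Grouping walks by the topology of the multigraph they trace, a walk that visits $s$ distinct edges each at least twice yields a factor $(d/n)^s$, and is supported on at most $s+1$ vertices, so each topology class contributes $O(n^{s+1}\cdot (d/n)^s\cdot k^{O(1)})=O(n\cdot d^s\cdot k^{O(1)})$. Since $s\le k$, summing over $s$ and over the $O(k^{O(k)})$ topology shapes gives $\E[\Tr((\bB\bB^*)^k)]\le n\cdot d^k\cdot(1+o(1))^{k}$, provided $k$ is chosen small enough that the $(1+o(1))$ factor is under control.

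Taking $2k$-th roots and applying Markov's inequality yields $\rho(\bB)\le \sqrt{d}(1+\eps)$ with probability tending to one as $n\to\infty$, which is the claim. The main obstacle, as in \cite{bordenaveetal}, is ``tangled'' walks: those that generate short cycles and thus violate the tree-like bookkeeping $\#\text{vertices}\le \#\text{edges}+1$. The standard remedy is to restrict to \emph{tangle-free} walks (no small cycle traversed more than once) by intersecting the event with $\{G\text{ is tangle-free on balls of radius }\ell\}$, which holds with high probability for $\ell=\Theta(\log n/\log d)$, and then separately bound the contribution of tangled walks via a path-switching/decomposition argument showing they form a negligible fraction of $\Tr((\bB\bB^*)^k)$.

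A minor simplification over \cite{bordenaveetal} is that $\bB$ is indexed by all $n(n-1)$ ordered pairs of distinct vertices rather than by the directed edges of $G$, so no conditioning on the edge set is needed when taking expectations: the entries $\bA_{ij}-d/n$ are independent across unordered pairs $\{i,j\}$, and the non-backtracking constraint is a purely deterministic combinatorial restriction on the walk. This makes the expansion slightly cleaner but does not change the overall combinatorics; the dominant contribution to $\E[\Tr((\bB\bB^*)^k)]$ comes from walks that decompose as a closed double-cover of a tree with $k/2$ edges, giving exactly the $\sqrt{d}$ threshold.
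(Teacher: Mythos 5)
Your high-level strategy matches the paper's: bound $\rho(\bB)^{2m}\le \Tr\,\bB^m(\bB^\sT)^m$, expand into a sum over pairs of non-backtracking paths on the complete graph, and observe that the centering $\bA_{ij}-d/n$ kills the Perron walks that drive the outlier eigenvalue. Your observation that $\bB$ lives on all $n(n-1)$ ordered pairs (so the walk constraints are deterministic and there is no conditioning on the edge set) is also correct and is exactly why the paper's expansion is clean. However, there are three genuine gaps in the way you propose to execute the counting.

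First, your two key moves are mutually incompatible. You want to discard walks with a singly-traversed edge because $\E[X_{ij}]=0$, but that cancellation uses independence and vanishes the moment you multiply by an indicator $\1\{G\text{ is tangle-free}\}$ --- the conditional expectation $\E[X_{ij}\,\1\{\cdot\}]$ is no longer zero. You cannot both kill single edges by centering and condition on tangle-freeness in the same expectation. The paper resolves this tension through Lemma~\ref{lemma:alternatingprobbound}: rather than claiming single edges vanish, it proves a quantitative bound $\big|\E[\prod_{S}(\bA-\tfrac dn)\prod_T \bA\cdot\1\{\text{coil-free}\}]\big|\le C(\log n)^2(\tfrac dn)^{|S|+|T|}2^{|S|}n^{-0.7(|S|/l-\numcycle(S\cup T))}$, showing that with the coil-free indicator each single edge forces a cycle somewhere nearby (Lemma~\ref{lemma:cyclecoilnumber}), and cycles are rare. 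The single edges then contribute a small factor, not zero, and that small factor is exactly what makes the shape count close.

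Second, the shape count is not controlled by your argument. You concede there are $k^{O(k)}$ topology shapes and claim this is absorbed as $(1+o(1))^k$ for $k$ growing slowly, but $k^{O(k)}=e^{O(k\log k)}$, and $k\log k=o(k)$ only if $k$ is bounded. No choice of $k\to\infty$ makes $k^{O(k)}$ subexponential in $k$. In fact, without a coil-free restriction the number of closed non-backtracking walks of length $2m$ on a multigraph with $v\le 2m$ vertices can genuinely be of order $v^{\Theta(m)}$, and the gain of $n^{-1}$ per excess cycle does not obviously beat this. The paper's Lemma~\ref{lemma:equivclassbound} gives a far more refined count of equivalence classes, with an exponent proportional to $2e-2v+4$ (the cycle number) and to the number of non-coil-free steps, precisely because the coil-free structure rigidifies what a walk can do between innovation steps; this is what makes the count and the probabilistic factor combine correctly.

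Third, your proposed $k_n=\lfloor c\log n/\log\log n\rfloor$ is an order of magnitude too small. Markov's inequality gives $\P[\rho(\bB)\ge\sqrt d(1+\eps)]\le \E[T_m]/(d^m(1+\eps)^{2m})$, and even the best-case bound $\E[T_m]\lesssim n\,d^m\cdot\mathrm{poly}(m)$ forces $(1+\eps)^{2m}\gg n$, i.e.\ $m\gg\log n$. With your choice $n^{1/(2k_n)}\to\infty$. The paper takes $m\sim(\log n)(\log\log n)$ and $l\sim(\log\log n)^3$, which is the regime where the bounds in Lemmas~\ref{lemma:alternatingprobbound} and~\ref{lemma:equivclassbound} land. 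So while your structural intuition is exactly right --- non-backtracking double-covers of trees are ``out-and-back'' paths giving $\sqrt d$ rather than $2\sqrt d$, and centering removes the $d$ outlier --- the bookkeeping you sketch will not close without the two lemmas the paper builds for precisely this purpose.
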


Using these results, we
prove Theorem \ref{thm:psd}:

\begin{proof}[Proof of Theorem \ref{thm:psd}]
Denote by $\bD$ the diagonal degree matrix of $\bA$.
Let $\cE$ be the event on which $\rho(\bB) \leq (1/\sqrt{d}-\delta/2)^{-1}$ and
$\|\bD\| \leq 2\log n$. Each diagonal entry of $\bD$ is distributed as
$\Binom(n-1,d/n)$, hence $\P[\|\bD\|>2\log n] \leq c_d/{n^2}$ for a constant
$c_d>0$ by Bernstein's inequality and a union bound. This and Lemma
\ref{lemma:rhoB} imply $\cE$ holds with probability approaching 1.

On $\cE$, $\det(\Id-u\bB) \neq 0$ for all $u \in (0,1/\sqrt{d}-\delta/2)$.
Applying Lemma \ref{lemma:iharabass} for the complete graph $H$ with edge
weights $c(i,j)=\bA_{i,j}-d/n$,
and noting $|u| \neq |c(i,j)|^{-1}$ for any $\{i,j\}$ and any
$u \in (0,1/\sqrt{d}-\delta/2)$
when $n$ is sufficiently large, we have $\det(\Id+\bD_u-\bA_u) \neq 0$ for
\begin{align*}
\bA_u&=\left(\frac{u(1-d/n)}{1-u^2(1-d/n)^2}
+\frac{ud/n}{1-u^2d^2/n^2}\right)\bA
-\frac{ud/n}{1-u^2d^2/n^2}\e\e^{\sT}+\frac{ud/n}{1-u^2d^2/n^2}\Id,\\
\bD_u&=\left(\frac{u^2(1-d/n)^2}{1-u^2(1-d/n)^2}
-\frac{u^2d^2/n^2}{1-u^2d^2/n^2}\right)\bD+\frac{(n-1)u^2d^2/n^2}{1-u^2d^2/n^2}
\Id.
\end{align*}
Note that at $u=0$, $\Id+\bD_u-\bA_u=\Id \succ 0$. Then by
continuity in $u$, $\Id+\bD_u-\bA_u \succ 0$ for all $u \in
(0,1/\sqrt{d}-\delta/2)$ on the event $\cE$.

Choosing $u=1/\sqrt{d}-\delta$, it is easily verified that
\[\Id+\bD_u-\bA_u=\Id+\frac{u^2}{1-u^2}\bD-\frac{u}{1-u^2}\bA
+\frac{ud}{n}11^{\sT}+\bR\]
for a remainder matrix $\bR$ satisfying
$\|\bR\| \leq \frac{C_{d,\delta}}{n}(\|\bA\|+\|\bD\|+1) \leq
\frac{C_{d,\delta}}{n}(2\|\bD\|+1)$ for a constant $C_{d,\delta}>0$.
Hence on $\cE$,
$\bR \preceq \frac{\delta}{1-u^2}\Id$ for all large $n$, and
rearranging yields the desired result.
\end{proof}

In the following two subsections, we complete the proof by proving Lemmas
\ref{lemma:iharabass} and \ref{lemma:rhoB}.

\subsection{Proof of Lemma \ref{lemma:iharabass}}\label{subsec:iharabass}
We follow the argument of \cite[Sections 4 and 5]{kotanisunada}. Assume without
loss of generality $c(i,j) \neq 0$ for all $\{i,j\} \in E$. (Otherwise, remove
$\{i,j\}$ from $E$.) Identify
$\R^{|\oE|}$ with the space of linear functionals $\omega:\oE \to \R$ and
$\R^{|V|}$ with the space of linear functionals $f:V \to \R$.
Consider the orthogonal decomposition $\R^{|\oE|}=C_- \oplus C_+$, where
\[C_{\pm}:=\{\omega:\omega(i,j)=\pm \omega(j,i)\;\;\forall (i,j) \in \oE\}.\]
Any $\omega \in \R^{|\oE|}$ has the decomposition
$\omega=\omega_-+\omega_+$ where
$\omega_\pm(i,j)=\frac{1}{2}(\omega(i,j)\pm\omega(j,i)) \in C_\pm$.
Define $\rd_\pm:\R^{|V|} \to C_\pm$ by
\[(\rd_{\pm}f)(i,j):=f(j)\pm f(i),\]
and $\delta_\pm:C_\pm\to \R^{|V|}$ by
\[(\delta_{\pm}\omega)(k):=\pm\!\!\!\sum_{(i,j) \in E^o:i=k} c(i,j)\omega(i,j).
\]
With a slight abuse of notation, denote by $c:C_\pm \to C_\pm$
the diagonal operators $(c\omega)(i,j):=c(i,j)\omega(i,j)$ (which are
well-defined as $c(i,j)=c(j,i)$).

For $\omega \in C_+$,
\begin{align*}
\frac{1}{2}\left((\bB_c\omega)(i,j)+(\bB_c\omega)(j,i)\right)
&=\frac{1}{2}\left(\sum_{(i',j') \in E^o:i'=j,j' \neq i}
c(i',j')\omega(i',j')+\sum_{(i',j') \in E^o:i'=i,j' \neq j} 
c(i',j')\omega(i',j')\right)\\
&=\frac{1}{2}\left((\delta_+\omega)(j)-c(j,i)\omega(j,i)
+(\delta_+\omega)(i)-c(i,j)\omega(i,j)\right)\\
&=\left(\tfrac{1}{2}\rd_+\delta_+\omega-c\omega\right)(i,j)
\end{align*}
Similar computations for $\frac{1}{2}((\bB_c\omega)(i,j)-(\bB_c\omega)(j,i))$
and for $\omega \in C_-$ verify that $\bB_c$ has the following block
decomposition with respect to $\R^{|E^o|}=C_-\oplus C_+$:
\[\bB_c = \left(\begin{matrix}
-\frac{1}{2} \rd_-\delta_-+c& \frac{1}{2} \rd_-\delta_+\\
-\frac{1}{2} \rd_+\delta_- & \frac{1}{2} \rd_+\delta_+-c
\end{matrix}\right).\]
Define the matrices $\bT \in \R^{|V|\times |\oE|}$, $\bS \in
\R^{ |\oE|\times |V| }$, and $\btau \in \R^{|\oE| \times |\oE|}$
(with respect to the decomposition $\R^{|\oE|}=C_-\oplus C_+$) by
\[\bT:=\Big( \; \delta_-,\;\;\; -\delta_+\; \Big),\;\;
\bS:=\left(\begin{matrix} \rd_-\\ \rd_+ \end{matrix} \right),\;\;
\btau:=\left(\begin{matrix} -c & 0 \\ 0 & c \end{matrix}\right).\]
Then $\bB_c=-\btau-\frac{1}{2}\bS\bT$, and hence
\begin{equation}
(\Id-u\bB_c) (\Id+u\btau)^{-1}=\Id
+\frac{1}{2}u\bS\bT(\Id+u\btau)^{-1}.\label{eq:BSTid}
\end{equation}
($\Id+u\btau$ is invertible by the assumption
$|u| \notin \{|c(i,j)|^{-1}:\{i,j\} \in E\}$.)
In particular, this implies that
$(\Id-u\bB_c) (\Id+u\btau)^{-1}$ preserves
$\Ker \bT(\Id+u\btau)^{-1}$ and $\Im \bS$.

For $f \in \R^{|V|}$ and $k \in V$, we compute
\begin{align*}
(u\bT(\Id+u\btau)^{-1}\bS f)(k)
&=-\!\!\!\sum_{(i,j) \in E^o:i=k} \frac{uc(i,j)}{1-uc(i,j)}(f(j)-f(i))
-\!\!\!\sum_{(i,j) \in E^o:i=k} \frac{uc(i,j)}{1+uc(i,j)}(f(j)+f(i))\\
&=-\!\!\!\sum_{j:\{k,j\} \in E} \frac{2uc(k,j)}{1-u^2c(k,j)^2}f(j)
+\!\!\!\sum_{j:\{k,j\} \in E} \frac{2u^2c(k,j)^2}{1-u^2c(k,j)^2}f(k)\\
&=(2\bD_{c,u}f-2\bA_{c,u}f)(k).
\end{align*}
Hence $u\bT(\Id+u\btau)^{-1}\bS=2\bD_{c,u}-2\bA_{c,u}$. The determinant of this
matrix is a rational function of $u$ and non-zero for large $|u|$, so
$u\bT(\Id+u\btau)^{-1}\bS$ is invertible for generic $u \in \R$. For any such
$u$, $\Ker \bT(\Id+u\btau)^{-1}$ and $\Im \bS$ are linearly independent.
Furthermore, one may verify $\delta_{\pm} c^{-1}=\rd_{\pm}^*$,
so $-\bT\btau^{-1}=\bS^*$ and $|\oE|=\dim \Im \bS+\dim \Ker \bS^*
=\dim \Im \bS+\dim \Ker \bT(\Id+u\btau)^{-1}$. Hence for generic $u$,
$\Ker \bT(\Id+u\btau)^{-1}$ and $\Im \bS$ span all of $\R^{|\oE|}$.
By (\ref{eq:BSTid}),
we may write the block decomposition of $(\Id-u\bB_c)(\Id+u\btau)^{-1}$ 
with respect to $\R^{|\oE|}=\Im\bS \oplus \Ker\bT(\Id+u\btau)^{-1}$ as
\[(\Id-u\bB_c) (\Id+u\btau)^{-1} = 
\left( \begin{matrix}
\bS\big(\Id +\frac{1}{2} u\bT(1+u\btau)^{-1}\bS\big)\bS^{-1} & 0\\
0 & \Id
\end{matrix} \right).\]
Then, computing the determinant of both sides and rearranging,
\[\det(\Id-u\bB_c)=\prod_{\{i,j\}\in E}\left(1-u^2c^2(i,j)\right)
\times\det\left(\Id
+\frac{1}{2} u\bT(1+u\btau)^{-1}\bS\right).\]
Recalling $u\bT(1+u\btau)^{-1}\bS=2\bD_{c,u}-2\bA_{c,u}$, this
establishes the result for generic $u$. The conclusion for all $|u| \notin
\{|c(i,j)|^{-1}:\{i,j\} \in E\}$ then follows by continuity.

\subsection{Proof of Lemma \ref{lemma:rhoB}}\label{subsec:rhoB}
We bound $\rho(\bB^{2m}) \leq \Tr \bB^m(\bB^{\sT})^m$ for some $m:=m(n)$,
and apply the moment method to bound the latter quantity. Throughout this
section, ``edge'' and ``graph'' mean undirected edge and undirected graph.
We begin with some combinatorial definitions:
\begin{definition}\label{def:cyclenumber}
The {\bf cycle number} of a graph $H$, denoted $\numcycle(H)$,
is the minimum number of 
edges that must be removed from $H$ so that the resulting graph has no cycles.
(If $H$ has $k$ connected components, $v$ vertices, and $e$ edges, then
$\numcycle(H)=e+k-v$.)
\end{definition}

\begin{definition}\label{def:lcoil}
For any $l \geq 1$, an $\pmb{l}${\bf-coil} is any connected graph with at most
$l$ edges and at least two cycles.
A graph $H$ is $\pmb{l}${\bf-coil-free} if $H$ contains no $l$-coils,
i.e.\ every connected subset of at most $l$ edges in $H$ contains at most one
cycle.
\end{definition}

\begin{definition}
A sequence of vertices $\gamma_0,\ldots,\gamma_m \in \{1,\ldots,n\}$ is
a {\bf non-backtracking path} of length $m$ (on the complete graph) if
$\gamma_1 \neq \gamma_0$ and $\gamma_{j+1} \notin \{\gamma_{j-1},\gamma_j\}$ for
each $j=1,\ldots,m-1$. $\gamma$ {\bf visits} the vertices 
$\gamma_0,\ldots,\gamma_m$ and the edges $\{\gamma_0,\gamma_1\},\ldots,
\{\gamma_{m-1},\gamma_m\}$. $\gamma$ is $\pmb{l}${\bf-coil-free} if the
subgraph of edges visited by $\gamma$ is $l$-coil-free. For any
$K \subseteq \{0,\ldots,m-1\}$, $\gamma$ {\bf visits on} $\pmb{K}$
the edges $\{\gamma_j,\gamma_{j+1}\}:j \in K$, and $\gamma$ is
$\pmb{l}${\bf-coil-free on} $\pmb{K}$ if the subgraph formed by these edges is
$l$-coil-free.
\end{definition}

The definition of $l$-coil-free is similar to (and more convenient for our
proof than) that of $l$-tangle-free in
\cite{bordenaveetal} and \cite{mossel2013proof}, which states that every ball
of radius $l$ in $H$ contains at most one cycle. Clearly if $H$ has an $l$-coil,
then the ball of radius $l$ around any vertex in this $l$-coil has two cycles.
Hence if $H$ is $l$-tangle-free in the sense of \cite{bordenaveetal}, then it
is also $l$-coil-free, which yields the following lemma:

\begin{lemma}\label{lemma:coilfree}
Let $d>1$ and consider the \ER graph $G \sim \sG(n,d/n)$. For some absolute
constant $C>0$ and any $l \geq 1$,
$\P[G \text{ is } l\text{-coil-free}] \geq 1-Cd^{2l}/n$.
\end{lemma}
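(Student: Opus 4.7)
The plan is to apply a union bound $\P[G \text{ not } l\text{-coil-free}] \leq \E[\text{number of coil-subgraphs of }G]$ and to enumerate coil-subgraphs via their kernel. Recall that any connected graph $H$ with cycle number $c\ge 2$ is uniquely described by a multigraph kernel $K$---the minimum-degree-$\ge 3$ multigraph obtained by iteratively deleting degree-$1$ vertices and suppressing degree-$2$ vertices of $H$---together with positive integer lengths $(a_1,\ldots,a_{e_K})$ indicating the subdivision of each edge of $K$. For cycle number exactly $c=2$, there are only three kernel topologies: the theta (two vertices joined by three parallel edges), the figure-eight (one vertex with two self-loops), and the handcuff (two vertices, each with a self-loop, joined by an edge), all satisfying $e_K-v_K=1$.

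Given a kernel $K$ with $v_K$ vertices and $e_K$ edges, the number of embeddings of a corresponding subdivided $K$ into $K_n$ with total edge count $e=\sum a_i$ is at most $n^{v_K}\cdot n^{e-e_K}$, obtained by choosing the image of the kernel vertices and then internal vertices of each subdivided path. Each such embedding appears in $G\sim \sG(n,d/n)$ with probability $(d/n)^e$, contributing $d^e/n^{e_K-v_K}=d^e/n^{c-1}$ in expectation. Summing over subdivision tuples with $\sum a_i\le l$ (at most $l^{e_K}$ of them), the contribution of kernels with cycle number $c=2$ is $O(l^3 d^l/n)$, whereas kernels with $c\ge 3$ contribute $O(l^{O(c)} d^l/n^{c-1})$ and are negligible.

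To finish, since $d>1$ is fixed, the ratio $l^3/d^l$ is bounded by a constant $C=C(d)$ uniformly in $l\ge 1$ (it attains a maximum at finite $l$ and tends to $0$ at infinity), so $l^3 d^l\le C\, d^{2l}$, and the union bound gives $\P[G\text{ is not }l\text{-coil-free}]\le C\, d^{2l}/n$, as claimed. The main obstacle is the combinatorial bookkeeping needed to enumerate kernels of higher cycle number and verify their subdominance; a conceptually cleaner alternative is to observe that every coil contains as a subgraph one of the three minimal $2$-cycle configurations (theta, figure-eight, or handcuff) with at most $l$ edges, and to directly enumerate these three types by their path lengths, each yielding expected count $O(l^3 d^l/n)$ by the same vertex-choice argument.
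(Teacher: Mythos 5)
Your proof is correct but takes a genuinely different route. The paper's proof is essentially a citation: it observes that an $l$-coil has at most $l$ edges and hence diameter at most $l$, so it sits entirely inside the ball of radius $l$ around any of its vertices; consequently $l$-tangle-free (in the sense of Bordenave--Lelarge--Massouli\'e, i.e.\ every radius-$l$ ball contains at most one cycle) implies $l$-coil-free, and the probability bound is imported directly from their Lemma~30. You instead give a self-contained first-moment computation. Your cleaner ``alternative'' at the end is the one you should actually rely on: every $l$-coil contains a minimal connected subgraph of cycle number exactly two with minimum degree $\geq 2$ and at most $l$ edges, whose kernel (suppress degree-$2$ vertices) is a connected multigraph with $e_K-v_K=1$ and minimum degree $\geq 3$, forcing $v_K\leq 2$ and hence one of theta, figure-eight, or handcuff. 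The vertex/edge count then gives expected count $O(l^3d^l/n)$ over subdivision tuples with $\sum a_i\leq l$, and a union bound finishes. This avoids the citation and in fact gives a sharper $l$-exponent ($d^l$ rather than $d^{2l}$).

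One point to flag: in converting $l^3 d^l$ to $Cd^{2l}$ you take $C=\sup_{l\geq 1} l^3/d^l$, which is finite for each fixed $d>1$ but diverges as $d\downarrow 1$ (the maximizing $l^*$ is of order $1/\log d$). So your constant is $C(d)$, not the ``absolute constant'' asserted in the lemma statement. This does not affect any use of the lemma in the paper---in the proof of Lemma~\ref{lemma:rhoB}, $d$ is fixed and $l\sim(\log\log n)^3$, so $C(d)d^{2l}/n\to 0$ regardless---but it is a genuine, if minor, mismatch with the literal statement. If one insists on an absolute constant, one should either keep the bound in the form $C'l^3d^l/n$ (with $C'$ absolute) or use the paper's reduction to tangle-freeness and the constants from Bordenave--Lelarge--Massouli\'e.
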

\begin{proof}
This is proven for $G$ being $l$-tangle-free in \cite[Lemma 30]{bordenaveetal};
hence the result follows the above remark.
\end{proof}

Our moment method computation will draw on the following two technical lemmas,
whose proofs we defer to Appendix \ref{app:combinatorics}.

\begin{lemma}\label{lemma:alternatingprobbound}
Let $d>1$ and let $\bA$ be the adjacency matrix of the \ER graph
$G \sim \sG(n,d/n)$.
Let $E:=\{\{v,w\}:v,w \in \{1,\ldots,n\},v \neq w\}$ be the edges of
the complete graph on $n$ vertices, let $S,T \subseteq E$ be any disjoint edge
sets such that $|S|,|T| \leq (\log n)^2$, and let $\numcycle(S \cup
T)$ be the cycle number of the subgraph formed by the edges $S \cup T$.
Let $l$ be a positive integer with $l \leq 0.1 \log_d n$. Then for some
$C:=C(d)>0$, $N_0:=N_0(d)>0$, and all $n \geq N_0$,
\begin{align*}
&\left|\E\left[\prod_{\{v,w\} \in S} \left(\bA_{vw}-\frac{d}{n}\right)
\prod_{\{v,w\} \in T} \bA_{vw}\1\{G \text{\rm{ is }}
l\text{\rm-coil-free}\}\right]\right|\\
&\hspace{1in}\leq C(\log n)^2
\left(\frac{d}{n}\right)^{|S|+|T|}2^{|S|}n^{-0.7\left(
\frac{|S|}{l}-\#_c(S \cup T)\right)}.
\end{align*}
\end{lemma}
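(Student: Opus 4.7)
My plan is to exploit mean-zero cancellation of $X_e:=\bA_{vw}-d/n$ for $\{v,w\}\in S$. Since the edges of $G$ are independent Bernoullis and $\E[X_e]=0$, the product $\prod_S X_e \prod_T \bA_e$ has zero expectation when $|S|\ge 1$, so
$$\E\Bigl[\prod_S X_e \prod_T \bA_e\cdot I\Bigr] = -\E\Bigl[\prod_S X_e \prod_T \bA_e\cdot (1-I)\Bigr],$$
where $I=\1\{G$ is $l$-coil-free$\}$ and $1-I=\1\{G$ contains an $l$-coil$\}$. (For $|S|=0$ the trivial bound $|\E[\cdots]|\le p^{|T|}$ already suffices, since the claimed exponent of $n$ is then nonnegative.) Passing to absolute values inside, applying the union bound $1-I\le\sum_{H\,l\text{-coil}}\1\{H\subseteq G\}$, and factorizing each term over independent edges via $\E[|X_e|]=2p(1-p)$, $\E[|X_e|\bA_e]=p(1-p)$, and $\bA_e^2=\bA_e$, each summand is bounded by $2^{|S|}(d/n)^{|S|+|T|+c'(H)}$, where $c'(H):=|H\setminus(S\cup T)|$. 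The lemma thus reduces to the combinatorial estimate
$$\sum_{H\,l\text{-coil in }K_n}(d/n)^{c'(H)}\;\le\;C(\log n)^2\,n^{-0.7(|S|/l-\#_c(S\cup T))}.$$

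To bound this sum I would decompose $H=H_0\cup R$ with $H_0:=H\cap(S\cup T)$ and $R:=H\setminus(S\cup T)$. Let $v_0$, $k_0$, $c_0$ denote the vertex count, number of components, and cycle number of $H_0$; for fixed $H_0$, the number of extensions $R$ with $|R|=r$ and $\Delta$ new vertices is at most $n^\Delta\cdot l^{2r}$. Connectedness of $H$ forces $r\ge k_0+\Delta-1$, while the two-cycle condition forces $v_0+\Delta\le |H_0|+r-1$; combining these gives $\Delta-r\le 1-k_0-\max(0,2-c_0)$. Summing the resulting geometric series in $r$ yields, for each fixed $H_0$,
$$\sum_R (d/n)^{|R|}\,\1\{H_0\cup R\text{ is an }l\text{-coil}\}\;\le\;\mathrm{poly}(d,l)\cdot n^{1-k_0-\max(0,2-c_0)}.$$
Summing over $H_0\subseteq S\cup T$, the dominant contributions come from connected $H_0$ with $c_0\ge 2$, each contributing $n^0$; the number of such subgraphs is controlled by $\#_c(S\cup T)$ and accounts for the $n^{0.7\,\#_c(S\cup T)}$ factor in the claimed bound.

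The main obstacle is extracting the $n^{-0.7\,|S|/l}$ decay. The key observation is that every $l$-coil has at most $l$ edges and hence at most $l$ edges of $S$; splitting the sum over $H_0$ according to $a:=|H_0\cap S|$ with $a\le l$, using $\binom{|S|}{a}\le|S|^a$, and accounting for the $|S|-a$ edges of $S$ that must lie outside $H_0$ (which must be balanced against the saturation of $S$ by coils) yields the claimed $(|S|/l)$-scaling via a packing argument analogous to those in \cite{bordenaveetal}. The constant $0.7$ (rather than $1$) is slack absorbing the subpolynomial factors $l^{O(l)}$, $d^{O(l)}$, and $(\log n)^{O(l)}$, which are $n^{o(1)}$ by the hypothesis $l\le 0.1\log_d n$; the $(\log n)^2$ prefactor absorbs small combinatorial losses from the bounds $|S|,|T|\le(\log n)^2$.
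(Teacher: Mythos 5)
Your opening step is correct: $\E\bigl[\prod_S X_e\prod_T\bA_e\bigr]=0$ when $|S|\ge 1$, so the quantity in question equals $-\E\bigl[\prod_S X_e\prod_T\bA_e\,(1-I)\bigr]$. But the next two moves --- first passing to absolute values, then applying the union bound $1-I\le\sum_{H}\1\{H\subseteq G\}$ over individual $l$-coils $H$ --- discard exactly the structure that produces the $n^{-0.7|S|/l}$ decay, and your claimed combinatorial reduction is in fact false. Once you take absolute values, each $e\in S\setminus H$ contributes $\E|X_e|\asymp 2d/n$ instead of $\E X_e = 0$; and the union bound only records that \emph{some} small coil exists somewhere in $G$, which is a constraint completely decoupled from the bulk of $S$. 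Concretely, take $T=\emptyset$ and $S$ a simple path of $|S|=l^2$ edges, so $\numcycle(S\cup T)=0$ and your reduction would require $\sum_{H}(d/n)^{c'(H)}\le C(\log n)^2 n^{-0.7\,l}$. But a single $l$-coil $H$ on 4 vertices disjoint from $S$ (e.g.\ a theta graph with $5$ edges) already contributes on the order of $n^4\cdot(d/n)^5\asymp d^5/n$ to the left-hand side, which is vastly larger than $n^{-0.7l}$ for $l\ge 2$. So the inequality you reduce to does not hold. Your closing paragraph about ``saturation of $S$ by coils'' gestures at the right heuristic --- you need every edge of $S$ to participate in a coil --- but the union bound is precisely the step that forgets this requirement, and invoking a ``packing argument analogous to \cite{bordenaveetal}'' does not repair it.

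The paper keeps the alternating structure intact to encode exactly this ``every edge of $S$ is essential'' requirement. It first conditions on edge statuses to factor out $(d/n)^{|S|+|T|}$ and reduces to bounding $|\E[f(G^o)]|$ where
\[
f(g^o)=\sum_{J\subseteq S}(-1)^{|J|}\,\1\{g^o\cup J\cup T\text{ is $l$-coil-free}\},
\]
a signed inclusion--exclusion over subsets $J\subseteq S$, \emph{not} over coils. The key step is a pairing cancellation $J\leftrightarrow J\cup\{e\}$: if there is even one $e\in S$ such that $g^o\cup S\cup T$ contains no irreducible $l$-coil through $e$, then $f(g^o)=0$. Thus $f(g^o)\ne 0$ forces every edge of $S$ to lie in an irreducible $l$-coil, which by Lemma~\ref{lemma:cyclecoilnumber} forces $\numcycle(\Ball_l(V;g^o\cup S\cup T))\ge |S|/l$; the decay then comes from Lemma~\ref{lemma:cyclebound}, a breadth-first exploration bound showing that an \ER graph is very unlikely to have so many excess edges near $S\cup T$. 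This is where the $|S|/l$ appears, and it is structurally unavailable after your union bound. To salvage your approach you would need to replace the union bound over single coils by an argument that keeps track of \emph{all} the unused edges of $S$ simultaneously, which essentially reconstructs the $f(g^o)$ pairing argument.
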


\begin{lemma}\label{lemma:equivclassbound}
For $m,l,v,e \geq 1$ and $K_1,K_2 \subseteq \{0,\ldots,m-1\}$,
let $W(m,l,v,e,K_1,K_2)$ denote the set of all ordered pairs
$(\gamma^{(1)},\gamma^{(2)})$ of non-backtracking paths of length $m$ (on the
complete graph with $n$ vertices), such that each
$\gamma^{(i)}$ is $l$-coil-free on $K_i$, $\gamma^{(1)}_0=\gamma^{(2)}_0$,
$\gamma^{(1)}_m=\gamma^{(2)}_m$, and $(\gamma^{(1)},\gamma^{(2)})$ visit a total
of $v$ distinct vertices and $e$ distinct edges. Call two such pairs of paths
equivalent if they are the same up to a relabeling of the
vertices, and let $\cW(m,l,v,e,K_1,K_2)$ be the set of all
equivalence classes under this relation. Then the number of distinct equivalence
classes satisfies the bound
\[|\cW(m,l,v,e,K_1,K_2)| \leq \left(l(3v^2)^{2l+2}\right)^{2m-|K_1|-|K_2|}
\left(l(3v^2)^{2e-2v+4}\right)^{\frac{2m}{l}+2}.\]
\end{lemma}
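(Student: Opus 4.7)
My plan is to bound $|\cW(m,l,v,e,K_1,K_2)|$ by explicitly encoding each equivalence class and counting the number of possible encodings, in the style of moment-method arguments for non-backtracking walks (cf.\ \cite{bordenaveetal,mossel2013proof}). Each equivalence class admits a unique canonical representative in which the $v$ vertices are labeled $1,\ldots,v$ in their order of first appearance across $(\gamma^{(1)},\gamma^{(2)})$ (traversing $\gamma^{(1)}$ first, with the common start vertex labeled $1$), so counting labeled pairs under this convention is equivalent to counting $|\cW|$.

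I would classify each of the $2m$ steps of the combined walk into one of three types: \emph{tree} steps (visiting a new vertex, hence a new edge; exactly $v-1$ such steps), \emph{excess} steps (traversing a new edge to an already-visited vertex; exactly $e-v+1=\numcycle$ such steps, using that the combined graph is connected because $\gamma^{(1)}_0=\gamma^{(2)}_0$ and $\gamma^{(1)}_m=\gamma^{(2)}_m$), and \emph{retracing} steps (traversing a previously visited edge; the remaining $2m-e$ steps). Tree steps contribute no encoding ambiguity because the target vertex is the next unused label. Each non-tree step nominally contributes at most $v$ choices for the target vertex, giving a naive bound $v^{2m-v+1}$ that is much too large.

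To obtain the stated bound, I would decompose the step indices $\{0,\ldots,m-1\}$ of each $\gamma^{(i)}$ into consecutive blocks of length at most $l$, yielding at most $2m/l+2$ blocks across both paths. Call a block \emph{clean} if all its steps lie in $K_i$ and \emph{dirty} otherwise; the number of dirty blocks is at most $2m-|K_1|-|K_2|$, since a dirty block must contain at least one step outside $K_i$. For a clean block, the edges traversed form a connected sub-subgraph of the subgraph $H_i$ of edges visited by $\gamma^{(i)}$ on $K_i$, with at most $l$ edges; by the $l$-coil-free hypothesis on $H_i$, this sub-subgraph has at most one cycle. This should bound the encoding cost per clean block by $l(3v^2)^{2\numcycle+2}$, where the $(3v^2)^{2\numcycle+2}$ factor handles specifying which of the $\leq \numcycle$ globally available cycle-closing edges are activated (each an ordered pair among $\leq 3v^2$) together with endpoint bookkeeping at block boundaries, and the factor $l$ locates the cycle-closing event within the block. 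For a dirty block, no coil-free restriction applies and the block may contain up to $l$ cycle-closing events, each bounded analogously, yielding $l(3v^2)^{2l+2}$ per dirty block.

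The main obstacle is rigorously translating the global $l$-coil-free hypothesis on $H_i$ into the per-clean-block encoding bound. The subtlety is that a clean block may re-enter cycles established in earlier blocks, so the count of cycles encountered within one block is not directly bounded by $1$; the useful quantity is the number of cycle-closing edges that are \emph{first introduced} in the block, and the $l$-coil-free property needs to be unpacked to show that each such introduction is locally constrained while the total across clean blocks is at most $\numcycle$. Handling this bookkeeping carefully — in particular, threading the identity of the walk's current location at block boundaries through the encoding so that no structure is double-counted — is the technical core of the argument. Once it is in place, multiplying the per-block bounds over the $\leq 2m/l+2$ blocks with at most $2m-|K_1|-|K_2|$ dirty ones yields the stated inequality.
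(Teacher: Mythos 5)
Your high-level plan --- partition each path's index set $\{0,\ldots,m-1\}$ into consecutive blocks of length at most $l$, note that the number of dirty blocks is at most $2m-|K_1|-|K_2|$ and the total number of blocks is at most $2m/l+2$, and use $l$-coil-freeness to conclude each clean block's visited edge set has cycle number at most one --- is the same decomposition the paper uses. (The paper instead partitions each $K_i$ into maximal consecutive intervals and removes every $l$th index from the long ones, but the two bookkeeping schemes yield the same exponents.)

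The genuine gap is the step you flag as "the technical core": you do not exhibit an injective per-clean-block encoding with at most $l(3v^2)^{2(e-v+1)+2}$ codewords, and the sketch you offer does not amount to one. Your diagnosis of the obstacle is also not quite right. A clean block cannot "encounter" more than one cycle: its visited edges form a connected subgraph of $H_i$ with at most $l$ edges, so $l$-coil-freeness directly bounds its cycle number by one; there is no need to worry about "re-entering cycles established in earlier blocks." What is actually hard lies elsewhere: (i) a clean block can contain several steps that traverse \emph{global} excess edges (edges outside the innovation tree built over the whole combined walk) even while its own edge set is unicyclic, and each such step must be located and its endpoints recorded; (ii) between such events the walk moves non-backtrackingly on the already-built part of the innovation tree and must be reconstructible from its endpoints alone; and (iii) the walk may loop repeatedly around the block's unique cycle, and these loops must be summarized by a count rather than recorded step-by-step. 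The paper resolves all three via the classification of steps into leading innovations, short-cycling, long-cycling, and superfluous, using the crucial consequence of unicyclicity that once the walk enters the block's cycle it is forced to circulate and cannot exit and re-enter; this caps the number of events to be recorded at $2\min(e-v+1,l)+1$ and accounts for the single factor of $l$ (the number of superfluous loops). Without an argument of that shape, neither the injectivity of your encoding nor the claimed per-clean-block exponent is established.
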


Using the above results, we prove Lemma \ref{lemma:rhoB}:
\begin{proof}[Proof of Lemma \ref{lemma:rhoB}]
Let $m:=m(n) \geq 1$ with $m =o((\log n)^2)$, to be specified later. Denote
\begin{align*}
T_m:=\Tr \bB^m(\bB^{\sT})^m&=\sum_{e_1,\ldots,e_{2m}} \prod_{j=1}^m
\bB_{e_je_{j+1}}\prod_{j=1}^m (\bB^{\sT})_{e_{m+j}e_{m+j+1}}\\
&=\sum_{e_1,\ldots,e_{2m}} \prod_{j=1}^m
\bB_{e_je_{j+1}}\prod_{j=1}^m \bB_{e_{m+j+1}e_{m+j}}
\end{align*}
where the summation runs over all possible tuples of ordered vertex pairs
$e_j \in \{(v,w):v,w \in \{1,\ldots,n\},v \neq w\}$, and where $e_{2m+1}:=e_1$.
By the definition of $\bB$, a term of the above sum corresponding to
$e_1,\ldots,e_{2m}$ is 0 unless
\[e_1,\ldots,e_{m+1}=(\tilde{\gamma}_0^{(1)},\tilde{\gamma}_1^{(1)}),\ldots,
(\tilde{\gamma}_m^{(1)},\tilde{\gamma}_{m+1}^{(1)}),\]
\[e_{2m+1},\ldots,e_{m+1}=(\tilde{\gamma}_0^{(2)},\tilde{\gamma}_1^{(2)}),
\ldots,(\tilde{\gamma}_m^{(2)},\tilde{\gamma}_{m+1}^{(2)})\]
for two non-backtracking paths $\tilde{\gamma}^{(1)}$ and
$\tilde{\gamma}^{(2)}$ of length $m+1$ on the complete graph, such that
$(\tilde{\gamma}^{(1)}_0,\tilde{\gamma}^{(1)}_1)
=(\tilde{\gamma}^{(2)}_0,\tilde{\gamma}^{(2)}_1)$ and
$(\tilde{\gamma}^{(1)}_m,\tilde{\gamma}^{(1)}_{m+1})=(\tilde{\gamma}^{(2)}_m,
\tilde{\gamma}^{(2)}_{m+1})$. Letting $\tilde{\Gamma}_m$ denote the set of all
pairs of such paths,
\[T_m=\sum_{(\tilde{\gamma}^{(1)},\tilde{\gamma}^{(2)}) \in \tilde{\Gamma}_m}
\prod_{j=1}^m \left(\bA_{\tilde{\gamma}^{(1)}_j\tilde{\gamma}^{(1)}_{j+1}}
-\frac{d}{n}\right)\prod_{j=1}^m \left(\bA_{\tilde{\gamma}^{(2)}_j
\tilde{\gamma}^{(2)}_{j+1}}-\frac{d}{n}\right).\]
(The above products do not include the $j=0$ terms corresponding to the
first edge of each path.)
Writing $(\gamma^{(i)}_0,\ldots,\gamma^{(i)}_m):=(\tilde{\gamma}^{(i)}_1,\ldots,
\tilde{\gamma}^{(i)}_{m+1})$ to remove the first vertex of each path,
and letting $\Gamma_m$ denote the set of pairs $(\gamma^{(1)},\gamma^{(2)})$
where each $\gamma^{(i)}$ is a non-backtracking path of length $m$ and such that
$\gamma^{(1)}_0=\gamma^{(2)}_0$ and $(\gamma^{(1)}_{m-1},\gamma^{(1)}_m)
=(\gamma^{(2)}_{m-1},\gamma^{(2)}_m)$, the above may be written (by summing over
$\tilde{\gamma}^{(1)}_0=\tilde{\gamma}^{(2)}_0$) as
\[T_m=\sum_{(\gamma^{(1)},\gamma^{(2)}) \in \Gamma_m}\left(n-\left|\left\{
\gamma^{(1)}_0,\gamma^{(1)}_1,\gamma^{(2)}_1\right\}\right|\right)
\prod_{j=0}^{m-1} \left(\bA_{\gamma_j^{(1)}\gamma_{j+1}^{(1)}}
-\frac{d}{n}\right)\prod_{j=0}^{m-1}
\left(\bA_{\gamma_j^{(2)}\gamma_{j+1}^{(2)}}-\frac{d}{n}\right).\]

For each edge $\{v,w\}$ in the complete graph, call $\{v,w\}$ single if it is
visited exactly once by the pair of paths $(\gamma^{(1)},\gamma^{(2)})$.
For each $i=1,2$, denote
\[J_i:=J_i(\gamma^{(1)},\gamma^{(2)})
=\{j \in \{0,\ldots,m-1\}:\{\gamma^{(i)}_j,\gamma^{(i)}_{j+1}\}
\text{ is single}\},\]
and write $J_i^c=\{0,\ldots,m-1\} \setminus J_i$.
Distributing the products over $J_1^c$ and $J_2^c$, the above may be written as
\begin{align}
T_m&=\sum_{(\gamma^{(1)},\gamma^{(2)}) \in \Gamma_m}
\sum_{K_1 \subseteq J_1^c} \sum_{K_2 \subseteq J_2^c} \left(n-\left|\left\{
\gamma^{(1)}_0,\gamma^{(1)}_1,\gamma^{(2)}_1\right\}\right|\right)
\left(-\frac{d}{n}\right)^{|J_1^c|-|K_1|+|J_2^c|-|K_2|}\nonumber\\
&\hspace{1in}\prod_{j \in J_1}
\left(\bA_{\gamma_j^{(1)}\gamma_{j+1}^{(1)}}-\frac{d}{n}\right)
\prod_{j \in K_1} \bA_{\gamma_j^{(1)}\gamma_{j+1}^{(1)}}
\prod_{j \in J_2}
\left(\bA_{\gamma_j^{(2)}\gamma_{j+1}^{(2)}}-\frac{d}{n}\right)
\prod_{j \in K_2} \bA_{\gamma_j^{(2)}\gamma_{j+1}^{(2)}}.
\label{eq:Tmexpansion}
\end{align}

Let $l:=l(n) \geq 1$ with $l \ll \log n$, to be specified later,
and let $\cE$ be the event that $G$ is $l$-coil-free.
We multiply (\ref{eq:Tmexpansion}) on both sides by $\1\{\cE\}$
and apply Lemma \ref{lemma:alternatingprobbound}.
On $\cE$, the only nonzero terms of the sum in
(\ref{eq:Tmexpansion}) are those where $\gamma^{(1)}$ is $l$-coil-free on $K_1$
and $\gamma^{(2)}$ is $l$-coil-free on $K_2$. For each such term,
denote by $S$ the set of single edges and by $T$ the set of all edges visited 
by $\gamma^{(1)}$ on $K_1$ and by $\gamma^{(2)}$ on $K_2$. Then
\begin{align*}
&\prod_{j \in J_1}
\left(\bA_{\gamma_j^{(1)}\gamma_{j+1}^{(1)}}-\frac{d}{n}\right)
\prod_{j \in K_1} \bA_{\gamma_j^{(1)}\gamma_{j+1}^{(1)}}
\prod_{j \in J_2}
\left(\bA_{\gamma_j^{(2)}\gamma_{j+1}^{(2)}}-\frac{d}{n}\right)
\prod_{j \in K_2} \bA_{\gamma_j^{(2)}\gamma_{j+1}^{(2)}}\\
&\hspace{1in}=\prod_{\{v,w\} \in S} \left(\bA_{vw}-\frac{d}{n}\right)
\prod_{\{v,w\} \in T} \bA_{vw}.
\end{align*}
Note that $|S|=|J_1|+|J_2|$. If $(\gamma^{(1)},\gamma^{(2)})$ visit
$e:=e(\gamma^{(1)},\gamma^{(2)})$ total
distinct edges, then $e-|J_1|-|J_2|$ of these are non-single and hence
potentially visited by $\gamma^{(1)}$ on $K_1$ and $\gamma^{(2)}$ on
$K_2$. For each such edge \emph{not} visited by $\gamma^{(1)}$ on $K_1$ and
$\gamma^{(2)}$ on $K_2$, there must be at least two indices corresponding to
this edge in $J_1^c \setminus K_1$ and $J_2^c \setminus K_2$. Hence
$|T| \geq e-|J_1|-|J_2|-\frac{|J_1^c|-|K_1|+|J_2^c|-|K_2|}{2}$. The cycle number
$\#_c(S \cup T)$ is at most the cycle number of the graph formed by all edges
visited by $(\gamma^{(1)},\gamma^{(2)})$, which is $e-v+1$ if
$(\gamma^{(1)},\gamma^{(2)})$ visit $v:=v(\gamma^{(1)},\gamma^{(2)})$
total distinct vertices. Combining these observations and applying Lemma
\ref{lemma:alternatingprobbound}, for a constant $C>0$ and all large $n$,
\[\E[T_m\1\{\cE\}]
\leq Cn(\log n)^2\sum_{\gamma^{(1)},\gamma^{(2)},K_1,K_2}
\left(\frac{d}{n}\right)^{e+\frac{|J_1^c|-|K_1|+|J_2^c|-|K_2|}{2}}
n^{0.7(e-v+1)}\left(\frac{2}{n^{0.7/l}}\right)^{|J_1|+|J_2|},\]
where the summation is over all $(\gamma^{(1)},\gamma^{(2)}) \in \Gamma_m$ and
$K_1 \subseteq J_1^c$ and $K_2 \subseteq J_2^c$ such that $\gamma^{(i)}$ is
$l$-coil-free on $K_i$ for $i=1,2$, and where $e,v,J_1,J_2$ all depend on the
paths $\gamma^{(1)}$ and $\gamma^{(2)}$.
As $\gamma^{(1)}$ and $\gamma^{(2)}$ are of length $m$, this implies
$2(e-|J_1|-|J_2|)+|J_1|+|J_2| \leq 2m$, so
$e \leq m+(|J_1|+|J_2|)/2$. Then, since $d>1$,
\begin{align*}
\E[T_m\1\{\cE\}] \leq Cn(\log n)^2d^m
\sum_{\gamma^{(1)},\gamma^{(2)},K_1,K_2} n^{-e}
\left(\frac{d}{n}\right)^{\frac{|J_1^c|-|K_1|+|J_2^c|-|K_2|}{2}}
n^{0.7(e-v+1)}\left(\frac{2\sqrt{d}}{n^{0.7/l}}\right)^{|J_1|+|J_2|}.
\end{align*}

Let us now drop the condition that $J_i$ corresponds to indices where
$\{\gamma_j^{(i)},\gamma_{j+1}^{(i)}\}$ is single, and instead sum over all
subsets $J_1,J_2 \subseteq \{0,\ldots,m-1\}$, all subsets $K_1 \subseteq J_1^c$
and $K_2 \subseteq J_2^c$, and all paths $(\gamma^{(1)},\gamma^{(2)}) \in
\Gamma_m$ such that $\gamma^{(i)}$ is $l$-coil-free on $K_i$ for $i=1,2$.
Letting $\cW(m,l,v,e,K_1,K_2)$ be as in Lemma \ref{lemma:equivclassbound}, and
noting that each class $\cW(m,l,v,e,K_1,K_2)$ represents at most
$n^v$ distinct pairs of paths, this yields
\begin{align*}
\E[T_m\1\{\cE\}]&\leq Cn(\log n)^2d^m \sum_{J_1,J_2,K_1,K_2} \sum_{v=2}^{2m}
\sum_{e=v-1}^{2m} n^{-e}
\left(\frac{d}{n}\right)^{\frac{|J_1^c|-|K_1|+|J_2^c|-|K_2|}{2}}\\
&\hspace{1in}n^{0.7(e-v+1)}\left(\frac{2\sqrt{d}}{n^{0.7/l}}
\right)^{|J_1|+|J_2|}n^v|\cW(m,l,v,e,K_1,K_2)|\\
&\leq Cn^2(\log n)^2d^m \sum_{v=2}^{2m}\sum_{e=v-1}^{2m}
\sum_{J_1,J_2,K_1,K_2} n^{-0.3(e-v+1)}
\left(\frac{d}{n}\right)^{\frac{|J_1^c|-|K_1|+|J_2^c|-|K_2|}{2}}\\
&\hspace{1in}\left(\frac{2\sqrt{d}}{n^{0.7/l}}\right)^{|J_1|+|J_2|}
\left(l(3v^2)^{2l+2}\right)^{2m-|K_1|-|K_2|}\left(l(3v^2)^{2e-2v+4}
\right)^{\frac{2m}{l}+2}\\
&=Cn^2(\log n)^2d^m \sum_{v=2}^{2m}\sum_{e=v-1}^{2m}
\sum_{J_1,J_2,K_1,K_2} \left(l(3v^2)^2\right)^{\frac{2m}{l}+2}
\left(\frac{(3v^2)^{\frac{4m}{l}+4}}{n^{0.3}}\right)^{e-v+1}\\
&\hspace{1in}\left(\frac{\sqrt{d}\,l(3v^2)^{2l+2}}{\sqrt{n}}
\right)^{|J_1^c|-|K_1|+|J_2^c|-|K_2|}
\left(\frac{2\sqrt{d}\,l(3v^2)^{2l+2}}{n^{0.7/l}}\right)^{|J_1|+|J_2|},
\end{align*}
where the summations are over $J_1,J_2 \subseteq \{0,\ldots,m-1\}$ and $K_i
\subseteq J_i^c$ for $i=1,2$, and in the last line above we have
written $2m=|J_1|+|J_2|+|J_1^c|+|J_2^c|$ and
collected terms with common exponents. Factoring the summations over
$J_1,J_2,K_1,K_2$, the above is equivalent to
\begin{align*}
\E[T_m\1\{\cE\}]
&\leq Cn^2(\log n)^2d^m \sum_{v=2}^{2m}\sum_{e=v-1}^{2m}
\left(l(3v^2)^2\right)^{\frac{2m}{l}+2}
\left(\frac{(3v^2)^{\frac{4m}{l}+4}}{n^{0.3}}\right)^{e-v+1}\\
&\hspace{2in}\left(1+\frac{\sqrt{d}\,l(3v^2)^{2l+2}}{\sqrt{n}}
+\frac{2\sqrt{d}\,l(3v^2)^{2l+2}}{n^{0.7/l}}\right)^{2m}.
\end{align*}

Finally, let us take $l \sim (\log\log n)^3$ and $m \sim (\log n)(\log \log n)$.
Then for $v \leq 2m$, we may verify
\[m\sqrt{d}\,l(3v^2)^{2l+2} \ll n^{0.7/l} \ll \sqrt{n},\;\;
(3v^2)^{\frac{4m}{l}+4} \ll n^{0.2}.\]
So for some $C'>0$ and all large $n$,
\[\E[T_m\1\{\cE\}] \leq C'n^2(\log n)^2d^m\sum_{v=2}^{2m}
(l(3v^2)^2)^{\frac{2m}{l}+2} \leq C'n^2(\log n)^2d^m
\cdot 2m(144lm^4)^{\frac{2m}{l}+2}.\]
We may verify, for any $\eps>0$,
\[n^2(\log n)^2 \cdot 2m(144lm^4)^{\frac{2m}{l}+2} \ll (1+\eps)^{2m}.\]
Noting that $\rho(\bB)^{2m} \leq T_m$, we obtain by Markov's inequality
\[\P\left[\rho(\bB) \geq \sqrt{d}(1+\eps), G \text{ is }
l\text{-coil-free}\right]
\leq \frac{\E[T_m\1\{\cE\}]}{d^m(1+\eps)^{2m}} \to 0.\]
Lemma \ref{lemma:coilfree} yields
$\P[G \text{ is not } l\text{-coil-free}] \to 0$, establishing
the desired result.
\end{proof}

\section{Lower bounds: Theorems \ref{thm:ER}, \ref{thm:Harmonic},
\ref{thm:SBM}, and \ref{thm:SBMlocal}}\label{sec:Lower}

In this section, we prove the lower bound in Theorem \ref{thm:ER}, and also
Theorems \ref{thm:Harmonic}, \ref{thm:SBM}, and \ref{thm:SBMlocal}.
We will begin by reviewing some definitions and facts on local weak convergence in Section
\ref{sec:LWC}, describing the sense in which the graphs $\sG(n,d/n)$ and
$\sG(n,a/n,b/n)$ converge locally to Galton-Watson trees. We will then reduce
the lower bounds to the construction of local algorithms on such trees, stated
as a sequence of lemmas in Section \ref{sec:KeyLemmas}, and finally turn to the
proofs of these lemmas.
\subsection{Local weak convergence: Definitions }
\label{sec:LWC}

To accommodate notationally both the \ER model and the stochastic-block-model,
let $\cM$ denote a general finite set of possible vertex labels.
In the \ER case, we will simply take the trivial set $\cM=\{1\}$; for the
stochastic-block-model with partially revealed labels,
we will take $\cM=\{+1,-1,u\}$ as described in Section
\ref{sec:results}. Then Definitions \ref{def:Local} and \ref{def:LocalMarked}
coincide.

Let $\cG(\cM)$ denote the space of tuples $(G,\root,\bsigma)$
where $G=(V,E)$ is a (finite or) locally-finite
graph, $\root \in V$ is a distinguished root vertex, and $\bsigma:V \to \cM$
associates a label to each vertex. Define a corresponding edge-perspective
set $\cG_e(\cM)$ as the space of 
tuples $(G,\{\root,\root'\},\bsigma)$ where $G=(V,E)$ is a (finite or)
locally-finite graph, $\bsigma:V \to \cM$ associates a label to each vertex,
and $\{\root,\root'\} \in E$ is a distinguished (undirected) root edge.
The subspaces of trees in $\cG(\cM)$, $\cG_e(\cM)$ are denoted by
$\cT(\cM)$, $\cT_e(\cM)$.

For any graph $H$, integer $\ell\geq 0$, and set of
vertices $S$ in $H$, let $\Ball_{\ell}(S;H)$ denote the subgraph induced by
all vertices at distance at most $\ell$ from $S$ in $H$
(including $S$ itself).  To make contact with previously introduced notation, for a
vertex $v$, we write $\Ball_\ell(v;H)$ for $\Ball_\ell(\{v\};H)$.

Local weak convergence  was initially introduced in \cite{benjaminischramm} and further developed in \cite{aldous2007processes,aldoussteele}.
We define it here in a somewhat restricted setting that is relevant for our proofs.
\begin{definition}
Let $\nu$ be a law over $\cT(\cM)$ and
let $\{G_n=(V_n,E_n)\}_{n\geq 1}$ be a sequence of (deterministic) graphs
with (deterministic) vertex labels $\bsigma_n:V_n \to \cM$.
We say that $(G_n,\bsigma_n)$ {\bf converges locally} to $\nu$ if, for any $\ell \geq 0$,
any $\tau \in \cT(\cM)$, and a vertex $v \in V_n$ chosen uniformly at random\footnote{Here
$\bsigma_n$ and $\bsigma$ really denote the restrictions
$\bsigma_n|_{\Ball_{\ell}(v;G_n)}$ and $\bsigma|_{\Ball_{\ell}(\root;T)}$ of the labels to the balls
of radius $\ell$. We avoid this type of cumbersome notation when the meaning is 
clear.},
\begin{align}
\lim_{n\to\infty} \P\{(\Ball_\ell(v;G_n),v,\bsigma_n) \simeq \tau\}
=\P_\nu\{(\Ball_\ell(\root;T),\root,\bsigma) \simeq \tau\},
\end{align}
where $\simeq$ denotes graph isomorphism that preserves the root vertex
and vertex labels. We write $(G_n,\bsigma_n) \toloc \nu$.
\end{definition}

A law $\nu$ over $\cT(\cM)$ is the limit of some graph sequence if and only
if $\nu$ is \emph{unimodular} \cite{benjaminischramm,elek,benjaminietal}.
Roughly speaking, this means that the law
$\nu$ does not change if the root is changed. Corresponding to any unimodular
law $\nu$ is an associated edge-perspective law $\nu_e$ over $\cT_e(\cM)$. This is obtained from $\nu$ 
as follows. First define a  law $\tnu$ over $\cT(\cM)$ whose Radon-Nykodym derivative
with respect to $\nu$ is $\frac{\de\tnu}{\de \nu}(T,\nu,\bsigma) = \deg_T(\root)/\E_{\nu}\deg_T(\root)$. Then, letting $(T,\root,\bsigma)\sim\tnu$,
define $\nu_e$ to be the law of $(T,\{\root,v\},\bsigma)$ where $v$ is a uniformly random neighbor of $\root$ in $T$.

The above definition is clarified by the following fact. Its proof is an immediate consequence of the definitions,
once we notice that, in order to sample a uniformly random edge in $G_n$, it is sufficient to sample a random vertex 
$v$ with probability proportional to $\deg(v)$, and then sample one of its neighbors uniformly at random.
\begin{lemma}\label{lemma:edgeperspective}
For $\nu$ a unimodular law over $\cT(\cM)$, denote by $\nu_e$ the corresponding edge-perspective law.
Let $\{G_n=(V_n,E_n)\}_{n \geq 1}$ be a graph sequence with vertex labels $\bsigma_n:V_n \to \cM$ such that
$(G_n,\bsigma_n) \toloc \nu$. Then for any $\ell \geq 0$ and any $\tau_e \in
\cT_e(\cM)$, if an edge $\{u,v\} \in E_n$ is chosen uniformly at random, we have
\begin{align}
\lim_{n \to \infty} \P\big\{(\Ball_{\ell}(\{u,v\};G_n),\{u,v\},\sigma_n) \simeq \tau_e\big\}
=\P_{\nue}\big\{(\Ball_{\ell}(\{\root,\root'\};T),\{\root,\root'\},\sigma) \simeq \tau_e\big\}\, ,
\end{align}
where $\simeq$ denotes graph isomorphism that preserves the root edge and
vertex labels.
\end{lemma}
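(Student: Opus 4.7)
The plan is to execute directly the size-biasing observation noted just before the lemma: a uniformly random edge of $G_n$ can be drawn by first selecting a vertex $u$ with probability $\deg_{G_n}(u)/(2|E_n|)$ and then selecting a uniformly random neighbor $v$. Writing $\bar d_n := 2|E_n|/n$, this rewrites the left-hand side of the claim as
\[
\P\big\{(\Ball_\ell(\{u,v\};G_n),\{u,v\},\sigma_n)\simeq \tau_e\big\} = \frac{1}{\bar d_n\, n}\sum_{u\in V_n}\deg_{G_n}(u)\,p_n(u),
\]
where $p_n(u) := \deg_{G_n}(u)^{-1}\sum_{v\in N_{G_n}(u)}\1\{(\Ball_\ell(\{u,v\};G_n),\{u,v\},\sigma_n)\simeq \tau_e\}$ is the fraction of neighbors of $u$ that yield the desired isomorphism type.

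Next I would argue that $F_n(u) := \deg_{G_n}(u)\,p_n(u)$ is a bounded function of the labeled ball $(\Ball_{\ell+1}(u;G_n),u,\sigma_n)$: it depends only on this ball because every neighbor $v$ of $u$ satisfies $\Ball_\ell(\{u,v\};G_n)\subseteq \Ball_{\ell+1}(u;G_n)$, and $p_n(u)>0$ forces $u$ to be mapped by some isomorphism to $\root$ or $\root'$ in $\tau_e$, so $\deg_{G_n}(u)\in\{\deg_{\tau_e}(\root),\deg_{\tau_e}(\root')\}$ on the support of $p_n$. Applying local weak convergence to this bounded functional of radius-$(\ell+1)$ balls yields
\[
\frac{1}{n}\sum_{u\in V_n} F_n(u) \;\longrightarrow\; \E_\nu\big[\deg_T(\root)\,p(\root)\big],
\]
with $p(\root)$ defined analogously on $(T,\root,\sigma)$; a separate truncation argument using $\E_\nu\deg_T(\root)<\infty$ gives $\bar d_n\to \bar d := \E_\nu\deg_T(\root)$.

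The final step is a change of measure. By the defining identity $\rd\tnu/\rd\nu = \deg_T(\root)/\bar d$,
\[
\frac{1}{\bar d}\,\E_\nu\big[\deg_T(\root)\,p(\root)\big]
=\E_{\tnu}\big[p(\root)\big]
=\E_{\tnu}\,\E_{\root'\sim \mathrm{Unif}(N_T(\root))}\big[\1\{(\Ball_\ell(\{\root,\root'\};T),\{\root,\root'\},\sigma)\simeq \tau_e\}\big],
\]
which is precisely $\P_{\nu_e}\big\{(\Ball_\ell(\{\root,\root'\};T),\{\root,\root'\},\sigma)\simeq \tau_e\big\}$ by the construction of $\nu_e$ as the law of $(T,\{\root,\root'\},\sigma)$ under $\tnu$ with $\root'$ a uniformly chosen neighbor of $\root$.

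The only genuine technical point is the justification that bounded functionals of radius-$(\ell+1)$ balls have convergent empirical means under $\toloc$, and relatedly that $\bar d_n\to \bar d$: both are standard consequences of local weak convergence together with uniform integrability of $\deg_T(\root)$, which holds trivially for the Galton-Watson limits with $\Pois(d)$ offspring appearing in our applications. Everything else is pure bookkeeping between the vertex- and edge-centered sampling schemes.
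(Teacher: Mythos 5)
Your proposal follows exactly the route the paper gestures at (realizing the uniform random edge as a degree-size-biased vertex followed by a uniform neighbor) and the change-of-measure bookkeeping is correct; the paper itself supplies no more detail than that one-line hint.  Two points deserve tightening.  First, $F_n(u)=\deg_{G_n}(u)\,p_n(u)$ is bounded on its support only when $\ell\ge 1$: for $\ell=0$ the ball $\Ball_0(\{u,v\};G_n)$ is just the two endpoints and the edge between them, so the indicator does not constrain $\deg_{G_n}(u)$, and $F_n(u)$ can be as large as $\deg_{G_n}(u)$.  You would need to handle $\ell=0$ by the same truncation device you already invoke for $\bar d_n$, or reduce to the $\ell\ge1$ case.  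Second, the uniform integrability you need is of the sequence $\{\deg_{G_n}(v_n)\}_{n\ge1}$ with $v_n$ uniform in $V_n$ (equivalently, $\bar d_n\to\bar d$), not of the single limit variable $\deg_T(\root)$: a single integrable random variable is always ``uniformly integrable,'' and the condition $\bar d_n\to\bar d$ is genuinely an additional hypothesis beyond $(G_n,\bsigma_n)\toloc\nu$ (one can build counterexamples where a vanishing fraction of vertices carries a constant fraction of the edges).  In the paper this assumption is supplied explicitly as $|E_n|/n\to d/2$ in Lemma~\ref{lemma:localalglowerbound}, the place where the edge-perspective convergence is actually used, and it holds almost surely for the random graph families in question by concentration of $|E_n|$; it is not a consequence of the limiting tree having $\Pois(d)$ offspring.
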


Both the \ER random graph and the planted partition random graph with partially
observed labels (revealed independently at random) satisfy the
above definitions, where the laws $\nu$ and $\nue$ are the laws of 
Galton-Watson trees.
\begin{definition}
A {\bf Galton-Watson tree} with offspring distribution $\mu$ is a random tree
rooted at a vertex $\root$, such that each vertex $v$
has $N_v \sim \mu$ children independently of the other
vertices. 

A {\bf two-type Galton-Watson tree} with offspring distributions
$\mu^=$ and $\mu^{\neq}$ is a random tree with binary vertex labels $\{+1,-1\}$
rooted at $\root$, such that $\root$ has label $\pm 1$ with
equal probability, and each vertex has $N_v^= \sim \mu^=$ children with same
label as itself and $N_v^{\neq} \sim \mu^{\neq}$ children with opposite label
from itself, independently of each other and of the other vertices.

The labels of a two-type Galton-Watson tree are {\bf partially revealed with
probability $\pmb{\delta}$} if the label set is augmented to $\{+1,-1,u\}$ and,
conditional on the tree, the label of each vertex is replaced by $u$
independently with probability $1-\delta$.
\end{definition}
\begin{example}\label{example:ER}
Fix $d>0$, let $G_n=(V_n,E_n) \sim \sG(n,d/n)$ be an \ER random graph,
let $\cM=\{1\}$, and let $\bsigma_n \equiv 1$ the trivial labeling.
Then almost surely (over the realization of $G_n$),
$(G_n,\bsigma_n) \toloc \nu$ where $\nu$ is the law of a Galton-Watson tree
rooted at $\root$ with offspring distribution $\Pois(d)$
(and labels $\bsigma \equiv 1$). 

The associated edge-perspective law $\nue$ is
the law of two independent such trees rooted at $\root$ and $\root'$ and
connected by the single edge $\{\root,\root'\}$.
\end{example}
\begin{example}\label{example:SBM}
Fix $a,b>0$ and $\delta \in (0,1]$, let $G_n=(V_n,E_n) \sim
\sG(n,a/n,b/n)$ be the planted partition random graph, and let $\bsigma_n:V_n
\to \{+1,-1,u\}$ be such that, independently for each vertex $i$,
with probability $1-\delta$ we have $\sigma_n(i)=u$, and with probability
$\delta$ we have that $\sigma_n(i)$ equals the vertex
label ($+1$ or $-1$) of the hidden partition to which $i$ belongs.
Then almost surely (over the realization of $G_n$ and $\bsigma_n$),
$(G_n,\bsigma_n) \toloc \nu$ where $\nu$ is the law of a two-type
Galton-Watson tree rooted at $\root$, with
offspring distributions $\Pois(a/2)$ and $\Pois(b/2)$ and with vertices
partially revealed with probability $\delta$.

The associated edge-perspective law $\nue$ is
the law of two such trees rooted at $\root$ and $\root'$ and connected by the
single edge $\{\root,\root'\}$, where $\root$ and $\root'$ belong to the same
side of the partition with probability $a/(a+b)$ and to opposite sides of the
partition with probability $b/(a+b)$, and the trees are
independent conditional on the partition memberships of $\root$ and $\root'$.
\end{example}

As in Section \ref{sec:results}, to define local algorithms that solve the SDP
(\ref{eq:SDP}), we extend our definitions to include additional random
real-valued marks.
Namely, we denote by $\cG^*(\cM)$ the space of tuples
$(G,\root,\bsigma,\bz)$ where $(G,\root,\bsigma) \in \cG(\cM)$ and
$\bz:V(G) \to \R$ associates a real-valued mark to each vertex of $G$.
The spaces $\cG_e^*(\cM)$, $\cT^*(\cM)$, $\cT_e^*(\cM)$ are defined analogously.

For a unimodular law $\nu$ over $\cT(\cM)$, we let $\nu^*$ be the law over
$\cT^*(\cM)$ such that $(T,\root,\bsigma,\bz) \sim \nu^*$ if
$(T,\root,\bsigma) \sim \nu$ and, conditional on $(T,\root,\bsigma)$,
$z(i) \overset{iid}{\sim} \Normal(0,1)$ for all vertices $i \in V(T)$.
$\nue^*$ is defined analogously.

\begin{remark}
Since we are interested in graph sequences that converge locally to trees, it will turn out to 
be sufficient to define local algorithms $F:\cG^*(\cM)\to\reals$ on trees  and, for instance, extend it arbitrarily to other 
graphs. With a slight abuse of notation, we will therefore write $F:\cT^*(\cM)\to\reals$.
\end{remark}

Finally, given a local algorithm $F\in\cF_*^\cM(\ell)$ and a unimodular
probability measure $\nu$ on $\cT(\cM)$, we define the value of $F$
with respect to $\nu$ as
\begin{align}\label{eq:EFnu}
\cE(F,\nu):=d\, \E_{\nue^*}\big\{F(\Ball_{\ell}(\root;T),\root,\bsigma,\bz)
F(\Ball_{\ell}(\root';T),\root',\bsigma,\bz)\big\}\, ,
\end{align}
where $d=\E_{\nu}\{\deg(\root)\}$ is the expected degree of the root under $\nu$.
(This is a slight abuse of notation, given the definitions of $\cE(F;G)$ and
$\cE(F;G,\bsigma)$ in (\ref{eq:ValueDef}) and (\ref{eq:ValueDefMarked}).)

\subsection{Key lemmas}
\label{sec:KeyLemmas}
Using the above framework, the desired lower bounds
are now consequences of the following results.
\begin{lemma}\label{lemma:localalglowerbound}
Let $\{G_n=(V_n,E_n)\}_{n \geq 1}$ be a deterministic sequence of graphs with
deterministic vertex marks
$\bsigma_n:V_n \to \cM$, such that $|V_n|=n$, $|E_n|/n \to d/2$
for a constant $d>0$, and $(G_n,\bsigma_n) \toloc \nu$ for a law $\nu$ on
$\cT(\cM)$.

For fixed $\ell \geq 0$, let $F \in \cF_*^\cM(\ell)$ be any radius-$\ell$  local
algorithm such that the following two conditions 
hold \footnote{The second of these is the same as
condition 2 of Definition \ref{def:LocalMarked} and condition 2 of
Definition \ref{def:Local} in the case of trivial markings $\cM=\{1\}$;
we restate it here for convenience.}:
\begin{equation}\label{eq:Fconditions}
\E_{\nu^*}\big\{F(T,\root,\bsigma,\bz)\big\}=0,\;\;\;\;
\E_{\nu^*}\big\{F(T,\root,\bsigma,\bz)^2 \mid T,\root,\bsigma\big\} \equiv 1\, .
\end{equation}
Then we have
\begin{align}
\lim_{n \to \infty} \cE(F;G_n,\bsigma_n) \geq \cE(F,\nu)\, .
\end{align}
\end{lemma}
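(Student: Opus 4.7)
The plan is to expand the SDP objective according to the two pieces in $\bA_{G_n}-(d/n)\e\e^{\sT}$ and then invoke Benjamini--Schramm convergence on each piece separately. Writing $F_i:=F(G_n,i,\bsigma_n,\bz)$, one has
$$n\,\cE(F;G_n,\bsigma_n) \;=\; 2\sum_{\{i,j\}\in E_n}\E_\bz[F_iF_j] \;-\; \frac{d}{n}\,\E_\bz\Big[\Big(\sum_{i\in V_n}F_i\Big)^2\Big].$$
I will argue that the first term converges to $\cE(F,\nu)$ and the second vanishes; together these give the claimed lower bound (in fact equality).

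\textbf{Edge sum via the edge-perspective limit.} For each edge $\{i,j\}\in E_n$, the quantity $h_{ij}:=\E_\bz[F_iF_j]$ depends only on $(\Ball_\ell(\{i,j\};G_n),\bsigma_n)$ since $F$ is a radius-$\ell$ local algorithm, and it satisfies $|h_{ij}|\le\sqrt{\E_\bz F_i^2\cdot\E_\bz F_j^2}=1$ by Cauchy--Schwarz and the normalization in (\ref{eq:Fconditions}). Hence $h_{ij}$ is a bounded observable of the marked $\ell$-ball around the edge, and such observables are ``continuous'' under local convergence because they are locally constant on the space of isomorphism classes of finite rooted labeled graphs. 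Applying Lemma~\ref{lemma:edgeperspective} and using $2|E_n|/n\to d$,
$$\frac{2}{n}\sum_{\{i,j\}\in E_n}\E_\bz[F_iF_j] \;\longrightarrow\; d\,\E_{\nue^*}\!\big[F(\Ball_\ell(\root;T),\root,\bsigma,\bz)\,F(\Ball_\ell(\root';T),\root',\bsigma,\bz)\big] \;=\; \cE(F,\nu).$$

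\textbf{Quadratic correction.} Let $g_i:=\E_\bz[F_i]$, which satisfies $|g_i|\le 1$ by Cauchy--Schwarz. Decompose
$$\frac{1}{n^2}\E_\bz\Big[\Big(\sum_i F_i\Big)^2\Big] \;=\; \Big(\frac{1}{n}\sum_i g_i\Big)^{\!2} \;+\; \frac{1}{n^2}\sum_{i,j}\operatorname{Cov}_{\bz}(F_i,F_j).$$
For the first piece, $g_i$ is a bounded radius-$\ell$ local observable of $(G_n,i,\bsigma_n)$, so the vertex-perspective local convergence $(G_n,\bsigma_n)\toloc\nu$ gives $\frac{1}{n}\sum_i g_i\to\E_{\nu^*}[F]=0$, using the centering hypothesis. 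For the second piece, $\operatorname{Cov}_{\bz}(F_i,F_j)=0$ whenever $\dist(i,j)>2\ell$, because then $\Ball_\ell(i;G_n)$ and $\Ball_\ell(j;G_n)$ are vertex-disjoint and $F_i,F_j$ are functions of disjoint collections of independent Gaussians; and $|\operatorname{Cov}_{\bz}(F_i,F_j)|\le 1$ otherwise. Hence the second piece is bounded in absolute value by $\frac{1}{n^2}\sum_i|\Ball_{2\ell}(i;G_n)|$, and vanishes as $n\to\infty$ provided the empirical average ball size at radius $2\ell$ stays bounded.

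\textbf{Main obstacle.} The one technical subtlety is the uniform-in-$n$ control of $\frac{1}{n}\sum_i|\Ball_{2\ell}(i;G_n)|$, since $|\Ball_{2\ell}|$ is an unbounded local observable and Benjamini--Schramm convergence by itself only handles bounded ones. This requires either an a priori local-density assumption on $(G_n)$ or, as is the case for the ER and SBM sequences that will invoke this lemma, standard concentration of the degree sequence which makes the maximum degree $O(\log n)$ almost surely and renders the ball sizes uniformly controlled on the relevant high-probability event. Modulo this routine uniform-integrability step, the whole proof reduces to two applications of local weak convergence to the bounded local observables $h_{ij}$ and $g_i$.
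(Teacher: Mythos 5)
Your decomposition and the two applications of local weak convergence (edge-perspective for the $\bA$ term, vertex-perspective for the centering) exactly mirror the paper's argument, and your bound $|\operatorname{Cov}_\bz(F_i,F_j)|\le 1$ with zero covariance for $\dist(i,j)>2\ell$ is the right estimate. The only substantive issue is your ``main obstacle,'' and it is not actually an obstacle: you assert that controlling $\frac{1}{n^2}\sum_i |\Ball_{2\ell}(i;G_n)|$ requires either an a priori density hypothesis or degree-concentration specific to the ER/SBM laws, because $|\Ball_{2\ell}|$ is unbounded. But the needed control follows from local weak convergence alone, with no auxiliary assumption and for \emph{deterministic} graph sequences --- exactly as the lemma is stated. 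The paper's Lemma~\ref{lemma:maxballsize} shows $\frac{1}{n}\max_{v\in V_n}|\Ball_\ell(v;G_n)|\to 0$ by contradiction: if a ball of radius $\ell$ covered an $\eps$-fraction of vertices, some vertex in it would have degree at least $n^{\delta}$, and then a positive fraction of vertices would have a large-degree vertex within distance $2\ell$, which is incompatible with the local limit being a (locally finite) law on $\cT(\cM)$. With that, even the cruder bound $\frac{1}{n^2}\sum_{i,j}\1\{\dist(i,j)\le 2\ell\}\le \frac{1}{n}\max_v|\Ball_{2\ell}(v;G_n)| \to 0$ closes your argument. So your proposal is correct as a strategy; replace the hedged appeal to degree concentration with this contradiction argument (or cite a uniformity-in-$n$ version, as in Lemma~\ref{lemma:maxballsize}) and the proof is complete as a deterministic statement, which is what the lemma requires.
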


\begin{lemma}\label{lemma:ERlowerbound}
Fix $d>1$, $\cM=\{1\}$, and let $\nu$ be the law of the Galton-Watson tree
with offspring distribution $\Pois(d)$ (and trivial marking $\bsigma \equiv 1$).
Then there exist local algorithms $F_\ell \in
\cF_*^\cM(\ell)$ for $\ell \geq 1$ satisfying (\ref{eq:Fconditions}) and
such that
\begin{align}
\lim_{\ell \to \infty} \cE(F_\ell,\nu)
\geq 2\sqrt{d}\left(1-\frac{1}{d+1}\right)\, .
\end{align}
\end{lemma}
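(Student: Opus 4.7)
My plan is to take $F_\ell$ to be the normalized Gaussian ``wave function''
\[
F_\ell(T,\root,\bz) := \hat Z_\ell(T,\root)^{-1/2}\sum_{v \in \Ball_\ell(\root;T)} d^{-\dist_T(\root,v)/2}\,z(v),
\qquad
\hat Z_\ell(T,\root):=\sum_{v \in \Ball_\ell(\root;T)} d^{-\dist_T(\root,v)}.
\]
This is radius-$\ell$ local and by construction satisfies both parts of (\ref{eq:Fconditions}). Since all weights are nonnegative and the conditional variance equals one, Cauchy--Schwarz conditional on $T$ gives $R_\ell := \E_\bz[F_\ell(\root)F_\ell(\root')] \in [0,1]$ deterministically, which will serve as a dominating function for the limit $\ell \to \infty$.

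The next step is to evaluate $\cE(F_\ell,\nu) = d\,\E_{\nue}[R_\ell]$ under the edge-perspective law of Example \ref{example:ER}, where the tree is two independent Galton--Watson trees $T_1,T_2$ with $\Pois(d)$ offspring, joined by the single edge $\{\root,\root'\}$. Writing $Z_\ell^i := \sum_{k=0}^{\ell} N_k(T_i)/d^k$ and carefully tracking how the bridging edge shifts graph distance by one between the subtrees, a direct computation of the Gaussian cross-moment yields
\[
R_\ell = \frac{d^{-1/2}\bigl(Z_{\ell-1}^1+Z_{\ell-1}^2\bigr)}{\sqrt{\bigl(Z_\ell^1+d^{-1}Z_{\ell-1}^2\bigr)\bigl(Z_\ell^2+d^{-1}Z_{\ell-1}^1\bigr)}}.
\]
To pass $\ell \to \infty$, I invoke the Kesten--Stigum theorem: since $\Pois(d)$ offspring (with $d>1$) trivially satisfies the $L\log L$ condition, $N_k(T_i)/d^k \to W_i$ a.s.\ for i.i.d.\ copies $W_i$ of the martingale limit, and Cesaro averaging lifts this to $Z_\ell^i/(\ell+1) \to W_i$ a.s.\ on survival; on extinction $Z_\ell^i$ stabilizes at some finite $Z_\infty^i > 0$. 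Using the identity $d(x+y/d)(y+x/d)=x^2+y^2+xy(d+d^{-1})$, the limit simplifies to $R_\ell \to Q(X_1,X_2)$ a.s., where
\[
Q(x,y) := \frac{x+y}{\sqrt{x^2+y^2+xy(d+d^{-1})}},
\]
with $(X_1,X_2)=(W_1,W_2)$ on double survival, $(X_1,X_2)=(Z_\infty^1,Z_\infty^2)$ on double extinction, and the mixed case covered by the limit $Q(x,0)\equiv 1$.

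The lemma then reduces to the pointwise inequality $Q(x,y) \ge 2\sqrt{d}/(d+1)$ for all $(x,y) \in [0,\infty)^2 \setminus \{(0,0)\}$. Substituting $s := (x^2+y^2)/(xy) \in [2,\infty]$, one computes $Q^2 = (s+2)/(s+d+d^{-1})$, which is strictly increasing in $s$ for $d>1$ and hence minimized at the diagonal $s=2$ (i.e.\ $x=y$) with value $4d/(d+1)^2$. Dominated convergence now yields
\[
\lim_{\ell\to\infty}\cE(F_\ell,\nu) = d\,\E[Q(X_1,X_2)] \ge \frac{2d\sqrt{d}}{d+1} = 2\sqrt{d}\left(1-\frac{1}{d+1}\right).
\]
The only delicate point is the extinction event: $W_i=0$ there makes the natural Kesten--Stigum limit degenerate, and one must argue via the exact finite sums $Z_\infty^i$. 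The saving feature is that $Q \ge 2\sqrt{d}/(d+1)$ holds throughout the positive quadrant (not only on the survival event), so extinction contributes at or above the target rather than below it.
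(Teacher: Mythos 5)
Your proof is correct and follows essentially the same route as the paper (which proves this via an intermediary Lemma 6.9 = \texttt{lemma:ERlowerboundhelper} and then a short final argument), with the same local algorithm, the same edge-perspective cross-moment formula for $R_\ell$, and the same Kesten--Stigum limits. There are two minor departures worth noting, both of which slightly streamline the argument. First, the paper's $F_\ell$ uses a $\sign(\sum z(v))$ fallback when the ball has no leaves at depth $\ell$, which forces $R_\ell = 1$ exactly on the double-extinction event and in turn requires a three-case split (including a vanishing boundary case ``II'' where exactly one subtree is on the verge of extinction). You omit the fallback, keep the same weighted-sum formula throughout, and observe that on extinction the normalizing constants stabilize at finite positive values $Z_\infty^i \geq 1$, so the pointwise bound $Q \geq 2\sqrt d/(d+1)$ on $[0,\infty)^2\setminus\{(0,0)\}$ still applies — this removes the case analysis entirely. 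Second, for the final inequality the paper invokes the exchangeability of $(W,W')$ together with Jensen to get $\E[W/W'+W'/W \mid \cE] \geq 2$; your direct pointwise AM--GM bound (equivalently the $s=(x^2+y^2)/(xy)\geq 2$ monotonicity calculation) gives the same constant without appealing to exchangeability. Both observations are correct, the computations of $R_\ell$ and of the identity $d(x+y/d)(y+x/d) = x^2+y^2+xy(d+d^{-1})$ check out, and the dominated-convergence step (with the uniform bound $|R_\ell|\leq 1$) is sound.
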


\begin{lemma}\label{lemma:ERlowerboundharmonic}
In the same setup as Lemma \ref{lemma:ERlowerbound}, there
exist local algorithms $F_{\ell,L} \in \cF_*^\cM(L)$ for $L \geq \ell \geq 1$
satisfying (\ref{eq:Fconditions}) and such that
\begin{align}
\lim_{\ell \to \infty} \lim_{L \to \infty} \cE(F_{\ell,L},\nu)
\geq d\,\E\Psi(\cond_1,\cond_2)\,,
\end{align}
where $\Psi(\cond_1,\cond_2)$ is as in (\ref{eq:Psi}).
\end{lemma}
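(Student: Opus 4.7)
My plan is to apply Lemma~\ref{lemma:localalglowerbound} with a sequence of local algorithms $F_{\ell,L}$ obtained by truncating the idealized harmonic function $\tF$ from (\ref{eq:Averaging}) at depth $L$. For a rooted tree $(T,\root)$ with marks $\bz$, let $\cond^{(L)}(v)$ denote the effective electrical conductance from $v$ to the set of depth-$L$ vertices of $\Ball_L(\root;T)$ through the subtree of $v$ away from $\root$, with the convention $\cond^{(L)}(v)=0$ when no such depth-$L$ descendant exists; the series-parallel recursion $\cond^{(L)}(v')=\sum_{v\text{ child of }v'}\cond^{(L)}(v)/(1+\cond^{(L)}(v))$ lets me define truncated harmonic weights $h^{(\root)}_L(v)$ along paths from $\root$ via the usual ratio formulas, and these automatically satisfy $\sum_{v\in\partial\Ball_k(\root;T)}h^{(\root)}_L(v)=1$ for every $k\leq\ell$. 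I set
\[
F_{\ell,L}(T,\root,\bz) := \frac{1}{\sqrt{\ell+1}}\sum_{v\in\Ball_\ell(\root;T)}\sqrt{h^{(\root)}_L(v)}\,z(v)
\]
when $\cond^{(L)}(\root)>0$, and fall back to the averaging default $F_{\ell,L}=|C|^{-1/2}\sum_{v\in C}z(v)$ over the connected component $C$ of $\root$ when $\cond^{(L)}(\root)=0$ (in which case $C\subseteq\Ball_{L-1}(\root;T)$, so $F_{\ell,L}$ remains a function of $\Ball_L(\root;T)$). In both branches $\E_\bz[F_{\ell,L}^2\mid T]=1$, so $F_{\ell,L}\in\cF_*^\cM(L)$ satisfies (\ref{eq:Fconditions}).

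By Lemma~\ref{lemma:localalglowerbound} it suffices to lower bound $\cE(F_{\ell,L},\nu)=d\,\E_{\nue^*}[F_{\ell,L}(T,\root,\bz)F_{\ell,L}(T,\root',\bz)]$. Under $\nue^*$, the tree $T$ splits along the edge $\{\root,\root'\}$ into two independent $\Pois(d)$ Galton-Watson trees $T^{(1)},T^{(2)}$ rooted at $\root_1:=\root$ and $\root_2:=\root'$, with conductances to infinity $\cond_1,\cond_2\in[0,\infty)$ (zero on the extinction event). Since $|\E_\bz[F_{\ell,L}(\root_1)F_{\ell,L}(\root_2)]|\leq1$ by Cauchy--Schwarz, it suffices by bounded convergence to show that for almost every such $T$ the inner Gaussian correlation converges to $\Psi(\cond_1,\cond_2)$, first as $L\to\infty$ and then as $\ell\to\infty$. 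On the event that both trees are finite, for $L$ exceeding the diameter of $T$ both evaluations use the averaging default and produce the identical value $|T|^{-1/2}\sum_v z(v)$, so the correlation equals $1=\Psi(0,0)$. Otherwise at least one of $\cond_1,\cond_2$ is positive and Rayleigh's monotonicity principle gives $\cond^{(L)}(v)\uparrow\cond(v)$ as $L\to\infty$, hence $h^{(\root_i)}_L(v)\to h^{(\root_i)}(v)$ pointwise in $v$, reducing to a computation on the infinite tree $T$.

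For the infinite-tree computation I split $\sum_v\sqrt{h^{(\root_1)}(v)h^{(\root_2)}(v)}$ over $v\in\Ball_\ell(\root_1;T)\cap\Ball_\ell(\root_2;T)$ according to which side of the cut edge contains $v$. A vertex $v\in T^{(1)}$ at depth $k\geq1$ from $\root_1$ has depth $k+1$ from $\root_2$; by the strong Markov property of the simple random walk at $\root_1$ and the last-exit definition of harmonic measure, one obtains the factorizations $h^{(\root_1)}(v)=p_1\cdot h^{(\root_1,T^{(1)})}(v)$ and $h^{(\root_2)}(v)=p_2\cdot h^{(\root_1,T^{(1)})}(v)$, where $h^{(\root_1,T^{(1)})}$ is the harmonic measure of $T^{(1)}$ viewed in isolation and $p_i$ is the probability that the walk from $\root_i$ on $T$ escapes to infinity through the $T^{(1)}$ side; a routine series-parallel computation yields $p_1=\cond_1(1+\cond_2)/(\cond_1+\cond_2+\cond_1\cond_2)$ and $p_2=\cond_1/(\cond_1+\cond_2+\cond_1\cond_2)$. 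Using $\sum_{v\text{ at depth }k}h^{(\root_1,T^{(1)})}(v)=1$, each such level on the $T^{(1)}$ side contributes $\cond_1\sqrt{1+\cond_2}/(\cond_1+\cond_2+\cond_1\cond_2)$, and the symmetric contribution from $T^{(2)}$ is $\cond_2\sqrt{1+\cond_1}/(\cond_1+\cond_2+\cond_1\cond_2)$; there are $\ell-1$ valid levels on each side with $O(1)$ endpoint corrections at $k=0,\ell$, so dividing by $\ell+1$ and letting $\ell\to\infty$ recovers exactly $\Psi(\cond_1,\cond_2)$. The mixed case $\cond_1=0<\cond_2$ is analogous: now $p_1\to1$ because escape from either root must traverse $T^{(2)}$, and the per-level contributions telescope to a limit of $1=\Psi(0,\cond_2)$. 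Taking $\E_{\nue^*}$ and multiplying by $d$ yields $\lim_\ell\lim_L\cE(F_{\ell,L},\nu)=d\,\E\Psi(\cond_1,\cond_2)$, so for any $\eps>0$ I choose $\ell,L$ large enough to obtain $F\in\cF_*^\cM(L(\eps))$ with $\cE(F,\nu)\geq d\,\E\Psi(\cond_1,\cond_2)-\eps$, and Lemma~\ref{lemma:localalglowerbound} delivers (\ref{eq:locallowerharmonic}). The main technical point is justifying the two harmonic-measure factorizations above and showing $h^{(\root)}_L\to h^{(\root)}$ in a manner sufficiently uniform in $v$ to pass the limit through the $(\ell+1)$-term sum, though both follow from classical random-walk-on-tree results together with Rayleigh monotonicity; the finite-component fallback is essential in order to match the convention $\Psi(0,0)=1$.
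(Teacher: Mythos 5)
Your proposal is correct and follows essentially the same route as the paper: truncate the harmonic measure at depth $L$, fall back to a degenerate normalized local function when no depth-$L$ vertex exists, factor the harmonic weights on the edge-perspective tree into an escape probability times the subtree harmonic measure, and then pass to the limit $L\to\infty$ followed by $\ell\to\infty$ using bounded convergence. The only cosmetic differences are that the paper uses a $\mathrm{sign}$-function fallback where you use an averaged-$z$ fallback, that the paper organizes the $L\to\infty$ limit into three explicit cases where you handle them by pointwise convergence, and that your ``$p_1\to 1$'' in the mixed case $\cond_1=0<\cond_2$ should really refer to the escape probability through the $T^{(2)}$ side (your $p_1$ as defined tends to $0$, not $1$) --- a labeling slip, not a gap.
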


\begin{lemma}\label{lemma:SBMlowerbound}
Fix $a,b>0$ such that $d:=(a+b)/2 \geq 2$
and $\lambda:=(a-b)/\sqrt{2(a+b)}>1$. Fix
$\delta \in (0,1]$, let $\cM=\{+1,-1,u\}$, and let $\nu$ be the law of the
two-type Galton-Watson tree with
offspring distributions $\Pois(a/2)$ and $\Pois(b/2)$ and with vertex labels
partially revealed with probability $\delta$. Then for a universal constant
$C>0$, there exist local algorithms $F_\ell \in
\cF_*^{\cM}(\ell)$ for $\ell \geq 1$ satisfying (\ref{eq:Fconditions}) and
such that
\begin{align}
\lim_{\ell \to \infty} \cE(F_\ell,\nu) \geq \sqrt{d}\left(2+\frac{(\lambda-1)^2}{\lambda\sqrt{d}}-\frac{C}{d}\right)\,.
\end{align}
\end{lemma}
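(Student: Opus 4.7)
The strategy is to use Lemma \ref{lemma:localalglowerbound} to reduce the claim to constructing, for each $\ell$, a local algorithm $F_\ell \in \cF_*^{\cM}(\ell)$ on the two-type Galton-Watson tree $\nu$ of Example \ref{example:SBM} that satisfies the centering and unit-variance conditions (\ref{eq:Fconditions}) and achieves $\cE(F_\ell,\nu) \geq \sqrt{d}\big(2+(\lambda-1)^2/(\lambda\sqrt{d})-C/d\big)-o_\ell(1)$ as $\ell\to\infty$. The asymptotic value decomposes naturally into a bulk contribution $2\sqrt{d}(1-O(1/d))$, which should be delivered by the \ER-type construction, and a signal correction $(\lambda-1)^2/\lambda$, which should arise from the partially revealed labels via a belief-propagation-style estimator.

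My construction would perturb the Gaussian aggregator $F^{N}_\ell \propto \sum_{v\in\Ball_\ell(\root)} d^{-\dist(\root,v)/2}\,z(v)$ from Lemma \ref{lemma:ERlowerbound} by a $\sigma$-dependent modulation. Concretely, set
\[F_\ell(T,\root,\sigma,z) = Z(T,\sigma)^{-1/2}\sum_{v\in\Ball_\ell(\root)} d^{-k(v)/2}\big(1+\epsilon\,\psi_v(T,\sigma)\big)\,z(v),\]
with $k(v)=\dist(\root,v)$, where $\psi_v$ is a bounded, centered BP-style estimator of $\sigma(v)$ built from revealed labels in $\Ball_\ell(v)$ with the propagation weights $(\lambda/\sqrt{d})^{j}$ that linearize the BP recursion on the two-type GW tree, $\epsilon>0$ is a small parameter to be optimized, and $Z(T,\sigma)$ is a realization-dependent constant enforcing $\E_z[F_\ell^2]\equiv 1$. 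Because the unlabeled two-type GW law equals the $\Pois(d)$ GW law (as $\Pois(a/2)+\Pois(b/2)=\Pois(d)$), the $\epsilon=0$ case reduces to $F^{N}_\ell$, and Lemma \ref{lemma:ERlowerbound} gives $\cE(F^{N}_\ell,\nu)\geq 2\sqrt{d}(1-1/(d+1))\geq \sqrt{d}(2-C_0/d)$.

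The value $\cE(F_\ell,\nu)=d\,\E_{\nue^*}[F_\ell(\root)F_\ell(\root')]$ on the edge-perspective tree would then be expanded to second order in $\epsilon$: the inner $z$-expectation reduces to $Z^{-1}\sum_{v}w^{(\root)}_v w^{(\root')}_v$ over vertices in the ball intersection, and Taylor-expanding the factor $(1+\epsilon\psi_v)(1+\epsilon\psi_v)$ in $\epsilon$ yields an $O(1)$ term reproducing $\cE(F^{N}_\ell,\nu)$, an $O(\epsilon)$ cross term that vanishes since $\E[\psi_v]=0$, and a positive $O(\epsilon^2)$ term. The $O(\epsilon^2)$ term is a quadratic form in the BP weights; using the label correlation $\E[\sigma(v)\sigma(w)\mid T]=(\lambda/\sqrt{d})^{\dist(v,w)}$ and the offspring rate $d$, this becomes a geometric series whose value, after optimization in $\epsilon$ and $\ell\to\infty$, matches $(\lambda-1)^2/\lambda$ up to a $C/\sqrt{d}$ error.

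The main technical obstacle is threefold: (i) enforcing $\E_z[F_\ell^2]\equiv 1$ pointwise in $(T,\sigma)$, which requires the awkward random normalization $Z(T,\sigma)$ and concentration estimates for it on the GW tree; (ii) arguing that a simple convex combination $\alpha F^{N}+\beta F^{S}$ with $\alpha^2+\beta^2=1$ does \emph{not} suffice and that the multiplicative perturbation is genuinely needed, since the convex combination only achieves the maximum, not the sum, of the two contributions; and (iii) controlling the error uniformly in $d\geq 2$ to yield a universal $C$, which plausibly requires the constraint $d\geq 2$ (rather than the weaker $d>1$) to guarantee enough decay in the geometric sums and in the tail of $\psi_v$ under the two-type GW law. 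Once $\cE(F_\ell,\nu)$ is lower bounded, Lemma \ref{lemma:localalglowerbound} transfers the estimate to the SBM graph $G_n$ almost surely, and sending $\ell\to\infty$ after $n\to\infty$ completes the proof.
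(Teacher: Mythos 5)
Your high-level strategy --- reduce via Lemma \ref{lemma:localalglowerbound} to constructing a local algorithm on the edge-perspective Galton--Watson tree, build the algorithm as a perturbation of the \ER-type Gaussian aggregator by a belief-propagation-style statistic of the revealed labels, and then Taylor-expand --- is the right shape. But the specific construction you propose differs from the paper's in a way that breaks the key mechanism, and I think your multiplicative perturbation does not actually work.

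The paper's construction (Lemma \ref{lemma:SBMlowerboundhelper}) is an \emph{additive} shift, not a multiplicative modulation: the numerator is $\sum_{v}d^{-k(v)/2}z(v)+D_\ell\sqrt{\alpha\ell}$ and the denominator is $\sqrt{\sum_v d^{-k(v)}+D_\ell^2\alpha\ell}$, where $D_\ell=\delta^{-1}\mu^{-\ell}(N_\ell^+-N_\ell^-)$ is the BP-linearized statistic of revealed labels in $\Ball_\ell(\root;T)$. Because the term added is a constant in $\bz$, the $\E_\bz[F^2]=1$ normalization is exact and explicit (Gaussian variance plus a squared constant), and the value $\E_\bz[F_\root F_{\root'}]$ factors as (Gaussian cross-covariance$\;+\;D_\ell D_\ell'\alpha\ell$) over a product of square-root normalizers. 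The signal enters through the cross-term $D_\ell D_\ell'$, whose sign tracks $\sigma_{\true}(\root)\sigma_{\true}(\root')$. Since under $\nue^*$ the edge is same-label with probability $a/(2d)>1/2$, this cross-term is positive in expectation, delivering the $(\lambda-1)^2/\lambda$ contribution.

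Your multiplicative perturbation $F_\ell\propto\sum_v d^{-k(v)/2}(1+\eps\psi_v)z(v)$ with $\psi_v$ estimating $\sigma(v)$ does not carry this signal. Computing $\E_\bz[F_\root F_{\root'}]$, each vertex $v$ contributes $a_vb_v(1+\eps\psi_v)^2$ with $a_v=d^{-\dist(\root,v)/2}$, $b_v=d^{-\dist(\root',v)/2}$; note the \emph{same} factor $\psi_v$ appears in both $F_\root$ and $F_{\root'}$, so it gets squared. The $O(\eps^2)$ term $\sum_v a_vb_v\psi_v^2$ is sign-blind: it is the same whether $\sigma_{\true}(\root)=\sigma_{\true}(\root')$ or not, and the analogous $\eps^2$ terms appear in the normalizers and largely cancel. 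So the edge-label correlation never enters, and this construction cannot produce the $(\lambda-1)^2/\lambda$ improvement over the \ER lower bound. (Your assertion that the $O(\eps)$ term vanishes because $\E[\psi_v]=0$ is also not warranted as written, since under $\nue^*$ the $\psi_v$ are correlated with the tree geometry and the normalizer $Z$.) A variant where $\psi_v$ estimates $\sigma(\root)$ rather than $\sigma(v)$ would restore the sign in the cross-term, but then you essentially have rederived the additive construction in a more complicated guise.

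Your objection (ii) to the additive combination is also misplaced: the paper's construction \emph{is} an additive combination, but with the mixing weight made $(T,\bsigma)$-dependent so that the pointwise normalization $\E_\bz[F^2]\equiv 1$ holds. The numerator gets the sum of the Gaussian cross-covariance and the signal cross-term, so this is strictly better than a fixed convex combination; the right way to see this is to write $F=(f_G+c)/\sqrt{V+c^2}$ and observe that for moderate $c$ (which is what $\alpha=(\mu^2-d)/(\mu^2\sqrt d)$ arranges) the numerator adds, it does not max. Finally, the universal constant $C$ and the restriction to $d\geq 2$ in the paper come from the exponential tail bounds on $X_\ell$ and $Y_\ell:=\sigma_{\true}(\root)D_\ell$ established in Lemma \ref{lemma:XYmomentstails}, which are used to control the remainder after Taylor-expanding the exact formula for $\E_\bz[F_\root F_{\root'}]$ around the means $\E[X]=\E[Y]=1$; this is a different mechanism from the ``geometric decay'' you gesture at, though $d\geq 2$ does serve to make the constants universal.
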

Proofs of the above four lemmas are contained in the next four subsections. Let
us first show that these lemmas imply the desired lower bounds.

\begin{proof}[Proof of Theorem \ref{thm:ER} (lower bound) and Theorems
\ref{thm:Harmonic}, \ref{thm:SBM}, and \ref{thm:SBMlocal}]
Consider models $\sG(n,d/n)$ and $\sG(n,a/n,b/n)$ with $d:=(a+b)/2$. Then (\ref{eq:locallowermain})
and (\ref{eq:locallowerharmonic}) follow from Example \ref{example:ER},
Lemma \ref{lemma:localalglowerbound}, Lemma \ref{lemma:ERlowerbound}, and
Lemma \ref{lemma:ERlowerboundharmonic}, while
(\ref{eq:locallowerSBM}) follows from Example \ref{example:SBM},
Lemma \ref{lemma:localalglowerbound}, and Lemma \ref{lemma:SBMlowerbound}.
The bounds (\ref{eq:UpperLowerMain}), (\ref{eq:lowerharmonic}),
and the second bound of (\ref{eq:lowerSBM}) in the case $d \geq 2$
follow in turn from
(\ref{eq:locallowermain}), (\ref{eq:locallowerharmonic}), and
(\ref{eq:locallowerSBM}), as any local algorithm defines a feasible solution
$\bX$ for the SDP (\ref{eq:SDP}) which achieves the SDP value
$n\,\cE(F;G_n,\bsigma_n)$, as discussed in Section \ref{sec:results}.
For the second bound of (\ref{eq:lowerSBM}) in the case $d \in (1,2)$,
we may take $C>4$ so that
$\lambda>2+(\lambda-1)^2/(\lambda\sqrt{d})-C/d$ always when $d \in (1,2)$,
and hence the first bound dominates in (\ref{eq:lowerSBM}).
For the first bound of (\ref{eq:lowerSBM}) and any $d>1$,
let us simply consider the feasible
point $\bX=\bsigma_n\bsigma_n^{\sT}$ for (\ref{eq:SDP}), where $\bsigma_n \in
\{+1,-1\}^n$ is the indicator vector of the hidden partition. Then
\[\frac{1}{n}\SDP(\bA) \geq \frac{1}{n}\langle \bA-\frac{d}{n}\e\e^{\sT},\bX
\rangle=\frac{1}{n}\sum_{i,j \in V_n} \bA_{ij}\sigma_n(i)\sigma_n(j).\]
From the definition of $\sG(n,a/n,b/n)$, we obtain almost surely
\[\liminf_{n \to \infty} \frac{1}{n}\SDP(\bA)
\geq \frac{1}{n}\left(\frac{n^2}{2} \cdot \frac{a}{n}-\frac{n^2}{2} \cdot
\frac{b}{n}\right)=\frac{a-b}{2}=\lambda\sqrt{d}.\]
Finally, the large $d$ expansion (\ref{eq:ExpansionCond}) in Theorem
\ref{thm:Harmonic} is proven in Appendix \ref{app:ExpansionCond}.
\end{proof}

In the remainder of this section, we establish Lemmas
\ref{lemma:localalglowerbound}, \ref{lemma:ERlowerbound},
\ref{lemma:ERlowerboundharmonic}, and \ref{lemma:SBMlowerbound}.

\subsection{Proof of Lemma \ref{lemma:localalglowerbound}}

We first recall some well-known properties of locally convergent
graphs. (Short proofs are provided for the reader's convenience.)
\begin{lemma}\label{lemma:maxballsize}
Let $(G_n,\bsigma_n)\toloc \nu$ for any law $\nu$, where $G_n=(V_n,E_n)$ and
$|V_n|=n$. Denote by $|\Ball_\ell(v;G_n)|$ the number of vertices in $\Ball_\ell(v;G_n)$.
Then for any fixed $\ell \geq 0$,
\[\lim_{n\to\infty} \frac{1}{n}\max_{v\in V_n}|\Ball_\ell(v;G_n)|=0.\]
\end{lemma}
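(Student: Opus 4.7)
The plan is to derive the result from tightness of the neighborhood-size distribution under a uniform root, which is an immediate consequence of local weak convergence, and then to use a simple triangle-inequality argument to transfer information from a hypothetical large ball to a linear fraction of its vertices.

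First I would observe that for each fixed $k \geq 0$, the size $|\Ball_k(v;G_n)|$ depends only on the rooted (marked) isomorphism class of $\Ball_k(v;G_n)$. Since the set of rooted isomorphism classes of balls of radius $k$ in locally finite graphs is countable, local weak convergence $(G_n,\bsigma_n) \toloc \nu$ implies that for $v$ chosen uniformly at random from $V_n$, the size $|\Ball_k(v;G_n)|$ converges in distribution to $|\Ball_k(\root;T)|$ with $(T,\root,\bsigma) \sim \nu$. Since $\nu$ is supported on locally finite trees, $|\Ball_k(\root;T)|$ is almost surely finite, so the sequence $|\Ball_k(v;G_n)|$ (with $v$ uniform) is tight: for any $\eta>0$ there exists a constant $M=M(k,\eta)$ such that
\begin{equation*}
\P_{v \sim \mathrm{Unif}(V_n)}\bigl[|\Ball_k(v;G_n)| > M\bigr] < \eta
\end{equation*}
for all sufficiently large $n$.

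Next I would argue by contradiction. Suppose the claim fails for some fixed $\ell \geq 0$; then along a subsequence there exists $\eps>0$ and $v_n \in V_n$ with $|\Ball_\ell(v_n;G_n)| \geq \eps n$. The key observation is that for every $u \in \Ball_\ell(v_n;G_n)$, the triangle inequality in $G_n$ gives $\Ball_\ell(v_n;G_n) \subseteq \Ball_{2\ell}(u;G_n)$, and therefore $|\Ball_{2\ell}(u;G_n)| \geq \eps n$. Hence at least $\eps n$ vertices $u \in V_n$ satisfy $|\Ball_{2\ell}(u;G_n)| \geq \eps n$, i.e.
\begin{equation*}
\P_{v \sim \mathrm{Unif}(V_n)}\bigl[|\Ball_{2\ell}(v;G_n)| \geq \eps n\bigr] \geq \eps.
\end{equation*}
Applying the tightness bound from the first step with $k=2\ell$ and $\eta=\eps/2$ gives $|\Ball_{2\ell}(v;G_n)| \leq M$ with probability at least $1-\eps/2$ for all large $n$, while $\eps n > M$ eventually — a contradiction.

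There is no real obstacle here; the only subtlety is passing from convergence of rooted isomorphism types to convergence (and hence tightness) of the scalar statistic $|\Ball_k(\cdot;G_n)|$, which is handled by noting that the set of isomorphism classes is countable and that the limiting tree is locally finite $\nu$-a.s. The rest is the ball-inflation trick $\Ball_\ell(v_n;G_n) \subseteq \Ball_{2\ell}(u;G_n)$ for $u \in \Ball_\ell(v_n;G_n)$.
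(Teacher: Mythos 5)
Your proof is correct and follows essentially the same contradiction argument as the paper: both exploit the ball-inflation observation that $\Ball_\ell(v_n;G_n) \subseteq \Ball_{2\ell}(u;G_n)$ for every $u \in \Ball_\ell(v_n;G_n)$, which converts one large radius-$\ell$ ball into a linear fraction of roots having a large radius-$2\ell$ local statistic, contradicting local weak convergence. The only (minor) difference is the choice of scalar statistic: you use $|\Ball_{2\ell}(u;G_n)|$ directly and invoke tightness, whereas the paper first deduces that a ball with $\geq \eps n$ vertices must contain a vertex of degree $\geq n^\delta$ and contradicts convergence through the maximum degree inside $\Ball_{2\ell}(u;G_n)$; both routes rest on the same almost-sure local finiteness of the limiting tree.
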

\begin{proof}
Suppose by contradiction that the claim is false. Then there exist $\eps>0$,
a sequence of graph sizes $\{n_i\}$, and vertices $v_i \in
G_{n_i}$ for which $|\Ball_\ell(v_i;G_{n_i})|\geq n\eps$. In particular the maximum
degree of any vertex in $\Ball_\ell(v_i;G_{n_i})$ is at least $n^\delta$ for some
$\delta>0$. Hence, for any $w \in \Ball_\ell(v_i;G_{n_i})$, the maximum degree
of any vertex in $\Ball_{2\ell}(w;G_{n_i})$ is at least $n^{\delta}$. Since there are
at least $n\eps$ such vertices $w$, we have, for $w$ a vertex of $G_n$ chosen
uniformly at random,
\begin{align}
\limsup_{n\to\infty} \P\Big(\max\big\{\deg(v):v\in \Ball_{2\ell}(w;G_n)\big\}
\geq n^{\delta} \Big) \geq \eps.
\end{align}
This contradicts the hypothesis that $(\Ball_{2\ell}(w;G_{n}),w,\bsigma_n)$
converges in law to $(\Ball_{2\ell}(\root;T),\root,\bsigma)$ where $(T,\root,\bsigma)
\sim \nu$.
\end{proof}

\begin{lemma}\label{lemma:vertexconvergence}
Suppose $(G_n,\bsigma_n) \toloc \nu$, where $G_n=(V_n,E_n)$. For any fixed $\ell
\geq 0$, let $f(H_{\ell},\root,\bsigma)$ be any bounded function of a graph $H_{\ell}$
with root vertex $\root$ and vertex marks $\bsigma$,
such that each vertex of $H_{\ell}$ is at distance at most $\ell$ from $\root$.
Then as $n \to \infty$,
\begin{align}
\frac{1}{|V_n|} \sum_{v \in V_n} f(\Ball_{\ell}(v;G_n),v,\bsigma_n) \to
\E_{\nu}[f(\Ball_\ell(\root;T),\root,\bsigma)].
\end{align}
\end{lemma}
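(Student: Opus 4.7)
The plan is to recognize the LHS as the expectation $\E_V\big[f(\Ball_\ell(V;G_n),V,\bsigma_n)\big]$ where $V$ is uniform on $V_n$. Since $f$ is a function of the rooted marked ball up to isomorphism (the only interpretation that makes the statement meaningful, and the one used when $f$ is later applied to local algorithms), it descends to a bounded function on the countable set $\Omega$ of isomorphism classes of rooted $\cM$-marked graphs in which every vertex is within distance $\ell$ of the root. Writing $\mu_n$ for the law of $(\Ball_\ell(V;G_n),V,\bsigma_n)$ on $\Omega$ and $\mu$ for the law of $(\Ball_\ell(\root;T),\root,\bsigma)$ under $\nu$, our goal becomes $\E_{\mu_n}[f] \to \E_{\mu}[f]$.

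The hypothesis $(G_n,\bsigma_n) \toloc \nu$ says exactly that $\mu_n(\tau) \to \mu(\tau)$ for every $\tau \in \Omega$. To upgrade this pointwise convergence to convergence of the integral against the bounded function $f$, I would apply a truncation argument. Fix $\eps>0$ and choose a finite $F \subseteq \Omega$ with $\mu(\Omega \setminus F) < \eps$; this is possible since $\mu$ is a probability measure on a countable set. Pointwise convergence on the finite set $F$ gives $\mu_n(F) \to \mu(F)$, and hence $\mu_n(\Omega \setminus F) < 2\eps$ for all sufficiently large $n$. With $|f| \leq M$, the triangle inequality then yields
\begin{align*}
|\E_{\mu_n}[f] - \E_\mu[f]|
&\leq \sum_{\tau \in F} |f(\tau)|\,|\mu_n(\tau)-\mu(\tau)|
+ M\big(\mu_n(\Omega \setminus F)+\mu(\Omega \setminus F)\big) \\
&\leq M \sum_{\tau \in F} |\mu_n(\tau)-\mu(\tau)| + 3M\eps.
\end{align*}
The sum over the finite set $F$ tends to $0$ as $n \to \infty$, and since $\eps$ is arbitrary the conclusion follows.

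The only subtle point is that pointwise convergence on the infinite countable space $\Omega$ does not automatically imply convergence of integrals of bounded functions, because mass might in principle escape along the sequence $\mu_n$. The truncation step above (essentially Scheff\'e's lemma applied in the discrete setting) rules this out by exploiting that both $\mu$ and each $\mu_n$ are probability measures on $\Omega$. I do not anticipate a genuine obstacle beyond setting up this bookkeeping carefully.
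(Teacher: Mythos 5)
Your proposal is correct and follows the same route as the paper: interpret the left-hand side as the expectation of $f$ under the law of $(\Ball_\ell(V;G_n),V,\bsigma_n)$ for a uniform random $V$, note that local weak convergence gives convergence in distribution of this rooted marked ball, and conclude by boundedness of $f$. The paper compresses the last step into a one-line appeal to ``the bounded convergence theorem''; your finite-support truncation on the countable space of isomorphism classes is exactly the argument that justifies that appeal, so the two are the same proof at different levels of detail.
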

\begin{proof}
Let $v \in V_n$ be a vertex chosen uniformly at random. Then by the assumption
of local weak convergence, $f(\Ball_{\ell}(v;G_n),v,\sigma_n)$ is a random
variable that converges in law to $f(\Ball_{\ell}(\root;T),\root,\sigma)$ where
$(T,\root,\sigma) \sim \nu$, and the
conclusion follows from the bounded convergence theorem.
\end{proof}

\begin{lemma}\label{lemma:edgeconvergence}
Suppose $(G_n,\bsigma_n) \toloc \nu$, where $G_n=(V_n,E_n)$. Let $\nue$ be the
edge-perspective law associated to $\nu$. For any fixed $\ell
\geq 0$, let $f(H_{\ell},\{\root,\root'\},\sigma)$ be any bounded function of a graph
$H_l$ with root edge $\{\root,\root'\}$ and vertex marks $\sigma$, such that
each vertex of $H_{\ell}$ is at distance at most $\ell$ from $\root$ or $\root'$.
Then as $n \to \infty$,
\begin{align}
\frac{1}{|E_n|} \sum_{\{v,w\} \in E_n} f(\Ball_{\ell}(\{v,w\};G_n),\{v,w\},\sigma_n)
\to
  \E_{\nue}[f(\Ball_{\ell}(\{\root,\root'\};T),\{\root,\root'\},\sigma)].
\end{align}
\end{lemma}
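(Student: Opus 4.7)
The plan is to reduce this to Lemma~\ref{lemma:edgeperspective}, which is the edge-perspective local convergence statement. First I would observe that the left-hand side equals $\E\big[f(\Ball_\ell(\{v,w\};G_n),\{v,w\},\bsigma_n)\big]$ when $\{v,w\}$ is drawn uniformly at random from $E_n$. So the task is to show that this expectation converges to $\E_{\nue}\big[f(\Ball_\ell(\{\root,\root'\};T),\{\root,\root'\},\bsigma)\big]$. This is the edge-perspective analog of Lemma~\ref{lemma:vertexconvergence}, and the proof should mirror its one-line argument.

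Since $f$ depends on its input only through the isomorphism class (preserving the root edge and labels), I would decompose the expectation as a countable sum
$\sum_{\tau_e} f(\tau_e)\,\P\{(\Ball_\ell(\{v,w\};G_n),\{v,w\},\bsigma_n)\simeq \tau_e\}$,
where $\tau_e$ ranges over isomorphism classes of finite $\cM$-marked graphs rooted at an edge with all vertices at distance at most $\ell$ from the root edge. The analogous decomposition holds under $\nue$. Lemma~\ref{lemma:edgeperspective} gives termwise convergence of the two sums. To pass from termwise to summable convergence, I would invoke a standard truncation argument: let $K:=\sup|f|$ and, for $\eps>0$, pick $M$ large enough that $\P_{\nue}\{|\Ball_\ell(\{\root,\root'\};T)|>M\}<\eps/K$, which is possible because $\nue$ is supported on locally finite trees. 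The set of $\tau_e$ with $|\tau_e|\leq M$ and labels in the finite set $\cM$ is finite, so the partial sum over these classes converges by Lemma~\ref{lemma:edgeperspective}, while the tails are controlled by the same lemma applied to the finite union $\{|\tau_e|\leq M\}$, which gives $\P_n\{|\tau_e|>M\}\to\P_{\nue}\{|\tau_e|>M\}<\eps/K$. Sending $n\to\infty$ and then $\eps\to 0$ yields the claim.

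The only step that requires any care is the tightness argument controlling the tail $|\tau_e|>M$, but this is essentially free once one notes that $\{|\tau_e|\leq M\}$ is a finite union of isomorphism classes (hence covered by Lemma~\ref{lemma:edgeperspective}) and that $\nue$-almost every rooted tree has finite $\ell$-neighborhood because $\nu$ is concentrated on locally finite trees. Structurally, this is just the Portmanteau theorem applied to the discrete local-topology on marked rooted-edge graphs, with $f$ playing the role of a bounded continuous test function.
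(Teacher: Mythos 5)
Your proposal is correct and follows essentially the same route as the paper, which reduces to Lemma~\ref{lemma:edgeperspective} (after sampling a uniformly random edge) and then cites ``bounded convergence'' for the passage from convergence in law of the isomorphism class to convergence of expectations. Your truncation/tightness argument is simply an explicit unwinding of that last step, closer to a Portmanteau-style argument; nothing is missing.
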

\begin{proof}
The proof is the same as Lemma \ref{lemma:vertexconvergence}; we let
$\{v,w\} \in E_n$ be an edge chosen uniformly at random, and apply Lemma
\ref{lemma:edgeperspective} and the bounded convergence theorem.
\end{proof}

\begin{proof}[Proof of Lemma \ref{lemma:localalglowerbound}]
For notational convenience, let us denote $\E_{\bz}$ simply by $\E$ (so
expectations are understood to be with respect to $\bz$ only).
Given a local algorithm $F:\cT_*^\cM(\ell) \to \reals$ defined on trees,
augment it to $F:\cG_*^\cM(\ell) \to \reals$ defined on all graphs by setting
$F(\Ball_\ell(\root;G),\root,\bsigma,\bz)=z(\root)$ if $\Ball_\ell(\root,G)$
is not a tree. Note that this satisfies the conditions of Definitions
\ref{def:Local} and \ref{def:LocalMarked}.

Define $\xi(i)=F(G_n,i,\bsigma_n,\bz_n)$.
To bound the value $\cE(F;G_n,\bsigma_n)$, let us write
\[\E\left[\sum_{i,j \in V_n} (\bA_{G_n})_{ij}\xi(i)\xi(j)\right]
=2\sum_{\{i,j\} \in E_n} \E[\xi(i)\xi(j)].\]
For any vertices $i,j \in V_n$, $|\E[\xi(i)\xi(j)]| \leq 1$ by condition
(\ref{eq:Fconditions}). Furthermore,
for each edge $\{i,j\} \in E_n$, $\E[\xi(i)\xi(j)]$ is a function only of
$\{i,j\}$,
the local neighborhood $\Ball_\ell(\{i,j\};G_n)$, and the marks $z(v)$
of vertices $v$ in this
neighborhood. Then Lemma \ref{lemma:edgeconvergence} and the assumption
$(G_n,\bsigma_n) \toloc \nu$ implies
\[\frac{1}{2|E_n|}\E\left[\sum_{i,j \in V_n}
(\bA_{G_n})_{ij}\xi(i)\xi(j)\right] \to \cE(F,\nu)\]
as $n \to \infty$.

Next, note that if $i \notin \Ball_{2\ell}(j;G_n)$, then $\xi(i)$ and $\xi(j)$
are independent by construction. Hence
\begin{align*}
\E\left[\sum_{i,j \in V_n} \xi(i)\xi(j)\right]
&=\sum_{i,j \in V_n} \E[\xi(i)]\E[\xi(j)]+\sum_{i \in V_n}
\sum_{j \in \Ball_{2l}(i;G_n)} (\E[\xi(i)\xi(j)]-\E[\xi(i)]\E[\xi(j)])\\
&\leq \left(\sum_{i \in V_n} \E[\xi(i)]\right)^2
+2n\max_{i \in V_n} |\Ball_{2\ell}(i;G_n)|.
\end{align*}
For each vertex $i \in V_n$, $|\E[\xi(i)]| \leq 1$ and $\E[\xi(i)]$ is a
function
only of $i$, the ball $\Ball_\ell(i;G_n)$, and the marks of vertices in this ball. Then
Lemma \ref{lemma:vertexconvergence} and the first condition in
(\ref{eq:Fconditions}) implies $\frac{1}{n}\sum_{i \in V_n} \E[\xi(i)] \to 0$.
Together with Lemma \ref{lemma:maxballsize}, this implies
\[\frac{d}{n^2}\E\left[\sum_{i,j \in V_n}\xi(i)\xi(j)\right] \to 0.\]
Combining the above and applying $|E_n|/n \to d/2$ yields the desired result.
\end{proof}

\subsection{Proof of Lemma \ref{lemma:ERlowerbound}}
For any rooted tree $T$ and each vertex $v$ of $T$, let $k(v):=\dist(v,\root)$
denote the
distance from $v$ to the root $\root$. For each $\ell \geq 0$, denote
\begin{equation}\label{eq:Xl}
N_\ell:=|\{v:k(v)=\ell\}|,\;\;\;\;X_\ell:=d^{-\ell}N_\ell.
\end{equation}

In the case where $T$ is a random Galton-Watson tree with offspring distribution
$\Pois(d)$, let $\cF_\ell$ be the $\sigma$-field generated by $N_0,\ldots,N_\ell$.
Then for each $\ell \geq 1$, conditional on $\cF_{\ell-1}$,
$X_\ell \sim d^{-\ell}\Pois(dN_{\ell-1})$, so
$\E[X_\ell \mid \cF_{\ell-1}]=X_{\ell-1}$. Hence $\{X_\ell\}_{\ell \geq 0}$ is a nonnegative
martingale with respect to the filtration $\{\cF_\ell\}_{\ell \geq 0}$, and
\begin{equation}\label{eq:X}
X:=\lim_{\ell \to \infty} X_\ell
\end{equation}
exists almost surely by the martingale convergence theorem, with $X \geq 0$.

We first establish the following lemma.
\begin{lemma}\label{lemma:ERlowerboundhelper}
Let $X,X'$ be independent random variables with law defined by (\ref{eq:X}),
where $X_\ell$ are defined by (\ref{eq:Xl}) for the Galton-Watson tree with
offspring distribution $\Pois(d)$.
In the setup of Lemma \ref{lemma:ERlowerbound}, for
each $\ell \geq 1$, there exists $F_\ell \in \cF_*^\cM(\ell)$
satisfying (\ref{eq:Fconditions}) such that
\[\lim_{\ell \to \infty} \cE(F_\ell,\nu)=\E[\us(X,X')],\]
where
\begin{equation}\label{eq:usER}
\us(X,X')=\begin{cases} d & X=X'=0 \\
\frac{\sqrt{d}(X+X')}{\sqrt{X+\frac{1}{d}X'}\sqrt{X'+\frac{1}{d}X}} & 
\text{otherwise}.\end{cases}
\end{equation}
\end{lemma}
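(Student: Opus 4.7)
The plan is to define $F_\ell$ as a piecewise linear combination of the Gaussian marks on $\Ball_\ell(\root;T)$, normalized to unit conditional variance, and to compute $\cE(F_\ell,\nu)$ directly on the edge-perspective Galton-Watson tree. Writing $k(v):=\dist(\root,v)$, set
\[
F_\ell(T,\root,\bz) := \begin{cases}
\displaystyle\Big(\sum_{v\in \Ball_\ell(\root;T)} d^{-k(v)}\Big)^{-1/2}\sum_{v\in \Ball_\ell(\root;T)} d^{-k(v)/2}z(v) & \text{if some } v\in \Ball_\ell(\root;T) \text{ has } k(v)=\ell,\\[4pt]
\displaystyle |\Ball_\ell(\root;T)|^{-1/2}\sum_{v\in \Ball_\ell(\root;T)} z(v) & \text{otherwise.}
\end{cases}
\]
This is manifestly a radius-$\ell$ local function satisfying (\ref{eq:Fconditions}) by direct verification in each branch. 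The idea behind the split is that the locally observable event ``no vertex at depth exactly $\ell$'' is the fingerprint of total extinction, and on it the uniform (root-independent) weights force $F_\ell(T,\root,\bz)$ and $F_\ell(T,\root',\bz)$ to coincide as functions of $\bz$, saturating Cauchy--Schwarz and producing the maximal value $d=\us(0,0)$.

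Sample $(T,\{\root,\root'\},\bz)\sim\nu_e^*$ and split $T$ across the root edge into two Galton-Watson subtrees $T_1\ni\root$ and $T_2\ni\root'$. Let $N_k^{(i)}$ count vertices at depth $k$ in $T_i$, and set $Y_k^{(i)}:=d^{-k}N_k^{(i)}$, $A_\ell^{(i)}:=\sum_{k=0}^{\ell-1}Y_k^{(i)}$, $B_\ell^{(i)}:=Y_\ell^{(i)}$. Using that a vertex $v\in T_1$ at depth $k$ has $\dist(\root,v)=k$ and $\dist(\root',v)=k+1$ (symmetrically for $T_2$), a direct Gaussian inner-product computation yields, on the event where the weighted branch is active at both roots,
\[
d\,\E_\bz\!\left[F_\ell(T,\root,\bz)F_\ell(T,\root',\bz)\mid T\right]
= \frac{\sqrt{d}\,(A_\ell^{(1)}+A_\ell^{(2)})}{\sqrt{(A_\ell^{(1)}+B_\ell^{(1)}+d^{-1}A_\ell^{(2)})(A_\ell^{(2)}+B_\ell^{(2)}+d^{-1}A_\ell^{(1)})}},
\]
while on the event where the uniform branch is active at both roots this conditional expectation equals $d$ identically.

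Now send $\ell\to\infty$. Because $(Y_k^{(i)})_{k\geq 0}$ is the $L^2$-bounded nonnegative Galton-Watson martingale for $\Pois(d)$ offspring, $Y_k^{(i)}\to X^{(i)}$ almost surely, and Cesaro averaging gives $A_\ell^{(i)}/\ell\to X^{(i)}$ and $B_\ell^{(i)}/\ell\to 0$. Substituting case by case: on $\{X^{(1)}>0,X^{(2)}>0\}$ the ratio converges a.s.\ to the closed form in (\ref{eq:usER}); in the mixed case $X^{(i)}>0=X^{(3-i)}$ the diverging $A_\ell^{(i)}$ dominates both numerator and denominator and the limit equals $d=\us(X^{(i)},0)$; and on $\{X^{(1)}=X^{(2)}=0\}$ both subtrees are a.s.\ finite, so for every sufficiently large $\ell$ the uniform branch is active at both roots and the value equals $d=\us(0,0)$. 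Since $|\E_\bz[F_\ell F_\ell'\mid T]|\leq 1$ by Cauchy--Schwarz, bounded convergence yields $\cE(F_\ell,\nu)\to\E[\us(X,X')]$, which is precisely the claim.

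The main obstacle is engineering $F_\ell$ correctly: a pure weighted sum (without the piecewise switch) is strictly suboptimal on the extinction event -- for instance, on the simplest configuration $T_1=T_2=\{\root\}$ one computes $d\,\E_\bz[F_\ell F_\ell'\mid T] = 2d^{3/2}/(d+1) < d$ -- so the heart of the argument is detecting the extinction footprint from within the radius-$\ell$ ball and switching to a root-independent weighting while preserving both locality and unit variance. A secondary technicality is that for finitely many values of $\ell$ the two roots may temporarily use different branches (when the depths of $T_1$ and $T_2$ straddle $\ell$), but this happens on a vanishing set as $\ell\to\infty$ and is absorbed into the almost-sure/bounded-convergence limit.
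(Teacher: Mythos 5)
Your proposal is correct and follows essentially the same route as the paper: the same exponentially weighted sum with weight $d^{-k(v)/2}$ on the surviving branch, the same case split across the root edge into two independent Galton-Watson subtrees, the same three-case analysis (both weighted, both uniform, mixed), the same use of the Galton-Watson martingale $d^{-\ell}N_\ell \to X$ and Ces\`aro averaging, and the same bounded-convergence passage to the limit. The only cosmetic difference is the \emph{extinction branch}: the paper takes $F_\ell=\sign\big(\sum_{v}z(v)\big)$ whereas you take the normalized linear combination $|\Ball_\ell|^{-1/2}\sum_v z(v)$; both are radius-$\ell$ local, have unit conditional variance, and coincide at the two roots whenever $\Ball_\ell(\root;T)=\Ball_\ell(\root';T)$, which is exactly what is needed in the doubly-extinct case.
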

\begin{proof}
As the vertex marking $\bsigma \equiv 1$ is trivial, for notational clarity we
omit it from all expressions below. Define the local algorithm
\begin{equation}\label{eq:FER}
F_\ell(T,\root,\bz):=
\begin{cases}
\frac{\sum_{v \in \Ball_\ell(\root;T)} d^{-k(v)/2}z(v)}
{\sqrt{\sum_{v \in \Ball_\ell(\root;T)} d^{-k(v)}}} & X_\ell>0, \\
\sign\left(\sum_{v \in \Ball_\ell(\root;T)} z(v)\right) & X_\ell=0,
\end{cases}
\end{equation}
where $k(v)$ and $X_\ell$ are defined for the tree $T$.
When $z(v) \overset{iid}{\sim} \Normal(0,1)$ conditional on $(T,\bsigma)$,
the conditions of (\ref{eq:Fconditions}) hold
by construction. It remains to compute $\cE(F_{\ell,L},\nu)$.

For $(T,\{\root,\root'\}) \in \cT_e$,
denote by $T_\root$ the subtree rooted at $\root$ of vertices
connected to $\root$ by a path not including $\root'$, and by $T_{\root'}$ the
subtree of remaining vertices rooted at $\root'$ (i.e.\ connected to $\root'$
by a path not including $\root$). Recall from Example \ref{example:ER} that if
$(T,\{\root,\root'\}) \sim \nue^*$, then $T_\root$ and $T_{\root'}$ are
independent Galton-Watson trees with offspring distribution $\Pois(d)$.
For each vertex $v \in T_\root$, denote by $k(v)$ its distance to $\root$, and
for each vertex $v' \in T_{\root'}$, denote by $k'(v')$ its distance to
$\root'$. Let $X_{\ell}$ be as defined in (\ref{eq:Xl}) for the subtree
$T_\root$ and $X_\ell'$ be as defined in (\ref{eq:Xl}) for the subtree
$T_{\root'}$.

For any $k \geq 0$, write as shorthand $\Ball_k:=\Ball_k(\root;T_\root)$
and $\Ball_k'=\Ball_k(\root';T_{\root'})$. Note that $\Ball_\ell(\root;T)$
consists of $\Ball_\ell$ and $\Ball_{\ell-1}'$ connected by the edge
$\{\root,\root'\}$, and similarly $\Ball_\ell(\root';T)$ consists of
$\Ball_\ell'$ and $\Ball_{\ell-1}$ connected by this edge.
We consider three cases: (I) If $X_{\ell-1}=0$ and $X_{\ell-1}'=0$,
then $\Ball_{\ell}(\root;T)=\Ball_{\ell}(\root';T)$ and the second case of
(\ref{eq:FER}) holds for both balls. In this case
\[F_{\ell}(T,\root,\bz)F_{\ell}(T,\root',\bz)=1.\]
(II) If $X_\ell=0$ and $X_{\ell-1}>0$ and $X_{\ell-1}'=0$, or
if $X_\ell'=0$ and $X_{\ell-1}'>0$ and $X_{\ell-1}=0$,
then the first case of (\ref{eq:FER}) holds for one of the balls
$\Ball_\ell(\root;T)$ or $\Ball_\ell(\root';T)$ and the second case holds for
the other ball. In this case we simply bound, using Cauchy-Schwarz,
\[|\E_{\bz}[F_{\ell}(T,\root,\bz)F_{\ell}(T,\root',\bz)]| \leq 1.\]
(III) Otherwise, the first case of (\ref{eq:FER}) holds for both balls
$\Ball_\ell(\root;T)$ and $\Ball_\ell(\root';T)$. Then we have
\begin{align*}
&\E_\bz\left[F_{\ell}(T,\root,\bz)F_{\ell}(T,\root',\bz)\right]\\
&=\frac{\sum_{v \in \Ball_{\ell-1}} d^{-k(v)/2}d^{-(k(v)+1)/2}
+\sum_{v' \in \Ball_{\ell-1}'} d^{-k'(v')/2}d^{-(k'(v')+1)/2}}
{\sqrt{\sum_{v \in \Ball_\ell} d^{-k(v)}+\sum_{v' \in \Ball_{\ell-1}'} d^{-(k'(v')+1)}}
\sqrt{\sum_{v' \in \Ball_\ell'} d^{-k'(v')}+\sum_{v \in \Ball_{\ell-1}} d^{-(k(v)+1)}}}.
\end{align*}
Letting $S_\ell=\sum_{j=0}^\ell X_\ell$ and
$S_\ell'=\sum_{j=0}^\ell X_\ell'$, the above may be written as
\[\E_\bz\left[F_\ell(T,\root,\bz)F_\ell(T,\root',\bz)\right]
=\frac{1}{\sqrt{d}} \frac{S_{\ell-1}+S_{\ell-1}'}{\sqrt{S_\ell+\frac{1}{d}S_{\ell-1}'}
\sqrt{S_\ell'+\frac{1}{d}S_{\ell-1}}}.\]

Combining the above three cases, taking the full expectation with respect to
$\nue^*$, and recalling the definition (\ref{eq:EFnu}),
\[\cE(F_\ell,\nu)=d\,\E\left[\1\{\mathrm{I}\}+\1\{\mathrm{II}\}
\E_{\bz}[F_{\ell}(T,\root,\bz)F_{\ell}(T,\root',\bz)]+\1\{\mathrm{III}\}
\frac{1}{\sqrt{d}} \frac{S_{\ell-1}+S_{\ell-1}'}{\sqrt{S_\ell+\frac{1}{d}S_{\ell-1}'}
\sqrt{S_\ell'+\frac{1}{d}S_{\ell-1}}}\right],\]
where $\1\{\mathrm{I}\}$, $\1\{\mathrm{II}\}$, and $\1\{\mathrm{III}\}$
indicate which of the above three cases occur. By convergence of C\'esaro sums,
\begin{align*}
&\lim_{\ell \to \infty} \tfrac{1}{\ell}S_{\ell-1}=
\lim_{\ell \to \infty} \tfrac{1}{\ell}S_{\ell}=\lim_{\ell \to \infty}
X_{\ell}=X,\\
&\lim_{\ell \to \infty} \tfrac{1}{\ell}S_{\ell-1}'=
\lim_{\ell \to \infty} \tfrac{1}{\ell}S_\ell'=\lim_{\ell \to \infty}
X_{\ell}'=X'
\end{align*}
almost surely, where $X$ and $X'$ are independent random variables with the law 
defined by (\ref{eq:X}) for the Galton-Watson tree. The events where
$T_{\root}$ has maximal depth exactly $\ell-1$ for $\ell=1,2,3,\ldots$ are
disjoint, and similarly for $T_{\root'}$, so $\1\{\mathrm{II}\} \to 0$ a.s.
Recall that for the
Galton-Watson tree, the extinction event $\lim_{\ell \to \infty} \1\{X_\ell=0\}$
equals $\1\{X=0\}$ a.s. Then $\1\{\mathrm{I}\} \to \1\{X=0,X'=0\}$ a.s., and
hence $\1\{\mathrm{III}\} \to \1\{X>0 \text{ or } X'>0\}$ a.s. Taking $\ell
\to \infty$ and applying the
bounded convergence theorem yields the desired result.
\end{proof}
\begin{proof}[Proof of Lemma \ref{lemma:ERlowerbound}]
Let $X,X'$ and $\us(X,X')$ be as in Lemma \ref{lemma:ERlowerboundhelper}, and
write $W=\sqrt{X+\frac{1}{d}X'}$ and $W'=\sqrt{X'+\frac{1}{d}X}$. Then
\[\us(X,X')=\begin{cases} d & W=W'=0,\\
\frac{d^{3/2}}{d+1}\frac{W^2+{W'}^2}{WW'} & W>0 \text{ and } W'>0 \end{cases}.\]
Conditional on the event $\cE:=\{W>0 \text{ and } W'>0\}$,
the bivariate law of $(W,W')$ is
exchangeable in $W$ and $W'$ by symmetry. Then applying Jensen's inequality,
\[\log \E[\tfrac{W}{W'} \mid \cE] \geq \E[\log \tfrac{W}{W'} \mid \cE]
=\E[\log W-\log W' \mid \cE]=0,\]
so $\E[\frac{W}{W'} \mid \cE] \geq 1$. Similarly
$\E[\frac{W'}{W} \mid \cE] \geq 1$, so
\[\E[\us(X,X')] \geq \frac{2d^{3/2}}{d+1}\P[\cE]+d\,\P[\cE^c]
\geq \frac{2d^{3/2}}{d+1}=2\sqrt{d}\left(1-\frac{1}{d+1}\right).\]
The result then follows from Lemma \ref{lemma:ERlowerboundhelper}.
\end{proof}

\subsection{Proof of Lemma \ref{lemma:ERlowerboundharmonic}}
The proof is similar to that of Lemma \ref{lemma:ERlowerboundhelper}.
For a rooted tree $(T,\root)$ and any vertex $v$ of $T$, let $k(v):=\dist(v,\root)$
denote the distance to the root. For any $k>0$, let us call vertices $v$ for
which $k(v)=k$ the `leaf vertices' of the ball $\Ball_k(\root;T)$.

For $L \geq \ell \geq 1$, if $\Ball_L(\root;T)$ has at least one leaf vertex
(i.e.\ $\{v:k(v)=L\}$ is non-empty),
then let us define a depth-$L$ approximation $h^{(\root,L)}(v;T)$ to the
harmonic measure introduced in
Section \ref{sec:harmonic}, as follows: For any vertex $v \in
\Ball_L(\root;T)$, let $k=k(v)$ and consider a simple random walk on $T$
starting at $\root$ and ending when it visits the first leaf vertex of
$\Ball_L(\root;T)$. Then let
$h^{(\root,L)}(v;T)$ be the probability that $v$ is the last
vertex at distance $k$ from $\root$ that is visited by this walk.
Clearly, for each $k=0,\ldots,L$,
\begin{equation}\label{eq:hsum}
\sum_{v \in T:k(v)=k} h^{(\root,L)}(v;T)=1.
\end{equation}
We may then define a local algorithm $F_{\ell,L} \in \cF_*^\cM(L)$ by
\begin{equation}\label{eq:Fharmonic}
F_{\ell,L}(T,\root,\bz):=
\begin{cases}
\frac{1}{\sqrt{\ell+1}}\left(\sum_{v \in \Ball_\ell(\root;T)}
\sqrt{h^{(\root,L)}(v;T)}z(v)\right) & \text{ if }
\{v:k(v)=L\} \text{ is non-empty},\\
\sign\left(\sum_{v \in \Ball_\ell(\root;T)} z(v)\right) & \text{ otherwise}.
\end{cases}
\end{equation}
When $z(v) \overset{iid}{\sim} \Normal(0,1)$ conditional on $(T,\bsigma)$, the
conditions of (\ref{eq:Fconditions}) hold by (\ref{eq:hsum}). So it remains to
compute $\cE(F_{\ell,L},\nu)$.

For $(T,\{\root,\root'\}) \in \cT_e$, define $T_\root$, $T_{\root'}$, $k(v)$,
$k'(v')$, $\Ball_k$, and $\Ball_k'$ as in the proof of Lemma
\ref{lemma:ERlowerboundhelper}, and recall that $\Ball_L(\root;T)$ consists
of $\Ball_L$ and $\Ball_{L-1}'$ connected by the edge $\{\root,\root'\}$
and that $\Ball_L(\root';T)$ consists of $\Ball_L'$ and $\Ball_{L-1}$
connected by this same edge. We consider the same three cases as in the proof of
Lemma \ref{lemma:ERlowerboundhelper}: (I) If $k(v) \leq L-2$ for all
$v \in T_\root$ and $k'(v') \leq L-2$ for all $v' \in T_{\root'}$, then the
second case of (\ref{eq:Fharmonic}) holds for both $\Ball_L(\root;T)$ and
$\Ball_L(\root';T)$, and
\begin{equation}\label{eq:EFFharmoniccase1}
F_{\ell,L}(T,\root,\bz)F_{\ell,L}(T,\root'\bz)=1.
\end{equation}
(II) If $\max_{v \in T_\root} k(v)=L-1$ and $k'(v') \leq L-2$ for all $v' \in
T_{\root'}$, or vice versa, then we simply bound by Cauchy-Schwarz
\begin{equation}\label{eq:EFFharmoniccase1}
|\E_{\bz}[F_{\ell,L}(T,\root,\bz)F_{\ell,L}(T,\root'\bz)]| \leq 1.
\end{equation}
(III) Otherwise, the first case of (\ref{eq:Fharmonic}) holds for both
$\Ball_L(\root;T)$ and $\Ball_L(\root';T)$, and we have
\begin{align*}
&\E_\bz\left[F_{\ell,L}(T,\root,\bz)F_{\ell,L}(T,\root',\bz)\right]\\
&\;\;\;\;=\frac{1}{\ell+1}\left(\sum_{v \in \Ball_{\ell-1}}
\sqrt{h^{(\root,L)}(v;T)h^{(\root',L)}(v;T)}+\sum_{v' \in \Ball_{\ell-1}'}
\sqrt{h^{(\root,L)}(v';T)h^{(\root',L)}(v';T)}\right).
\end{align*}

For $v \in \Ball_{\ell-1}$ and $v' \in \Ball_{\ell-1}'$, let us write as
shorthand $h^{(L)}(v):=h^{(\root,L)}(v;T_\root)$ and
${h^{(L)}}'(v'):=h^{(\root',L)}(v';T_{\root'})$ for the depth-$L$ harmonic
measures in the subtrees $T_\root$ and $T_{\root'}$. To relate these quantities
to the harmonic measure in the full tree $T$, consider a simple random walk
on $T$ starting at $\root$ and ending when it hits the first leaf vertex of
$\Ball_L(\root;T)$, i.e.\ when it hits the first vertex $v \in T_\root$ for
which $k(v)=L$ or the first vertex $v' \in T_{\root'}$ for which
$k'(v')=L-1$. Let $\cA$ be the event that the last vertex in
$\{\root,\root'\}$ visited by this walk is $\root$. Then the
Markov property of the walk implies that
for any $v \in \Ball_L$ with $k(v)=k$, $v$ can be the last vertex at distance
$k$ from $\root$ that is visited by this walk only if $\cA$ holds, and the
probability of this occurring conditional on $\cA$ is $h^{(L)}(v)$. Similarly,
for any $v' \in \Ball_{L-1}'$ with $k'(v')=k-1$, $v'$ can be the last
vertex at distance $k$ from $\root$ that is visited by this walk only if
$\cA^c$ holds, and the probability of this occurring conditional on $\cA^c$ is
${h^{(L-1)}}'(v')$. Hence for any $v \in \Ball_{\ell-1}$ and
$v' \in \Ball_{\ell-1}'$, and letting $\P_{\root}$ denote the probability distribution of the simple random walk started at $\root$,
we have
\[
h^{(\root,L)}(v;T)=\P_{\root}[\cA]h^{(L)}(v),\;\;h^{(\root,L)}(v';T)
=\P_{root}[\cA^c]{h^{(L-1)}}'(v').
\]
Considering analogously a walk on $T$ starting at $\root'$ and ending at the
first visited leaf vertex of $\Ball_L(\root';T)$, we have
\[h^{(\root',L)}(v;T)=\P_{\root'}[\cA]h^{(L-1)}(v),\;\;h^{(\root',L)}(v';T)
=\P_{\root'}[\cA^c]{h^{(L)}}'(v').\]

Denote by $\cond^{(L)}$ the conductance between $\root$ and the leaves of
$\Ball_L$ in the subtree $T_\root$, with $\cond^{(L)}=0$ if $T_\root$ has
no vertices $v$ with $k(v)=L$. Similarly, denote by ${\cond^{(L)}}'$ the
conductance between $\root'$ and the leaves of $\Ball_L'$ in the subtree
$T_{\root'}$. Recall \cite{lyons1997unsolved} that
if $T_\root$ is augmented with the vertex $\root'$
connected by an edge to $\root$, then
$\cond^{(L)}/(1+\cond^{(L)})$ gives the probability that a simple
random walk on $T_\root$ started at $\root$ hits a leaf vertex of $\Ball_L$
before hitting $\root'$, and the analogous statement holds for 
${\cond^{(L)}}'$.
Then letting $i$ count the number of visits of the random walk to $\root'$,
\[\P_{\root}[\cA]=\sum_{i=0}^\infty \frac{\cond^{(L)}}{1+\cond^{(L)}}
\left(\frac{1}{1+\cond^{(L)}}\frac{1}{1+{\cond^{(L-1)}}'}\right)^i
=\frac{\cond^{(L)}(1+{\cond^{(L-1)}}')}
{\cond^{(L)}+{\cond^{(L-1)}}'+\cond^{(L)}{\cond^{(L-1)}}'}.\]
The analogous formula holds for $\P_{\root'}[\cA^c]$. Recalling \cite{lyons1997unsolved} 
that as $L \to \infty$, $\cond^{(L)} \to \cond$ and ${\cond^{(L)}}' \to \cond'$
where $\cond$ and $\cond'$ are the conductances of the infinite trees $T$ and
$T'$, and $h^{(L)}(v) \to h(v)$ and ${h^{(L)}}'(v') \to h'(v')$ for any fixed
$v \in T$ and $v' \in T'$ where $h$ and $h'$ are the the harmonic
measures of the infinite trees $T$ and $T'$, as defined in Section
\ref{sec:harmonic}, this implies that in case (III),
\begin{align*}
\lim_{L \to \infty} \E_{\bz}[F_{\ell,L}(T,\root,\bz)F_{\ell,L}(T,\root',\bz)]
&=\frac{1}{\ell+1}\left(\sum_{v \in \Ball_{\ell-1}}
h(v)\frac{\cond\sqrt{1+\cond'}}{\cond+\cond'+\cond\cond'}
+\sum_{v' \in \Ball_{\ell-1}'} h'(v')
\frac{\cond'\sqrt{1+\cond}}{\cond+\cond'+\cond\cond'}\right)\\
&=\frac{\ell}{\ell+1}\frac{\cond\sqrt{1+\cond'}+\cond'\sqrt{1+\cond}}
{\cond+\cond'+\cond\cond'},
\end{align*}
where the second equality follows from $\sum_{v:k(v)=k} h(v)=1$ and
$\sum_{v':k'(v')=k} h'(v')=1$ for each $k=0,\ldots,\ell-1$.

Let $\1\{\mathrm{I}\}$, $\1\{\mathrm{II}\}$, and $\1\{\mathrm{III}\}$ indicate
which of the above three cases occur. The event that $T$ goes extinct equals
the event $\{\cond=0\}$ a.s., and similarly for $T'$ and $\cond'$, so by the
same argument as in the proof of Lemma \ref{lemma:ERlowerboundhelper},
as $L \to \infty$, $\1\{\mathrm{I}\} \to \1\{\cond>0 \text{ or } \cond'>0\}$,
$\1\{\mathrm{II}\} \to 0$, and $\1\{\mathrm{III}\} \to
\1\{\cond=0,\cond'=0\}$. Combining the three cases above, taking the full
expectation with respect to $\nue^*$, letting $L \to
\infty$, and applying the bounded convergence theorem,
\[\lim_{L \to \infty} \cE(F_{\ell,L},\nu)=d\,\E\Psi^{(\ell)}(\cond,\cond')\]
where
\[\Psi^{(\ell)} \equiv \begin{cases} \frac{\ell}{\ell+1}
\frac{\cond\sqrt{1+\cond'}+\cond'
\sqrt{1+\cond}}{\cond+\cond'+\cond\cond'} & \text{ if } \cond>0 \text{ or }
\cond'>0 \\ 1 & \text{ otherwise}. \end{cases}\]
Then taking $\ell \to \infty$ and applying again the bounded convergence
theorem yields the desired result.

\subsection{Proof of Lemma \ref{lemma:SBMlowerbound}}
We use throughout the fixed values $\delta \in (0,1]$,
$d=(a+b)/2$, $\mu=(a-b)/2$, and
$\lambda=\mu/\sqrt{d}$. Recall that by assumption, $d \geq 2$ and $\lambda>1$.

For any rooted tree $(T,\root,\bsigma)$ with vertex labels $\bsigma:V(T) \to
\{+1,-1,u\}$, define by $k(v):=\dist(v,\root)$ the distance from $v$
to $\root$. For $\ell \geq 0$, define $N_\ell$ and $X_\ell$ as in
(\ref{eq:Xl}), and define
\[N_\ell^+:=|\{v:k(v)=\ell,\sigma(v)=+1\}|,\;\;
N_\ell^-:=|\{v:k(v)=\ell,\sigma(v)=-1\}|,\]
\begin{equation}\label{eq:Dl}
D_\ell:=\delta^{-1}\mu^{-\ell}(N^+_\ell-N^-_\ell).
\end{equation}
Note that $(X_\ell,D_\ell)$ is computable from the observed labels in
$\Ball_\ell(\root;T)$.

For $(T,\root,\bsigma)$ a random two-type Galton-Watson tree with offspring
distributions
$\Pois(a/2)$ and $\Pois(b/2)$ and vertex labels partially revealed with
probability $\delta$, we denote by $\sigma_\true(\root) \in \{+1,-1\}$ the
vertex label of the hidden partition that contains $\root$. (So
$\sigma_\true(\root)=\sigma(\root)$ if the label of $\root$ is revealed.)
Define
\begin{equation}\label{eq:XY}
(X,Y):=\lim_{\ell \to \infty} (X_\ell,\sigma_\true(\root)D_\ell),
\end{equation}
where the limit exists by the following lemma.
\begin{lemma}\label{lemma:XYmomentstails}
Let $\delta \in (0,1]$ and let $a,b>0$ be such that $d>1$ and $\lambda>1$.
Let $(X_\ell,D_\ell)$ be defined by (\ref{eq:Xl}) and (\ref{eq:Dl})
for the two-type Galton-Watson
tree with offspring distributions $\Pois(a/2)$ and $\Pois(b/2)$ and vertex
labels partially revealed with probability $\delta$. Then
the limit $(X,Y)$ in (\ref{eq:XY}) exists almost surely, and $X$ and $Y$ satisfy
\begin{equation}\label{eq:XYmoments}
\E[X]=1,\;\;\Var[X]=\frac{1}{d-1},\;\;
\E[Y]=1,\;\;\Var[Y]=\frac{d}{\mu^2-d}.
\end{equation}
Furthermore, if $d \geq 2$, then for some universal constants $C,c>0$
and any $\gamma>0$,
\begin{equation}\label{eq:XYtails}
\P\left[|X-1| \geq \frac{\gamma}{\sqrt{d-1}} \right] \leq
C\exp(-c\gamma),\;\;\;\;
\P\left[|Y-1| \geq \gamma\sqrt{\frac{d}{\mu^2-d}} \right] \leq C\exp(-c\gamma).
\end{equation}
\end{lemma}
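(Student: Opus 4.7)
}
My plan is to identify the right martingales, derive the moments from $L^2$-bounded martingale convergence, and then attack the exponential tails via a moment-generating-function (MGF) recursion. First, note that $X_\ell=d^{-\ell}N_\ell$ is a martingale in the filtration $\cF_\ell$ generated by the tree structure up to depth $\ell$, because each vertex has $\Pois(a/2)+\Pois(b/2)=\Pois(d)$ total children. For $Y$, I introduce the auxiliary process
\[
\tilde Y_\ell := \mu^{-\ell}\sum_{v:k(v)=\ell}\sigma_\true(\root)\sigma_\true(v),
\]
in the enlarged filtration $\widetilde\cF_\ell$ generated by the tree and the \emph{true} labels up to depth $\ell$. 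By the Poisson offspring structure, the conditional increment at each parent $p$ is $\sigma_\true(\root)\sigma_\true(p)(\Pois(a/2)-\Pois(b/2))$ with conditional mean $\mu\,\sigma_\true(\root)\sigma_\true(p)$, so $\tilde Y_\ell$ is a signed martingale with $\tilde Y_0=1$. Conditional on $\widetilde\cF_\ell$, the revealed labels add i.i.d.\ Bernoulli$(\delta)$ noise, giving the orthogonal decomposition
\[
Y_\ell = \tilde Y_\ell + \delta^{-1}\mu^{-\ell}\sum_{v:k(v)=\ell}\sigma_\true(\root)\sigma_\true(v)(R_v-\delta),
\]
where $R_v\sim\mathrm{Ber}(\delta)$ are i.i.d.

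The standard conditional-variance recursions, $\E[X_\ell^2\mid\cF_{\ell-1}]=X_{\ell-1}^2+d^{-\ell}X_{\ell-1}$ (since $\Var(\Pois(dN_{\ell-1}))=dN_{\ell-1}$) and $\E[\tilde Y_\ell^2\mid\widetilde\cF_{\ell-1}]=\tilde Y_{\ell-1}^2+d\,\mu^{-2\ell}N_{\ell-1}$ (since $\Var(\Pois(a/2)-\Pois(b/2))=d$), integrate to
\[
\E[X_\ell^2]=1+\sum_{j=1}^\ell d^{-j},\qquad \E[\tilde Y_\ell^2]=1+\sum_{j=1}^\ell(d/\mu^2)^j,
\]
which are bounded because $d>1$ and $\mu^2=\lambda^2 d>d$. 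Hence both martingales converge almost surely and in $L^2$, with $\E[X]=1$, $\Var(X)=1/(d-1)$, $\E[\tilde Y_\infty]=1$, and $\Var(\tilde Y_\infty)=d/(\mu^2-d)$. The noise term $Y_\ell-\tilde Y_\ell$ has conditional variance $\delta^{-1}(1-\delta)(d/\mu^2)^\ell X_\ell$, which is summable in $L^1$, so by a Borel--Cantelli/Chebyshev argument it vanishes almost surely. Thus the limit $(X,Y)$ exists a.s., $Y=\tilde Y_\infty$, and the mean/variance claims in (\ref{eq:XYmoments}) follow.

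The main obstacle is the exponential tail bound (\ref{eq:XYtails}) with \emph{universal} constants $c,C$ independent of $a,b,d,\delta$. Since $Y=\tilde Y_\infty$ almost surely, both tails reduce to controlling martingale limits on the Galton--Watson tree. I will use the distributional fixed point equations
\[
X \stackrel{d}{=} d^{-1}\!\!\sum_{i=1}^{\Pois(d)}\! X_i,\qquad
\tilde Y_\infty \stackrel{d}{=} \mu^{-1}\Bigl(\sum_{i=1}^{\Pois(a/2)} Y_i-\sum_{j=1}^{\Pois(b/2)} Y_j'\Bigr),
\]
with $X_i,Y_i,Y_j'$ i.i.d.\ copies. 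These translate to MGF recursions
\[
\phi(t):=\E[e^{t(X-1)}]=\exp\!\bigl(d(\phi(t/d)-1)-t\bigr),
\]
\[
\Phi(t):=\E[e^{t(\tilde Y_\infty-1)}]=\exp\!\bigl(\tfrac{a}{2}(\Phi(t/\mu)-1)+\tfrac{b}{2}(\Phi(-t/\mu)-1)-t\bigr).
\]
The plan is to verify by induction on the depth-$\ell$ truncations $\phi_\ell,\Phi_\ell$ (which are finite since each $X_\ell,\tilde Y_\ell$ is a bounded polynomial in Poisson variables of bounded total rate) that
\[
\phi_\ell(t)\le \exp\!\Bigl(\tfrac{c_1 t^2}{d-1}\Bigr)\quad\text{and}\quad \Phi_\ell(t)\le \exp\!\Bigl(\tfrac{c_1 d\, t^2}{\mu^2-d}\Bigr)
\]
for $|t|$ in a $d$-dependent range of the correct order $\sqrt{d-1}$ and $\sqrt{(\mu^2-d)/d}$, respectively. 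The inductive step uses the elementary inequality $e^s-1-s\le s^2$ for $|s|\le 1$ (here $s=t/d$ or $s=\pm t/\mu$); the assumption $d\ge 2$ is exactly what is needed to make the geometric-series constants absolute, since $d/(d-1)\le 2$. Passing $\ell\to\infty$ by monotone convergence and applying a Chernoff bound then yields (\ref{eq:XYtails}). The subtlety to watch is that $\tilde Y_\infty$ is signed, so both tails $\P[\tilde Y_\infty-1\ge\gamma\sqrt{d/(\mu^2-d)}]$ and $\P[\tilde Y_\infty-1\le -\gamma\sqrt{d/(\mu^2-d)}]$ must be bounded; the symmetric structure of $\Phi$ under $t\mapsto -t$ on same-sign vs.\ opposite-sign branches handles this uniformly.
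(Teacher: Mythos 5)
Your treatment of the moments and of the almost-sure existence of the limit $(X,Y)$ is correct and in substance matches the paper's. In fact the paper's proof also proceeds by first treating $\delta=1$ and then reducing $\delta<1$ to it, but it compares $N_\ell^+$ with $N_{\ell,\true}^+$ via a conditional Hoeffding bound and Borel--Cantelli; your orthogonal decomposition $Y_\ell=\tilde Y_\ell+\text{noise}$ with the conditional variance $\delta^{-1}(1-\delta)(d/\mu^2)^\ell X_\ell$, followed by $\sum_\ell \E[(Y_\ell-\tilde Y_\ell)^2]<\infty$, is a perfectly good (arguably mildly cleaner) route to the same conclusion $Y=\tilde Y_\infty$. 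Also, working with the marginal fixed-point recursion for $\tilde Y$ (closed in itself) rather than with the paper's joint forward MGF $M_\ell(t,s)$ is a legitimate reformulation of the same object.

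The tail-bound argument, however, has a genuine gap, and it concerns precisely the regime the lemma must cover: $d\ge 2$ fixed, $\lambda>1$ arbitrarily close to $1$. First, a minor but real slip: the correct centered-MGF recursion is
\[
\phi(t)=e^{-t}\exp\!\bigl(d\bigl(e^{t/d}\phi(t/d)-1\bigr)\bigr),\qquad
\Phi(t)=e^{-t}\exp\!\Bigl(\tfrac a2\bigl(e^{t/\mu}\Phi(t/\mu)-1\bigr)+\tfrac b2\bigl(e^{-t/\mu}\Phi(-t/\mu)-1\bigr)\Bigr),
\]
and it is exactly the factors $e^{t/d}$, $e^{\pm t/\mu}$ that produce the linear term $+t$ cancelling the $-t$; without them the recursion you wrote is not an identity. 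More seriously, the proposed inductive hypothesis $\Phi_\ell(t)\le\exp\bigl(\beta' t^2\bigr)$ with $\beta'=c_1 d/(\mu^2-d)$ on the range $|t|\le S:=\alpha\sqrt{(\mu^2-d)/d}$ does not propagate. Plugging the hypothesis into the (corrected) recursion and expanding with $e^x\le 1+x+x^2$ produces, after the cancellation of the linear terms, a contribution
\[
\frac{d(\beta'+1)}{\mu^2}\,t^2+\frac{2\beta'}{\mu^2}\,t^3+\frac{d\,\beta'^2}{\mu^4}\,t^4,
\]
and the closing of the induction requires $\tfrac{2\beta' t}{d}\le c_1-1$, i.e.\ $2c_1\alpha/\sqrt{d(\mu^2-d)}\le c_1-1$. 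Since $\mu^2-d$ can be made arbitrarily small while $d$ is fixed, no universal pair $(\alpha,c_1)$ satisfies this, so the inequality claiming $\Phi_\ell(t)\le\exp(\beta't^2)$ is not maintained precisely when $\lambda\downarrow 1$. (It works for $X$ because there the corresponding obstruction is $2c_1\alpha/(d\sqrt{d-1})$, which is bounded under $d\ge 2$.) The paper escapes this by using a \emph{linear}, not quadratic, inductive bound: $M_\ell(t,s)\le\exp\bigl(t+s+|t|\sqrt{d-1}\sum_{k\le\ell}d^{-k}+|s|\sqrt{(\mu^2-d)/d}\sum_{k\le\ell}d^k\mu^{-2k}\bigr)$ on a shrinking-but-positive range, and it is the linearity that prevents the cubic cross-term from acquiring a $\sqrt{d/(\mu^2-d)}$ factor. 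Fixing your plan therefore requires replacing the sub-Gaussian MGF hypothesis by a linear one (and then a Chernoff bound at the top of the allowed $t$-range, exactly as the paper does), not merely tightening constants.
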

The proof of this lemma is deferred to Appendix \ref{appendix:XY}.

Analogous to our proof of Lemma \ref{lemma:ERlowerbound} in the \ER case,
to establish Lemma \ref{lemma:SBMlowerbound},
we first prove the following intermediary result.
\begin{lemma}\label{lemma:SBMlowerboundhelper}
Let $(X,Y)$, $(X',Y')$ be independent pairs of random variables with law 
defined by (\ref{eq:XY}), where $(X_\ell,D_\ell)$ are defined by (\ref{eq:Xl})
and (\ref{eq:Dl}) for
the two-type Galton-Watson tree with offspring distributions $\Pois(a/2)$ and
$\Pois(b/2)$ and vertex labels partially revealed with probability $\delta$.
Then in the setup of Lemma \ref{lemma:SBMlowerbound},
for any fixed $\alpha>0$ and for each $\ell \geq 1$, there
exists a local algorithm $F_{\ell,\alpha} \in \cF_*^{\cM}(\ell)$
satisfying (\ref{eq:Fconditions}) such that
\[\lim_{\ell \to \infty} \cE(F_{\ell,\alpha},\nu)=
\E[\us(X,Y,X',Y';\alpha)],\]
where
\begin{align}
\us(X,Y,X',Y';\alpha)&:=
\frac{a}{2}\frac{\frac{1}{\sqrt{d}}(X+X')+\alpha(Y+\frac{Y'}{\mu})
(Y'+\frac{Y}{\mu})}{\sqrt{X+\frac{X'}{d}+\alpha(Y+\frac{Y'}{\mu})^2}
\sqrt{X'+\frac{X}{d}+\alpha(Y'+\frac{Y}{\mu})^2}}\\
&\hspace{0.5in}
+\frac{b}{2}\frac{\frac{1}{\sqrt{d}}(X+X')-\alpha(Y-\frac{Y'}{\mu})
(Y'-\frac{Y}{\mu})}{\sqrt{X+\frac{X'}{d}+\alpha(Y-\frac{Y'}{\mu})^2}
\sqrt{X'+\frac{X}{d}+\alpha(Y'-\frac{Y}{\mu})^2}}\nonumber
\end{align}
if $X>0$ or $X'>0$, and $\us(X,Y,X',Y';\alpha):=d$ if $X=X'=0$.
\end{lemma}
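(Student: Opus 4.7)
My plan parallels the proof of Lemma \ref{lemma:ERlowerboundhelper}, adding a second component that exploits the partially revealed labels. For a rooted tree $(T,\root,\bsigma)$ write $k(v):=\dist(v,\root)$, let $s(v):=\sigma(v)$ if $\sigma(v)\in\{+1,-1\}$ and $s(v):=0$ if $\sigma(v)=u$, and define the $\bz$-linear statistic $G_\ell:=\sum_{v\in \Ball_\ell(\root;T)} d^{-k(v)/2}\,z(v)$, the ($\bz$-free) label statistic $D_\ell^\star:=\delta^{-1}\sum_{v\in \Ball_\ell(\root;T)}\mu^{-k(v)} s(v)$, and the normalizer $V_\ell:=\sum_{v\in \Ball_\ell(\root;T)} d^{-k(v)}$. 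For fixed $\alpha>0$ I propose
\[
F_{\ell,\alpha}(T,\root,\bsigma,\bz):=
\begin{cases}
\dfrac{G_\ell/\sqrt{\ell+1}+\sqrt{\alpha}\,D_\ell^\star}{\sqrt{V_\ell/(\ell+1)+\alpha\,(D_\ell^\star)^2}} & \text{if } N_\ell>0,\\[0.4em]
\sign\!\bigl(\sum_{v\in \Ball_\ell(\root;T)} z(v)\bigr) & \text{otherwise.}
\end{cases}
\]
Since $G_\ell$ is $\bz$-linear with $\E_\bz[G_\ell]=0$ and $\E_\bz[G_\ell^2]=V_\ell$, the unit conditional-variance condition of (\ref{eq:Fconditions}) is immediate. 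The mean-zero condition follows from the $\sigma_\true\mapsto -\sigma_\true$ symmetry of $\nu$, which flips the sign of $D_\ell^\star$ while preserving the denominator and leaving $\E_\bz[G_\ell]=0$.

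To evaluate $\cE(F_{\ell,\alpha},\nu)$, I use the edge-perspective description from Example \ref{example:SBM} and split $T$ into the subtrees $T_\root, T_{\root'}$ joined at the root edge. The $\bz$-covariance bookkeeping is identical to that of Lemma \ref{lemma:ERlowerboundhelper}: $\E_\bz[G_\ell(\root) G_\ell(\root')] = d^{-1/2}(S_{\ell-1}+S'_{\ell-1})$ and $\E_\bz[G_\ell(\root)^2] = S_\ell+d^{-1}S'_{\ell-1}$, yielding Ces\`aro limits $d^{-1/2}(X+X')$ and $X+X'/d$ per unit of $\ell$. For the label piece, the decomposition $D_\ell^\star(\root)=D_{\ell,T_\root}+\mu^{-1} D_{\ell-1,T_{\root'}}$ together with Lemma \ref{lemma:XYmomentstails} gives the almost sure limit $\sigma_\true(\root)\,Y+\mu^{-1}\sigma_\true(\root')\,Y'$, so that $D_\ell^\star(\root)D_\ell^\star(\root')$ converges to $(Y+Y'/\mu)(Y'+Y/\mu)$ on the same-type edge event and to $-(Y-Y'/\mu)(Y'-Y/\mu)$ on the cross-type event (the factor $-1$ arising from $\sigma_\true(\root)\sigma_\true(\root')=-1$), with the analogous limits $(Y\pm Y'/\mu)^2$, $(Y'\pm Y/\mu)^2$ for the squared normalizers. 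Weighting these two events by the conditional probabilities $a/(a+b)$ and $b/(a+b)$ absorbs the overall prefactor $d=(a+b)/2$ in $\cE(F_{\ell,\alpha},\nu)$ into the $a/2$ and $b/2$ prefactors in the statement of $\us$, and the sign flip in the cross-type numerator reproduces the $-\alpha(Y-Y'/\mu)(Y'-Y/\mu)$ term in the $b/2$ summand.

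On $\{N_\ell(\root)>0, N_\ell(\root')>0\}$ the ratio $\E_\bz[F_{\ell,\alpha}(\root) F_{\ell,\alpha}(\root')]$ thus converges almost surely to $\us(X,Y,X',Y';\alpha)$; the extinction event $\{X=X'=0\}$ contributes $d\cdot 1=d$ via the $\sign$ fallback, exactly as in the ER proof; and the intermediate event on which only one of the two subtrees dies out has vanishing probability as $\ell\to\infty$ and is absorbed by the uniform Cauchy--Schwarz bound $|\E_\bz[F(\root)F(\root')]|\le 1$. Bounded convergence then yields $\cE(F_{\ell,\alpha},\nu)\to d\,\E[\us(X,Y,X',Y';\alpha)]$. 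The main obstacle will be the a.s.\ convergence and integrability of $D_\ell^\star$ on the survival event, which does not follow from the elementary martingale argument used in the \ER case; for this I would invoke the tail bounds of Lemma \ref{lemma:XYmomentstails} to rule out pathological contributions from atypically deep branches and to secure the uniform integrability needed to pass the limit through the expectation. Apart from this step, the argument is a two-component extension of the Ces\`aro computation carried out in Lemma \ref{lemma:ERlowerboundhelper}.
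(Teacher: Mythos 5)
Your proposal mirrors the paper's proof for the \ER case, which is the right plan, but the label statistic is defined incorrectly and this breaks the limit. You set $D_\ell^\star := \delta^{-1}\sum_{v\in\Ball_\ell(\root;T)}\mu^{-k(v)}s(v)$, i.e.\ a sum over all depths $0,\ldots,\ell$ in the ball. This equals $\sum_{k=0}^\ell D_k$ where $D_k:=\delta^{-1}\mu^{-k}(N_k^+-N_k^-)$ is the paper's single-layer martingale from (\ref{eq:Dl}). Since $D_k\to\sigma_\true(\root)Y$ almost surely by Lemma \ref{lemma:XYmomentstails}, your $D_\ell^\star$ grows linearly in $\ell$ on the survival event rather than converging: $D_\ell^\star\approx(\ell+1)\sigma_\true(\root)Y$. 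With your normalization --- $G_\ell/\sqrt{\ell+1}$ (which is $O(1)$) plus $\sqrt{\alpha}D_\ell^\star$ (which is $O(\ell)$) --- the label term completely dominates the Gaussian term as $\ell\to\infty$, and the ratio $\E_\bz[F_{\ell,\alpha}(\root)F_{\ell,\alpha}(\root')]$ converges to $\sigma_\true(\root)\sigma_\true(\root')\,\mathrm{sign}\big((Y\pm Y'/\mu)(Y'\pm Y/\mu)\big)\in\{\pm 1\}$, not to $\us$. Consistent with this, the decomposition you assert for $D_\ell^\star(\root)$ and the almost-sure limit $\sigma_\true(\root)Y+\mu^{-1}\sigma_\true(\root')Y'$ hold for the single-layer quantity $D_\ell$, not for the sum $D_\ell^\star$: for the sum, both sides of the decomposition should carry stars, and the right-hand side diverges.

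The fix is exactly what the paper does: replace your $D_\ell^\star$ by the single-layer $D_\ell$, either in the form $F_{\ell,\alpha}=\big(G_\ell/\sqrt{\ell+1}+\sqrt{\alpha}D_\ell\big)/\sqrt{V_\ell/(\ell+1)+\alpha D_\ell^2}$ or equivalently $\big(G_\ell+D_\ell\sqrt{\alpha\ell}\big)/\sqrt{V_\ell+D_\ell^2\alpha\ell}$ (the two differ only by a common factor $\sqrt{\ell}$ in numerator and denominator). With that substitution, the Ces\`aro computation for the $G_\ell$ piece and the convergence of $D_\ell$ across the two subtrees reproduce the paper's case (III) computation, and the rest of your outline (extinction fallback, Cauchy--Schwarz on the mixed event, bounded convergence, edge-type weighting by $a/(a+b)$ and $b/(a+b)$) is correct. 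One more slip to watch: your final displayed limit carries an extra factor of $d$, $d\,\E[\us(\cdots)]$, but as you yourself note two sentences earlier, the $d=(a+b)/2$ prefactor in $\cE$ is already absorbed into the $a/2$ and $b/2$ coefficients in $\us$, so the correct limit is $\E[\us(X,Y,X',Y';\alpha)]$ with no additional $d$.
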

\begin{proof}
The proof is similar to that of Lemma \ref{lemma:ERlowerboundhelper}, and we
explain only the key differences. Define the local algorithm
\[F_{\ell,\alpha}(T,\root,\bsigma,\bz):=\begin{cases}
\frac{\sum_{v \in \Ball_\ell(\root;T)} d^{-k(v)/2}z(v)+D_\ell
\sqrt{\alpha \ell}}
{\sqrt{\sum_{v \in \Ball_\ell(\root;T)} d^{-k(v)}+D_\ell^2\alpha \ell}}
& X_\ell>0, \\
\sign\left(\sum_{v \in \Ball_\ell(\root;T)} z(v)\right) & X_\ell=0,
\end{cases}\]
where $k(v)$ and $(X_\ell,D_\ell)$ are defined as above for the labeled
tree $(T,\root,\bsigma)$.
When $z(v) \overset{iid}{\sim} \Normal(0,1)$ conditional on
$(T,\root,\bsigma)$, the conditions of (\ref{eq:Fconditions}) hold by
construction, where the first
condition follows from noting that $\E[z(v) \mid T,\root,\bsigma]=0$ and that
$\E[D_\ell]=0$. It remains to compute $\cE(F_{\ell,\alpha},\nu)$.

For $(T,\{\root,\root'\},\bsigma) \in \cT_e(\cM)$,
define $T_\root$ and $T_{\root'}$ as in the proof of Lemma
\ref{lemma:ERlowerboundhelper}. Recall from Example \ref{example:SBM} that
$T_\root$ and $T_{\root'}$ (with the marks $\bsigma|_T$ and $\bsigma|_{T'}$)
each have the law of a two-type Galton-Watson tree
with offspring distributions $\Pois(a/2)$ and $\Pois(b/2)$ and labels partially
revealed with probability $\delta$, and they are conditionally independent given
$\sigma_\true(\root)$ and $\sigma_\true(\root')$.
Define $(X_\ell,D_\ell)$ by (\ref{eq:Xl}) and (\ref{eq:Dl}) for the subtree
$T_\root$, and $(X_\ell',D_\ell')$ by (\ref{eq:Xl}) and (\ref{eq:Dl}) for the
subtree $T_{\root'}$. Define also $S_\ell=\sum_{j=0}^\ell X_j$,
$S_\ell'=\sum_{j=0}^\ell X_\ell'$,
$Y_\ell=\sigma_\true(\root)D_\ell$, and
$Y_\ell'=\sigma_\true(\root')D_\ell'$. Considering the same three cases (I),
(II), and (III) as in the proof of Lemma \ref{lemma:ERlowerboundhelper}, the
same argument shows for case (I)
\[F_{\ell,\alpha}(\Ball_\ell(\root;T),\root,\bsigma,\bz)
F_{\ell,\alpha}(\Ball_\ell(\root';T),\root',\bsigma,\bz)=1,\]
for case (II)
\[|\E_{\bz}[F_{\ell,\alpha}(\Ball_\ell(\root;T),\root,\bsigma,\bz)
F_{\ell,\alpha}(\Ball_\ell(\root';T),\root',\bsigma,\bz)]| \leq 1,\]
and for case (III)
\begin{align*}
&\E_\bz[F_{\ell,\alpha}(\Ball_\ell(\root;T),\root,\bsigma,\bz)
F_{\ell,\alpha}(\Ball_\ell(\root';T),\root',\bsigma,\bz)]\\
&=\begin{cases}
\frac{\frac{1}{\sqrt{d}}(S_{\ell-1}+S_{\ell-1}')+\alpha \ell
(Y_\ell+\frac{1}{\mu}Y_{\ell-1}')
(Y_\ell'+\frac{1}{\mu}Y_{\ell-1})}
{\sqrt{S_\ell+\frac{1}{d}S_{\ell-1}'+\alpha
\ell(Y_\ell+\frac{1}{\mu}Y_{\ell-1}')^2}
\sqrt{S_\ell'+\frac{1}{d}S_{\ell-1}+\alpha
\ell(Y_\ell'+\frac{1}{\mu}Y_{\ell-1})^2}}
& \sigma_\true(\root)=\sigma_\true(\root') \\
\frac{\frac{1}{\sqrt{d}}(S_{\ell-1}+S_{\ell-1}')-\alpha \ell
(Y_\ell-\frac{1}{\mu}Y_{\ell-1}')
(Y_\ell'-\frac{1}{\mu}Y_{\ell-1})}
{\sqrt{S_\ell+\frac{1}{d}S_{\ell-1}'+\alpha
\ell(Y_\ell-\frac{1}{\mu}Y_{\ell-1}')^2}
\sqrt{S_\ell'+\frac{1}{d}S_{\ell-1}+\alpha
\ell(Y_\ell'-\frac{1}{\mu}Y_{\ell-1})^2}}
& \sigma_\true(\root)=-\sigma_\true(\root').
\end{cases}
\end{align*}
Note that by the symmetry of $+1$ and $-1$ labels in the definition of the
two-type Galton-Watson tree,
$\{(X_\ell,Y_\ell)\}_{\ell=1}^\infty$
is independent of $\sigma_\true(\root)$, and
similarly $\{(X_\ell',Y_\ell')\}_{\ell=1}^\infty$
is independent of $\sigma_\true(\root')$.
Hence by the characterization of $\nue$ in
Example \ref{example:SBM}, $\{(X_\ell,Y_\ell)\}_{\ell=1}^\infty$
is independent of $\{(X_\ell',Y_\ell')\}_{\ell=1}^\infty$ under 
$\nue$. Then convergence of C\'esaro sums implies, almost surely,
\begin{align*}
&\lim_{\ell \to \infty} \tfrac{1}{\ell}S_\ell
=\lim_{\ell \to \infty} \tfrac{1}{\ell}S_{\ell-1}=\lim_{\ell \to \infty} X_{\ell}=X,\\
&\lim_{\ell \to \infty} \tfrac{1}{\ell}S_{\ell}'
=\lim_{\ell \to \infty} \tfrac{1}{\ell}S_{\ell-1}'=\lim_{\ell \to \infty}
X_\ell'=X',\\
&\lim_{\ell \to \infty} Y_\ell=\lim_{\ell \to \infty} Y_{\ell-1}=Y,\\
&\lim_{\ell \to \infty} Y_\ell'=\lim_{\ell \to \infty} Y_{\ell-1}'=Y',
\end{align*}
where $(X,Y)$ and $(X',Y')$ are independent pairs of random variables with law
defined by (\ref{eq:XY}). As in the proof of Lemma
\ref{lemma:ERlowerboundhelper}, as $\ell \to \infty$, $\1\{\mathrm{I}\} \to
\1\{X=0,X'=0\}$, $\1\{\mathrm{II}\} \to 0$, and $\1\{\mathrm{III}\} \to
\1\{X>0 \text{ or } X'>0\}$.
Then combining these three cases, taking the full expectation with respect to
$\nue^*$, recalling from Example \ref{example:SBM} that under $\nue^*$ we have
$\sigma_\true(\root)=\sigma_\true(\root')$ with probability
$a/(a+b)=a/(2d)$ and $\sigma_\true(\root)=-\sigma_\true(\root')$ with
probability $b/(a+b)=b/(2d)$,
letting $\ell \to \infty$, and applying the bounded convergence theorem,
we obtain the desired result.
\end{proof}

\begin{proof}[Proof of Lemma \ref{lemma:SBMlowerbound}]
We compute a lower bound for the quantity $\E[\us(X,Y,X',Y';\alpha)]$ 
in Lemma \ref{lemma:SBMlowerboundhelper}, for the choice
\[\alpha=\frac{\mu^2-d}{\mu^2}\frac{1}{\sqrt{d}}.\]

For any positive function $f(\mu,d)$ and any random
variable $Z:=Z(\mu,d)$ whose law depends on $\mu$ and $d$, we write
$Z=\OP(f(\mu,d))$ if $\E[|Z|^k] \leq C_kf(\mu,d)^k$ for some constants
$\{C_k\}_{k \geq 0}$ independent of $\mu$ and $d$ and for all
$k \geq 1$, $d \geq 2$, and $\mu>\sqrt{d}$. By the Minkowski, Cauchy-Schwarz,
and Jensen inequalities, if $Z=\OP(f(\mu,d))$ and $Z'=\OP(g(\mu,d))$, then
$Z+Z'=\OP(f(\mu,d)+g(\mu,d))$, $ZZ'=\OP(f(\mu,d)g(\mu,d))$, and
$\sqrt{|Z|}=\OP(\sqrt{f(\mu,d)})$. Note that if
$\P[|Z| \geq \gamma f(\mu,d)] \leq C\exp(-c\gamma)$
for some constants $C,c>0$ and all $\gamma>0$, then for each $k \geq 1$
\[\E\left[\frac{|Z|^k}{f(\mu,d)^k}\right]
=\int_0^\infty k\gamma^{k-1}
\P\left[\frac{|Z|}{f(\mu,d)}>\gamma\right]d\gamma \leq C_k\]
for a constant $C_k>0$, and hence $Z=\OP(f(\mu,d))$. Then 
Lemma \ref{lemma:XYmomentstails} implies $X-1=\OP(1/\sqrt{d})$ and
$Y-1=\OP(\sqrt{d/(\mu^2-d)})$. This then implies $X=\OP(1)$
and $Y^2 \leq 2(Y-1)^2+2=\OP(\mu^2/(\mu^2-d))$, so
$\alpha Y^2,\alpha {Y'}^2,\alpha YY'=\OP(1/\sqrt{d})$.

Let us write $W_{\pm}=X+\frac{X'}{d}+\alpha(Y\pm \frac{Y'}{\mu})^2$
and $W_{\pm}'=X'+\frac{X}{d}+\alpha(Y'\pm \frac{Y}{\mu})^2$.
Define the event $\cE=\{X>1/2 \text{ and } X'>1/2\}$. On $\cE$, we have
$W_{\pm},W_{\pm}'>1/2$. Then applying the bound
$|(1+x)^{-1/2}-1+x/2| \leq x^2$ for all $x>-1/2$ and noting
$1/\mu<1/\sqrt{d}$,
\begin{align*}
\frac{1}{\sqrt{W_{\pm}W_{\pm}'}}\1\{\cE\}
&=\left(1-\frac{X-1}{2}-\frac{\alpha Y^2}{2}+\OP(1/d)\right)
\left(1-\frac{X'-1}{2}-\frac{\alpha {Y'}^2}{2}+\OP(1/d)\right)\1\{\cE\}\\
&=\left(1-\frac{X-1}{2}-\frac{X'-1}{2}-\frac{\alpha Y^2}{2}
-\frac{\alpha {Y'}^2}{2}+\OP(1/d)\right)\1\{\cE\}.
\end{align*}
Then, recalling $d=(a+b)/2$ and $\mu=(a-b)/2$ and noting $a,b \leq d$,
\begin{align*}
\us(X,Y,X',Y';\alpha)\1\{\cE\}
&=\left(d\left(\frac{1}{\sqrt{d}}(X+X')
+\frac{\alpha}{\mu}(Y^2+{Y'}^2)\right)+\alpha\mu\left(1+\frac{1}{\mu^2}\right)
YY'\right)\\
&\hspace{0.5in}\left(1-\frac{X-1}{2}-\frac{X'-1}{2}-\frac{\alpha Y^2}{2}
-\frac{\alpha {Y'}^2}{2}+\OP(1/d)\right)\1\{\cE\}\\
&=\left[2\sqrt{d}+\sqrt{d}(X-1+X'-1)+\tfrac{\alpha d}{\mu}(Y^2+{Y'}^2)
+\alpha\mu YY'\right.\\
&\left.\hspace{1in}-\sqrt{d}(X-1+X'-1)-\alpha\sqrt{d}(Y^2+{Y'}^2)
+\OP(1/\sqrt{d})\right]\1\{\cE\}\\
&=\left[2\sqrt{d}+\alpha\left(\frac{d}{\mu}-\sqrt{d}\right)
(Y^2+{Y'}^2)+\alpha \mu YY'+\OP(1/\sqrt{d})\right]\1\{\cE\}.
\end{align*}

Writing
\begin{equation}\label{eq:Rdef}
R:=\us(X,Y,X',Y';\alpha)-\left(2\sqrt{d}+\alpha\left(\frac{d}{\mu}-\sqrt{d}
\right)(Y^2+{Y'}^2)+\alpha \mu YY'\right),
\end{equation}
the above implies $\E[|R|\1\{\cE\}] \leq C/\sqrt{d}$ for an absolute constant
$C>0$ and all $d \geq 2$ and $\mu>\sqrt{d}$. On the other hand,
$\E[|R|\1\{\cE^c\}] \leq \E[R^2]\P[\cE^c] \leq \E[R^2](Ce^{-cd})$ for some
constants $C,c>0$ by (\ref{eq:XYtails}). Noting that
$\us(X,Y,X',Y';\alpha)$ satisfies the deterministic bound
\[\us(X,Y,X',Y';\alpha)
\leq \sqrt{d}\frac{(X+X')}{\sqrt{X+\frac{X'}{d}}\sqrt{X'+\frac{X}{d}}}+
\frac{a}{2}+\frac{b}{2}\leq 2d\]
(which holds also if $X=X'=0$, by definition of $\us$),
we have $R=\OP(d)$, so $\E[R^2] \leq Cd^2$. As $d^2e^{-cd} \leq C'/\sqrt{d}$ for
all $d \geq 2$ and some constant $C'>0$, this yields
$\E[|R|] \leq C/\sqrt{d}$. Finally, applying (\ref{eq:Rdef}) and
(\ref{eq:XYmoments}), for any $d \geq 2$ and $\mu>\sqrt{d}$,
\[\E[\us(X,Y,X',Y';\alpha)] \geq 2\sqrt{d}
+2\alpha\left(\frac{d}{\mu}-\sqrt{d}\right)\frac{\mu^2}{\mu^2-d}
+\alpha \mu-\frac{C}{\sqrt{d}}
=2\sqrt{d}+\frac{(\mu-\sqrt{d})^2}{\mu\sqrt{d}}-\frac{C}{\sqrt{d}}.\]
Combining with Lemma \ref{lemma:SBMlowerboundhelper} yields the desired result.
\end{proof}

\section*{Acknowledgements}
Z.F.\ was partially supported by a Hertz Foundation Fellowship and an
NDSEG Fellowship (DoD AFOSR 32 CFR 168a).
A.M.\ was partially supported by the NSF grants CCF-1319979.

\appendix
\section{Combinatorial lemmas}\label{app:combinatorics}
In this appendix, we prove Lemmas \ref{lemma:alternatingprobbound} and 
\ref{lemma:equivclassbound}. For any graph $H$, $l \geq 1$, and set of
vertices $S$ in $H$, let $\Ball_l(S;H)$ denote the subgraph consisting of
all vertices at distance at most $l$ from $S$ in $H$
(including $S$ itself) and all edges between pairs of such vertices.
Let $|\Ball_l(S;H)|$ denote the number of such
vertices. For a single vertex $v$, we write $\Ball_l(v;H):=\Ball_l(\{v\};H)$.

\begin{lemma}\label{lemma:ballbound}
Fix $d>1$ and consider the \ER graph $G \sim \sG(n,d/n)$. Then
there exist $C,c>0$ such that for any $s \geq 0$, $l \geq 1$, and
$v \in \{1,\ldots,n\}$,
\[\P[|\Ball_l(v;G)|>sd^l] \leq Ce^{-cs}.\]
\end{lemma}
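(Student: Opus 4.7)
The plan is to dominate $|\Ball_l(v;G)|$ by the total population $Y_l$ of a Galton--Watson branching process up to generation $l$, and then to prove that $Y_l/d^l$ has uniform exponential tails via a moment generating function recursion. The BFS domination is standard: explore $\Ball_l(v;G)$ breadth-first from $v$; the number of newly discovered neighbors of any already-visited vertex is stochastically dominated by $\Binom(n,d/n)$, so $|\Ball_l(v;G)|$ is stochastically dominated by $Y_l:=\sum_{k=0}^l Z_k$, where $\{Z_k\}_{k\geq 0}$ is a Galton--Watson process with $Z_0=1$ and $\Binom(n,d/n)$ offspring. Decomposing $Y_l = 1+\sum_{i=1}^{Z_1} Y_{l-1}^{(i)}$ with the $Y_{l-1}^{(i)}$ i.i.d.\ copies of $Y_{l-1}$ independent of $Z_1$, writing $\Phi_l(\theta):=\E[e^{\theta Y_l}]$ and $u_l(\theta):=\log\Phi_l(\theta)$, and using $\log\E[e^{\theta Z_1}]=n\log(1+(d/n)(e^\theta-1))\leq d(e^\theta-1)$, I obtain the recursion
\[
u_l(\theta)\leq \theta+d\bigl(e^{u_{l-1}(\theta)}-1\bigr),\qquad u_0(\theta)=\theta.
\]

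The crux is to show that for a suitable $\theta_0>0$ depending only on $d$, one has $u_l(\theta_0/d^l)\leq K_d$ for all $l\geq 0$, where $K_d$ is another constant depending only on $d$. Using $e^x-1\leq x+x^2$ for $x\leq 1$, I would prove inductively that $u_l(\theta)\leq a_l\theta+b_l\theta^2$ on the range of $\theta$ for which $u_{l-1}(\theta)\leq 1$, with $a_l=1+d\,a_{l-1}\leq d^{l+1}/(d-1)$ and $b_l\leq d\,b_{l-1}+d\,a_{l-1}^2$. The second recursion gives $b_l/d^{2l}\leq b_{l-1}/d^{2l-1}+C/(d-1)^2$, hence $b_l\leq C'd^{2l+1}/(d-1)^2$. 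Substituting $\theta=\theta_0/d^l$ then yields
\[
u_l(\theta_0/d^l)\leq \frac{\theta_0\, d}{d-1}+\frac{C'\theta_0^2\, d}{(d-1)^2}=:K_d,
\]
which is bounded uniformly in $l$ and, for $\theta_0$ small enough, also satisfies $K_d\leq 1$, closing the induction.

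Markov's inequality at $\theta=\theta_0/d^l$ then finishes the proof:
\[
\P\bigl[|\Ball_l(v;G)|>sd^l\bigr]\leq \P[Y_l>sd^l]\leq e^{-\theta_0 s}\,\Phi_l(\theta_0/d^l)\leq e^{K_d}\,e^{-\theta_0 s},
\]
which is the desired bound with $c=\theta_0$ and $C=e^{K_d}$. The main obstacle I expect is the bookkeeping in the MGF induction: one must verify that the quadratic coefficient $b_l$ grows only like $d^{2l}$---so that after rescaling by $\theta^2=\theta_0^2/d^{2l}$ it contributes a bounded amount---which relies on $d>1$ to make the relevant geometric series converge, and one must simultaneously ensure $u_{l-1}\leq 1$ throughout so that the linearization $e^x-1\leq x+x^2$ remains valid.
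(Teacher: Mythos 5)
Your approach is correct in spirit and genuinely different in route from the paper's, which disposes of this lemma in a single line by citing Lemma~29 of Bordenave, Lelarge and Massouli\'e \cite{bordenaveetal}; you instead give a self-contained argument via stochastic domination of $|\Ball_l(v;G)|$ by the cumulative Galton--Watson population $Y_l$ with $\Binom(n,d/n)$ offspring followed by a moment-generating-function recursion. This is a standard and entirely valid way to get exponential tails at the scale $d^l$, and it would make the appendix self-contained where the paper defers to the literature. One small repair is needed in the bookkeeping: when you substitute $u_{l-1}(\theta)\le a_{l-1}\theta+b_{l-1}\theta^2$ into $u_l(\theta)\le\theta+d\,u_{l-1}(\theta)+d\,u_{l-1}(\theta)^2$, the squared term produces cubic and quartic cross terms $2a_{l-1}b_{l-1}\theta^3+b_{l-1}^2\theta^4$ that your recursion $b_l\le d\,b_{l-1}+d\,a_{l-1}^2$ silently drops. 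This is harmless because on the target range $\theta\le\theta_0/d^l$ one can check $b_{l-1}\theta\le \theta_0\,a_{l-1}/(d-1)^2$, so $u_{l-1}(\theta)\le(1+O(\theta_0))\,a_{l-1}\theta$ and a corrected recursion $b_l\le d\,b_{l-1}+2d\,a_{l-1}^2$ (valid for $\theta_0$ small) still gives $b_l=O(d^{2l})$; but you should carry the extra constant and, more importantly, state the induction as a joint claim over all of $[0,\theta_0/d^l]$, namely that both $u_l(\theta)\le a_l\theta+b_l\theta^2$ and $u_l(\theta)\le K_d\le 1$ hold on that interval, so that the hypothesis $u_{l-1}(\theta)\le 1$ needed to invoke $e^x-1\le x+x^2$ is actually available at the relevant values of $\theta$ in the inductive step. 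With these touch-ups the argument closes.
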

\begin{proof}
See \cite[Lemma 29]{bordenaveetal}.
\end{proof}

Recall Definition \ref{def:cyclenumber} of the cycle number $\numcycle(H)$,
and also Definition \ref{def:lcoil} of an $l$-coil. Let us say an
$l$-coil is {\bf irreducible} if no proper subset of its edges forms an
$l$-coil.

\begin{lemma}\label{lemma:cyclecoilnumber}
For any $l \geq 1$ and any graph $H$, the number of distinct edges in $H$ that
belong to any irreducible $l$-coil of $H$ is at most $l \cdot \numcycle(H)$.
\end{lemma}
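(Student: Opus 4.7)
The plan is to induct on $\numcycle(H)$. The base case $\numcycle(H)\le 1$ is trivial, since any $l$-coil requires at least two cycles and hence no irreducible $l$-coil exists.

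Two structural facts drive the setup. First, any irreducible $l$-coil $C$ must satisfy $\numcycle(C)=2$ exactly: if $\numcycle(C)\ge 3$, then $C$ contains an edge on a cycle (hence a non-bridge), whose removal yields a connected proper subgraph with $\numcycle\ge 2$ and strictly fewer edges, contradicting irreducibility. Second, fix a spanning forest $T$ of $H$ and let $F=E(H)\setminus E(T)$, so $|F|=\numcycle(H)$; then each irreducible $l$-coil $C$ satisfies $|F\cap E(C)|\ge 2$. Indeed, the cycle space of $C$ is two-dimensional, each cycle $\gamma$ of $H$ has its coefficients in the fundamental basis $\{\gamma_f\}_{f\in F}$ given by the indicator of $F\cap E(\gamma)$, and cycles of $C$ satisfy $F\cap E(\gamma)\subseteq F\cap E(C)$, so the resulting map into $\{0,1\}^{F\cap E(C)}$ must be injective on a two-dimensional subspace.

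For the inductive step with $\numcycle(H)=k\ge 2$, I would split into two cases according to whether every $f\in F$ lies in some irreducible $l$-coil. If some $f\in F$ lies in no irreducible $l$-coil of $H$, then the irreducible $l$-coils of $H$ and of $H-f$ coincide, so the set of coil-covered edges is unchanged; since $\numcycle(H-f)=k-1$, the induction hypothesis applied to $H-f$ yields the bound immediately. Otherwise every $f\in F$ lies in some irreducible $l$-coil, and I would pick a minimum-edge irreducible $l$-coil $C$ together with a non-tree edge $f^*\in F\cap E(C)$, aiming to show that $E^*(H)\setminus E^*(H-f^*)\subseteq E(C)$. Given that containment, $|E^*(H)\setminus E^*(H-f^*)|\le|E(C)|\le l$, and combined with the inductive bound $|E^*(H-f^*)|\le l(k-1)$ we obtain $|E^*(H)|\le lk$.

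The main obstacle will be proving this containment. By definition an edge $e$ in $E^*(H)\setminus E^*(H-f^*)$ has the property that every irreducible $l$-coil of $H$ containing $e$ also contains $f^*$; to conclude $e\in E(C)$ I would combine the structural classification of irreducible $l$-coils (as theta-, figure-eight-, or dumbbell-subdivisions, each of cycle number two) with the minimality of $C$, arguing that any alternative coil $C'\ni e,f^*$ with $e\notin E(C)$ would permit a ``splicing'' replacement that produces an irreducible $l$-coil strictly smaller than $C$ and contradicts minimality. Should this extremal argument be delicate to execute cleanly, a viable fallback is to abandon the induction in favor of an explicit Hall-type bipartite assignment $\phi\colon E^*\to F$ such that $|\phi^{-1}(f)|\le l$ for every $f\in F$, where verification of Hall's condition itself reduces to the same structural analysis; this would immediately yield $|E^*|\le l\cdot\numcycle(H)$.
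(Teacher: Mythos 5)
The paper proves this by a direct counting argument: take $H'$ to be the union of all irreducible $l$-coils, reduce to the case $H'$ connected, and build $H'$ greedily as $H_1' \subseteq H_2' \subseteq \cdots$ by adding one irreducible $l$-coil at a time (each overlapping the current union). Each step adds at most $l$ edges, and --- using that every vertex of $H'$ has degree at least two --- one shows that each step strictly increases the cycle number, so $\numcycle(H_t') \ge t$ and $e(H_t') \le tl$, which gives the claim. No classification of $l$-coils or case analysis is needed.

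Your proposal takes a genuinely different route (induction on $\numcycle(H)$), and your two preliminary structural observations are correct: any irreducible $l$-coil $C$ has $\numcycle(C)=2$ (remove a non-bridge otherwise), and $|F\cap E(C)|\ge 2$ for any spanning forest $T$ with $F=E(H)\setminus E(T)$ (the cycle space of $C$ injects into $\{0,1\}^{F\cap E(C)}$ via the fundamental-cycle coordinates). Case 1 of your inductive step is also sound: if some $f\in F$ lies in no irreducible $l$-coil, then the irreducible $l$-coils of $H$ and $H-f$ coincide and the induction hypothesis applies to $H-f$.

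However, there is a real gap in Case 2, and you have identified it yourself: the containment $E^*(H)\setminus E^*(H-f^*)\subseteq E(C)$ is unproven. The ``splicing'' argument you gesture at --- take any coil $C'$ with $e,f^*\in C'$ and $e\notin C$, and produce a coil strictly smaller than $C$ --- is far from automatic; minimality of $C$ in the edge count does not obviously give a contradiction, because $C'$ and $C$ are allowed to overlap in complicated ways (they may share $f^*$ and nothing else, or share a whole path), and the classification of irreducible $l$-coils into $\theta$-subdivisions, figure-eights, and dumbbells by itself does not produce a smaller coil from $C'$. The fallback ``Hall-type assignment'' is not a fallback at all: as you note, verifying Hall's condition reduces to the same unproven structural claim. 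As it stands the argument does not constitute a proof, and it is not clear it can be completed along these lines without substantially more work. The paper's additive-union argument avoids all of these difficulties; if you want to retain the induction on cycle number, you should at minimum work out the splicing step in full and check it against configurations where several irreducible $l$-coils share an edge of $F$.
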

\begin{proof}
Let $H'$ denote the subgraph of $H$ formed by the union of all irreducible
$l$-coils in $H$. It suffices to show $l \cdot \numcycle(H') \geq e(H')$ where
$e(H')$ is the number of edges in $H'$. As the cycle number and number of edges
are additive over connected components, it suffices to show this separately for
each connected component of $H'$; hence assume without loss of generality
$H'$ is connected (and non-empty, otherwise the result is trivial).

Let us construct $H'$ by starting with a single irreducible $l$-coil $H_1'$
and, for each $t \geq 2$, letting $H_t'$ be the union of
$H_{t-1}'$ and an irreducible $l$-coil sharing at least one vertex with
$H_{t-1}'$ but not entirely contained in $H_{t-1}'$. (Such an $l$-coil exists
until $H_t'=H'$.) Denote by $e(H_t')$ and $v(H_t')$ the number
of distinct edges and vertices in $H_t'$. Clearly $e(H_t')-e(H_{t-1}') \leq l$
for each $t \geq 1$, hence $e(H_t') \leq tl$. Note $\numcycle(H_1') \geq 1$.
For each $t \geq 1$, $e(H_t')-e(H_{t-1}') \geq v(H_t')-v(H_{t-1}')$, since
adding each new vertex requires adding at least one new edge as $H_t'$ remains
connected. Furthermore, equality can only hold if the newly added vertices and
edges form a forest, where each tree in the forest intersects $H_{t-1}'$ only
at its root node. But this would imply that there exists a vertex in $H_t'$ 
with degree one, contradicting that this vertex is part of any irreducible
$l$-coil. Hence in fact $e(H_t')-e(H_{t-1}') \geq v(H_t')-v(H_{t-1}')+1$,
implying $\numcycle(H_t') \geq \numcycle(H_{t-1}')+1$ for all $t$. Then
$\numcycle(H_t') \geq t$, and so $l \cdot \numcycle(H') \geq lt \geq e(H')$ for
the value of $t$ such that $H_t'=H'$.
\end{proof}

\begin{lemma}\label{lemma:cyclebound}
Fix $d>1$.
Let $S$ be a subset of edges in the complete graph on $n$ vertices such that
$|S| \leq 2(\log n)^2$, and let $\#_c(S)$ denote the cycle number of the
subgraph formed by the edges in $S$. Let $l$ be a positive integer with $l \leq
0.1\log_d n$. Let $G^o \cup S$ denote the random subgraph of the complete
graph in which each edge outside of $S$ is present
independently with probability $d/n$ and each edge in $S$ is present with
probability 1. Let $V \subseteq \{1,\ldots,n\}$
denote the set of vertices incident to at least one edge in $S$.
Then for some $C:=C(d)>0$, $N_0:=N_0(d)>0$, all $n \geq N_0$, and
all $0<t \leq (\log n)^2$,
\[\P[\#_c(\Ball_l(V;G^o \cup S)) \geq \#_c(S)+t] \leq C(\log n)^2n^{-0.7t}.\]
\end{lemma}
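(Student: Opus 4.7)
The proof combines a tail bound on $|\Ball_l(V; G^o\cup S)|$ with a moment-type bound on the number of ``cycle edges'' revealed during a BFS exploration of this ball.

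For the ball-size bound, applying Lemma \ref{lemma:ballbound} to each of the at most $|V|\leq 2|S|\leq 4(\log n)^2$ vertices in $V$ and using a standard grafting argument to absorb the $|S|$ deterministic edges of $S$ (each such edge can inflate the ball by at most an additional $G^o$-ball of radius $<l$) yields
\[
\P\big[|\Ball_l(V; G^o\cup S)| > N\big]\leq C(\log n)^2/n^2,
\]
where $N:=C_1|V|(\log n)d^l$ for a suitable constant $C_1$. Since $l\leq 0.1\log_d n$ gives $d^l\leq n^{0.1}$, we have $N\leq C(\log n)^3 n^{0.1}$.

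For the cycle count, I would explore $\Ball_l(V;G^o\cup S)$ via BFS from $V$: for each frontier vertex $w$ at depth $<l$, reveal every $G^o$-edge $\{w,v\}$ not yet tested from $w$, each test being an independent Bernoulli$(d/n)$. Call a revealed edge a \emph{cycle edge} if both endpoints are already in the explored set at the moment the edge is tested. The identity $\numcycle(H)=|E(H)|-|V(H)|+c(H)$ implies $\numcycle(\Ball_l)-\numcycle(S)\leq X$, where $X$ is the number of cycle edges produced in BFS. On $\{|\Ball_l|\leq N\}$, $X$ is stochastically dominated by $\Binom(\binom{N}{2}, d/n)$, because each pair in $\binom{V(\Ball_l)}{2}$ is tested at most once by the BFS and each test has Bernoulli$(d/n)$ marginal independently of history by the edge independence of $G^o$; hence
\[
\P\big[X\geq t,\ |\Ball_l|\leq N\big]\leq \binom{\binom{N}{2}}{t}(d/n)^t\leq (N^2 d/n)^t.
\]
Substituting $N\leq C(\log n)^3 n^{0.1}$ gives $(N^2 d/n)^t\leq (Cd(\log n)^6 n^{-0.8})^t$; for $t\leq (\log n)^2$ and $n\geq N_0(d)$, the polylog factor $(Cd(\log n)^6)^t$ is absorbed into $n^{0.1t}$, yielding $(N^2 d/n)^t\leq n^{-0.7t}$. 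Combining with the ball-size tail gives $\P[\numcycle(\Ball_l)\geq \numcycle(S)+t]\leq C(\log n)^2 n^{-0.7t}$, as desired.

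The main obstacle is justifying the stochastic-domination step: the identity of the cycle-candidate tests in the BFS depends on outcomes of earlier tests (through which vertices have been exposed), so the tests are not naively independent. The resolution is that each vertex pair $\{u,v\}$ is tested at most once during the BFS regardless of outcomes, and each test is marginally Bernoulli$(d/n)$ independent of history by edge independence of $G^o$; thus the count of positive cycle-candidate tests is uniformly controlled on $\{|\Ball_l|\leq N\}$ by $\Binom(\binom{N}{2}, d/n)$. A secondary technical point is the grafting argument needed to adapt Lemma \ref{lemma:ballbound} to the presence of the deterministic $S$-edges.
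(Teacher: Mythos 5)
Your decomposition follows the same broad outline as the paper (BFS exploration, bound the count of non-tree edges by a binomial, add a ball-size tail), and the domination step as well as the identity $\numcycle(\Ball_l)-\numcycle(S)\le X$ are fine. However, there is a genuine gap in how you combine the two pieces, and it comes from the choice of ball-size threshold $N$.

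You fix $N:=C_1|V|(\log n)d^l$ independently of $t$, obtaining a ball-size tail $\P[|\Ball_l(V;G^o\cup S)|>N]\le C(\log n)^2/n^2$, and then write $\P[\numcycle(\Ball_l)\ge\numcycle(S)+t]\le\P[|\Ball_l|>N]+\P[X\ge t,\ |\Ball_l|\le N]\le C(\log n)^2 n^{-2}+n^{-0.7t}$. This is \emph{not} $\le C(\log n)^2 n^{-0.7t}$ once $t>20/7$: for instance at $t=5$ the right-hand side must be of order $n^{-3.5}$, while your ball-size term is only $n^{-2}$ and cannot be absorbed. The conclusion therefore does not follow for essentially the whole range $3\le t\le(\log n)^2$; the final ``combining'' sentence is a non-sequitur.

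The fix is to let the threshold scale with $t$, which is precisely what the paper does: it conditions on the event $|F_\tau|\le c^{-1}|V|t(\log n)d^l$, so that Lemma~\ref{lemma:ballbound} with $s\asymp t\log n$ gives a tail $\lesssim |V|n^{-t}\le 4(\log n)^2 n^{-t}\le 4(\log n)^2 n^{-0.7t}$, matching the target for every $t$. On the complementary event, the number of candidate cycle edges is at most $\lfloor c^{-1}|V|t(\log n)d^l\rfloor^2$, and a Chernoff bound for $\Binom(N_t,d/n)$ still delivers $n^{-0.7t}$ because even with the extra $t^2\le(\log n)^4$ factor one still has $N_t\,d/n\le(\log n)^{O(1)}n^{-0.8}\le n^{-0.7}$. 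Your simpler estimate $\binom{\binom{N_t}{2}}{t}(d/n)^t\le(N_t^2 d/n)^t$ would in fact also suffice with this $t$-dependent $N_t$, so a small modification repairs the argument. As a secondary remark, the ``grafting'' you invoke to handle the deterministic $S$-edges is unnecessary: every vertex within $G^o\cup S$-distance $\ell$ of $V$ is already within $G^o$-distance $\ell$ of $V$ (any shortest path, restricted from its last visit to $V$ onwards, uses no $S$-edges), so $|\Ball_l(V;G^o\cup S)|=|\Ball_l(V;G^o)|\le\sum_{v\in V}|\Ball_l(v;G)|$ and Lemma~\ref{lemma:ballbound} applies directly after a union bound over $V$.
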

\begin{proof}
Let $G^o$ denote the graph $G^o \cup S$ with all edges in $S$ removed.
Construct a growing breadth-first-search forest in $G^o$, ``rooted'' at $V$,
in the following manner: Initialize $F_0$ as the
graph with the vertices $V$ and no edges, and mark each vertex in $V$ as
unexplored. Iteratively for each $t \geq 1$, consider the set of unexplored
vertices in $F_{t-1}$ having minimal distance from $V$, and let $v_t$ be the
one with smallest index. Mark $v_t$ as explored, let $N_{v_t}$ be the set of
neighbors of $v_t$ in $G^o$ which are not in $F_{t-1}$, and let $F_t$ be 
$F_{t-1}$ with all vertices $v \in N_{v_t}$ and edges 
$\{v_t,v\}:v \in N_{v_t}$ added. (Hence each $F_t$ is a forest of $|V|$
disjoint trees, with one tree rooted at each vertex $v \in V$ and with all of
its edges in $G^o$.) Let $\tau$ be the first time for which all vertices
in $\Ball_l(V;G^o \cup S)$ are in $F_\tau$. Note that for any $l \geq 1$,
$\Ball_l(V;G^o \cup S)=S \cup \Ball_l(V;G^o)$ (i.e.\ $\Ball_l(V;G^o)$
with the edges in $S$ added), as $S$ is contained in $\Ball_1(V;G^o \cup
S)$ and also any vertex at distance at most $l$ from $V$ in $G^o \cup S$ is at
distance at most $l$ from $V$ in $G^o$, by definition of $V$. Then the cycle
number $\numcycle(\Ball_l(V;G^o \cup S))$ is at most $\numcycle(S)$ plus the
number of edges in $\Ball_l(V;G^o)$ that are not in $F_\tau$ (as removing these
edges and $\numcycle(S)$ edges from $S$ yields a graph with no cycles).

Each edge in $\Ball_l(V;G^o)$ that is not in $F_\tau$ must either be
between two vertices $v_1$ and $v_2$ at the same distance $r \in [0,l]$ from
$V$, or between a vertex $v_1$ at some distance $r \in [0,l-1]$ from $V$ and a
vertex $v_2$ at distance $r+1$ from $V$, where $v_2$ is a child (in $F_\tau$)
of a different vertex $v'$ at distance $r$ from $V$ and having smaller index
than $v_1$. Given $F_\tau$, let $\mathcal{S}_{F_\tau}$ denote the set of all
such pairs of vertices $\{v_1,v_2\}$. Then the event that $F_\tau$ is the
breadth-first-search forest as constructed above is exactly the event that
the vertices of $\Ball_l(S;G^o)$ are those of $F_\tau$ and the edges of
$\Ball_l(S;G^o)$ are those of $F_\tau$ together with some subset of the edges
corresponding to the vertex pairs in $\mathcal{S}_{F_\tau}$. Hence, conditional
on $F_\tau$, each edge $\{v_1,v_2\} \in \mathcal{S}_{F_\tau}$ is present in
$G^o$ independently with probability $d/n$. Then the number of such edges
has conditional law $\Binom(|\mathcal{S}_{F_\tau}|,d/n)$,
which is stochastically dominated by $\Binom(|F_\tau|^2,d/n)$ where $|F_\tau|$
is the number of vertices in $|F_\tau|$. Then for any $t>0$,
letting $c$ be the constant in Lemma \ref{lemma:ballbound},
\begin{align*}
&\P[\numcycle(\Ball_l(V;G^o \cup S)) \geq \numcycle(S)+t]\\
&\leq \P\left[|F_\tau|>c^{-1}|V|t(\log n)d^l\right]+\P\left[\Binom\left(\lfloor
c^{-1}|V|t(\log n)d^l\rfloor^2,d/n\right) \geq t\right].
\end{align*}

To bound the first term, note $|F_\tau| \leq \sum_{v \in V} |\Ball_l(v;G^o)|
\leq \sum_{v \in V} |\Ball_l(v;G)|$, where $G \sim \sG(n,d/n)$ denotes the full
\ER graph on $n$ vertices. Then by Lemma \ref{lemma:ballbound},
\[\P\left[|F_\tau|>c^{-1}|V|t(\log n)d^l\right]
\leq \sum_{v \in V} \P\left[|\Ball_l(v;G)|>c^{-1}t(\log n)d^l\right]
\leq C|V|n^{-t} \leq 4C(\log n)^2 n^{-t},\]
where the last bound uses $|V| \leq 2|S| \leq 4(\log n)^2$. For the second
term, let $N:=\lfloor c^{-1}|V|t(\log n)d^l\rfloor^2$ and assume $N>0$,
otherwise the probability is 0. Then for any $\lambda>0$, by the Chernoff bound,
\[\P\left[\Binom(N,d/n) \geq t\right]
\leq e^{-\lambda t}\left(1-\frac{d}{n}+\frac{d}{n}e^\lambda\right)^N\\
\leq \exp\left(-\lambda t+\left(-\frac{d}{n}+\frac{d}{n}e^\lambda
\right)N\right).\]
Using $t \leq (\log n)^2$, $l \leq 0.1\log_d n$ and 
$|V| \leq 2|S| \leq 4(\log n)^2$, and
setting $\lambda=-\log(dN/nt)$ (which is positive for large $n$),
the above is at most
\[\exp\left(t\log\left(\frac{d\lfloor c^{-1}|V|t(\log n)d^l
\rfloor^2}{nt}\right)+t-\frac{d\lfloor c^{-1}|V|t(\log
n)d^l\rfloor^2}{n}\right)
\leq \exp\left(t \log (n^{-0.7})\right)\]
for all large $n$. Combining these bounds yields the desired result.
\end{proof}
\begin{proof}[Proof of Lemma \ref{lemma:alternatingprobbound}]
For any edge set $S$ and $c \in \{0,1\}$, write $\bA_S=c$ as shorthand for
the condition $\forall \{v,w\} \in S:\bA_{vw}=c$. Then
\begin{align*}
&\left|\E\left[\prod_{\{v,w\} \in S} \left(\bA_{vw}-\frac{d}{n}\right)
\prod_{\{v,w\} \in T} \bA_{vw}\1\{G \text{ is } l\text{-coil-free}\}
\right]\right|\\
&=\left|\sum_{J \subseteq S} \left(1-\frac{d}{n}\right)^{|J|}
\left(-\frac{d}{n}\right)^{|S|-|J|}
\P\left[\bA_{S \setminus J}=0,\;\bA_{J \cup T}=1,\;G \text{ is }
l\text{-coil-free}\right]\right|\\
&=\left|\sum_{J \subseteq S} \left(1-\frac{d}{n}\right)^{|S|}
\left(-\frac{d}{n}\right)^{|S|-|J|}\left(\frac{d}{n}\right)^{|J|+|T|}
\P\left[G \text{ is } l\text{-coil-free} \mid
\bA_{S \setminus J}=0,\;\bA_{J \cup T}=1\right]\right|\\
&\leq \left(\frac{d}{n}\right)^{|S|+|T|}\left|\sum_{J \subseteq S} (-1)^{|J|}
\P\left[G \text{ is } l\text{-coil-free} \mid \bA_{S \setminus J}=0,
\;\bA_{J \cup T}=1\right]\right|.
\end{align*}

Let $G^o$ denote the random graph on $n$ vertices in which each edge outside of
$S \cup T$ is present independently with probability $d/n$, and having no
edges in $S \cup T$. Then for any $J \subseteq S$, the distribution of the 
graph $G^o \cup J \cup T$ (i.e.\ $G^o$ with the edges in $J \cup T$ added) is
equal to the conditional distribution of
$G \mid \bA_{S \setminus J}=0,\;\bA_{J \cup T}=1$. Thus
\[\sum_{J \subseteq S} (-1)^{|J|} \P\left[G \text{ is } l\text{-coil-free}
\mid \bA_{S \setminus J}=0,\;\bA_{J \cup T}=1\right]=\E[f(G^o)]\]
for the function
\begin{align*}
f(g^o)&=\sum_{J \subseteq S} (-1)^{|J|} \1\{g^o \cup J \cup T \text{ is }
l\text{-coil-free}\},
\end{align*}
where $g^o$ denotes any fixed realization of $G^o$.
If $|S|/l \leq \#_c(S \cup T)$, then the desired result follows from the
trivial bound $|\E[f(G^o)]| \leq \max_{g^o} |f(g^o)| \leq 2^{|S|}$.

For $|S|/l>\#_c(S \cup T)$, note that if $g^o$ is such
that there exists $e \in S$ for which $g^o \cup S \cup T$ does not contain any
irreducible $l$-coil containing $e$, then for each $J \subseteq S \setminus
\{e\}$, $\1\{g^o \cup J \cup T \text{ is }
l\text{-coil-free}\}=\1\{g^o \cup J \cup \{e\} \cup T
\text{ is } l\text{-coil-free}\}$, so
$f(g^o)=0$. Hence if $f(g^o) \neq 0$, then each edge in $S$ must be contained in
an irreducible $l$-coil of $g^o \cup S \cup T$, which must be an irreducible
$l$-coil in $\Ball_l(V;g^o \cup S \cup T)$ where $V$ is the set of vertices incident
to at least one edge in $S \cup T$. This implies that
$\Ball_l(V;g^o \cup S \cup T)$ has at least $|S|$ distinct edges contained in
irreducible $l$-coils, so by Lemma \ref{lemma:cyclecoilnumber},
$\#_c(\Ball_l(V;g^o \cup S \cup T)) \geq |S|/l$. Applying the bound
$|f(g^o)| \leq 2^{|S|}$, this yields
\[|\E[f(G^o)]| \leq 2^{|S|}\P[f(G^o) \neq 0] \leq 2^{|S|}
\P\Big[\#_c(\Ball_l(V;g^o \cup S \cup T)) \geq |S|/l\Big].\]
Note $|S|+|T| \leq 2(\log n)^2$ and $|S|/l-\#_c(S \cup T) \leq
|S| \leq (\log n)^2$, so applying Lemma
\ref{lemma:cyclebound}, for some $C:=C(d)>0$ and all $n \geq N_0:=N_0(d)>0$,
\[\P\Big[\#_c(\Ball_l(V;g^o \cup S \cup T)) \geq |S|/l\Big]
\leq C(\log n)^2n^{-0.7\left(\frac{|S|}{l}-\#_c(S \cup T)\right)}.\]
Combining the above yields the desired result.
\end{proof}

\begin{proof}[Proof of Lemma \ref{lemma:equivclassbound}]
The proof idea is similar to that of \cite[Lemma 17]{bordenaveetal}.
We order the steps taken by $\gamma^{(1)}$ and $\gamma^{(2)}$ as
$(\gamma^{(1)}_0,\gamma^{(1)}_1),\ldots,(\gamma^{(1)}_{m-1},\gamma^{(1)}_m),
(\gamma^{(2)}_0,\gamma^{(2)}_1),\ldots,(\gamma^{(2)}_{m-1},\gamma^{(2)}_m)$.
(We will use ``step'' to refer to an ordered pair of consecutive vertices
in one of the paths $\gamma^{(1)},\gamma^{(2)}$ and ``edge'' to refer to an
undirected edge $\{v,w\}$ in the complete graph.)
Corresponding to each equivalence class in $\cW(m,l,v,e,K_1,K_2)$ is
a unique canonical element in which $\gamma^{(1)}_0=\gamma^{(2)}_0=1$, and the
successive new vertices visited in the above ordering are $2,3,\ldots,v$.
We bound $|\cW(m,l,v,e,K_1,K_2)|$ by constructing an injective encoding of these
canonical elements and bounding the number of possible codes.

Call a step $(\gamma^{(i)}_j,\gamma^{(i)}_{j+1})$ an ``innovation'' if
$\gamma^{(i)}_{j+1}$ is a vertex not previously visited (in the above ordering).
Edges corresponding to innovations form a tree $T$; call an edge $\{v,w\}$ in
the complete graph a ``tree edge'' if it belongs to $T$. Note that as
$\gamma^{(i)}$ is non-backtracking, if $(\gamma^{(i)}_j,\gamma^{(i)}_{j+1})$ is 
an innovation and $(\gamma^{(i)}_t,\gamma^{(i)}_{t+1})$ is the first step with
$t \geq j$ that is not an innovation, then
$\{\gamma^{(i)}_t,\gamma^{(i)}_{t+1}\}$ cannot be a tree-edge.

The set $K_1$ uniquely partitions into maximal intervals of consecutive
indices. For instance, if
$K_1=\{1,2,3,5,7,9,10\}$ then these intervals are $\{1,2,3\}$, $\{5\}$, $\{7\}$,
$\{9,10\}$. If any such interval is of size $L \geq l$,
let us remove every $l$th element of the interval and replace the interval
by the resulting sub-intervals (of which there are at most
$\lfloor L/l \rfloor+1$, each of size at most $l-1$). Call
the final collection of intervals $\cI_1$. In the same manner, we may obtain
a collection of intervals $\cI_2$ for $K_2$. Then 
\[\left|\left\{j \in \{0,\ldots,m-1\}: j \notin \bigcup_{I \in \cI_1}
I\right\}\right| \leq m-|K_1|+\frac{m}{l}.\]
As each pair of consecutive intervals in $\cI_1$ is separated by at least one
index $j \in \{0,\ldots,m-1\}$, this also implies that
the total number of intervals in $\cI_1$ is bounded as
\[|\cI_1| \leq m-|K_1|+\frac{m}{l}+1.\]
Analogous bounds hold for $\cI_2$. For each $i \in \{1,2\}$,
the collection of intervals $\cI_i$ corresponds to a collection of sub-paths of
$\gamma^{(i)}$, where the interval $I=\{j,j+1,\ldots,j'\} \in \cI_i$
corresponds to $\gamma^{(i)}_I:=(\gamma^{(i)}_j,\gamma^{(i)}_{j+1},
\ldots,\gamma^{(i)}_{j'},\gamma^{(i)}_{j'+1})$. Each sub-path
$\gamma^{(i)}_I:I \in \cI_i$ is a non-backtracking path of length at most $l$.
As $\gamma^{(i)}$ is $l$-coil-free on $K_i$, this implies that
the graph $G(\gamma^{(i)}_I)$ of distinct edges visited by each such sub-path
$\gamma^{(i)}_I$ contains at most one cycle.

For each sub-path $\gamma^{(i)}_I$ corresponding to $I=\{j,j+1,\ldots,j'\}$ and
any innovation $(\gamma^{(i)}_k,\gamma^{(i)}_{k+1})$ in the sub-path, call it a
``leading innovation'' if $k=j$ or if its preceding step
$(\gamma^{(i)}_{k-1},\gamma^{(i)}_k)$ is not an innovation.
Also, call each step $(\gamma^{(i)}_k,\gamma^{(i)}_{k+1})$
that does not coincide with a tree edge a ``non-tree-edge step'' (where ``tree
edge'' is as previously defined by the tree $T$ traversed by all innovations
in the two paths $\gamma^{(1)},\gamma^{(2)}$). Note that the non-tree-edge steps
are disjoint from the innovations, as innovations (by definition) coincide
with edges of the tree $T$. Call each non-tree-edge step either a
``short-cycling step'', a ``long-cycling step'', or a ``superfluous'' step, as
follows: If $G(\gamma^{(i)}_I)$ does not contain any cycles, then all
non-tree-edge
steps are long-cycling steps. If $G(\gamma^{(i)}_I)$ contains a cycle $C$, then 
for each non-tree-edge in $C$, the first step
$(\gamma^{(i)}_k,\gamma^{(i)}_{k+1})$ that traverses that edge
is a short-cycling step. All non-tree-edge steps preceding the first
short-cycling step are long-cycling steps. Letting $\tau$ be such that
$(\gamma^{(i)}_\tau,\gamma^{(i)}_{\tau+1})$ is the first step after the
last short-cycling step that does not belong to $C$ (if such a step exists),
all non-tree-edge steps $(\gamma^{(i)}_k,\gamma^{(i)}_{k+1})$ with $k \geq
\tau$ are also long-cycling steps. The remaining non-tree-edge steps (which
must belong to $C$) are superfluous steps.

Our injective encoding of canonical elements consists of:
\begin{enumerate}[(1)]
\item For each $i \in \{1,2\}$ and each $j \notin \bigcup_{I \in \cI_i} I$:
The vertex indices $\gamma^{(i)}_j$ and $\gamma^{(i)}_{j+1}$.
\item For each sub-path $\gamma^{(i)}_I$: (a) The sequence
of vertex index pairs $(\gamma^{(i)}_{k_1},\gamma^{(i)}_{k_1+1}),\ldots,
(\gamma^{(i)}_{k_P},\gamma^{(i)}_{k_P+1})$ corresponding to leading innovations,
long-cycling steps, and short-cycling steps, (b) for each of these $P$ steps,
whether it is a leading innovation, short-cycling, or long-cycling, and
(c) the total number of superfluous non-tree-edge steps.
\end{enumerate}

To see that this encoding is injective, note that item (1) above specifies the
start and end vertex of each sub-path $\gamma^{(i)}_I$. Between the start
of each sub-path and the first leading innovation, non-tree-edge step, or
end of the sub-path (whichever comes first),
$\gamma^{(i)}_I$ is a non-backtracking walk on the sub-tree
of $T$ corresponding to already-visited vertices and hence is uniquely
determined by the start and end vertices of this walk. Similarly, between each
non-tree-edge step and the next leading innovation, non-tree-edge step, or end
of the sub-path (whichever comes first), $\gamma^{(i)}_I$ is also a
non-backtracking walk uniquely determined by its start and end vertices.
Following a leading innovation, all further steps must be (non-leading)
innovations until the next non-tree-edge step, and hence are uniquely
determined for the canonical element of the equivalence class. Hence, given the
sub-tree of $T$ already visited before $\gamma^{(i)}_I$, as well as 
the vertex index pair for each leading innovation and non-tree-edge step in
$\gamma^{(i)}_I$ and the first and last vertices of $\gamma^{(i)}_I$,
we may uniquely reconstruct $\gamma^{(i)}_I$.

The above encoding does not explicitly specify the vertex index pairs of
superfluous non-tree-edge steps, but note that if $G(\gamma^{(i)}_I)$ contains
a cycle $C$, then $\gamma^{(i)}_I$ cannot leave and return to $C$ because
$G(\gamma^{(i)}_I)$ contains only one cycle. I.e., the structure
of $\gamma^{(i)}_I$ must be such that it enters the cycle $C$ at 
some step, traverses all of its short-cycling steps in the first loop around
$C$, then traverses all of its superfluous steps in additional loops around
$C$, and then exits the cycle $C$ and
does not return. So in fact, given the total number of superfluous steps, the
vertex index pairs of these superfluous steps are uniquely determined by
repeating those of the short-cycling steps in order. Hence the
above encoding specifies the vertex index pairs of all non-tree-edge steps, so
the encoding is injective.

Finally, we bound the number of different codes under this encoding. For item
(1), there are at most $v^2$ vertex pairs for each $j \notin \bigcup_{I \in
\cI_i} I$, so there are at most
\[(v^2)^{m-|K_1|+\frac{m}{l}+m-|K_2|+\frac{m}{l}}\]
possible codes for item (1). For each sub-path $\gamma^{(i)}_I$
of item (2), note that the edges corresponding to long-cycling steps (which do
not belong to the cycle $C$) are
distinct from those corresponding to short-cycling steps (which belong to
$C$), and each edge corresponding to a long-cycling or short-cycling step is
visited exactly once since $\gamma^{(i)}_I$ contains at most one cycle.
As there are at
most $e-v+1$ distinct non-tree edges and $\gamma^{(i)}_I$ has length at most
$l$, $\gamma^{(i)}_I$ has at most $\min(e-v+1,l)$ total long-cycling steps and
short-cycling steps. Recall that the first non-innovation step after each
leading innovation must be a non-tree-edge step, and note that there cannot be
an innovation between a superfluous step and the next long-cycling step.
Hence between each pair of
successive long-cycling or short-cycling steps, and before the first such step
and after the last such step, there is at most one leading innovation. So
the total number $P$ of leading innovations, short-cycling steps, and
long-cycling steps in $\gamma^{(i)}_I$ is at most $2\min(e-v+1,l)+1$.
The number of superfluous non-tree-edge steps at most $l$. Hence for each
sub-path $\gamma^{(i)}_I$ of item (2), there are at most
$l(3v^2)^{2\min(e-v+1,l)+1}$ possible codes, yielding at most
\[\left(l(3v^2)^{2\min(e-v+1,l)+1}\right)
^{m-|K_1|+\frac{m}{l}+1+m-|K_2|+\frac{m}{l}+1}
\leq \left(l(3v^2)^{2l+1}\right)^{2m-|K_1|-|K_2|}
\left(l(3v^2)^{2e-2v+3}\right)^{\frac{2m}{l}+2}\]
possible codes for item (2) by the above bounds on $|\cI_1|$ and
$|\cI_2|$. Combining these bounds gives the desired result.
\end{proof}

\section{Galton-Watson martingales}\label{appendix:XY}
In this appendix, we prove Lemma \ref{lemma:XYmomentstails}. Let us first
establish the lemma in the case $\delta=1$, i.e.\ all labels are revealed (so
$\sigma_\true(\root)=\sigma(\root)$).

Let $Y_\ell=\sigma(\root)D_\ell$, and let
$\{\cF_\ell\}_{\ell=0}^\infty$ be the filtration such that $\cF_\ell$ is the
sigma-field generated by $B_\ell(\root;T)$ and the labels in this ball.
For each vertex $v \neq \root$ in $T$,
denote by $p(v)$ its parent vertex in $T$ and by $k(v)$ its distance
to $\root$. Fix $\ell \geq 1$, let $J_1,J_2$ be the
numbers of vertices $v$ at generation $\ell-1$ with $\sigma(v)=\sigma(\root)$
and $\sigma(v)=-\sigma(\root)$, respectively, and let $K_1,K_2,K_3,K_4$
be the numbers of vertices $v$ at generation $\ell$ with $\sigma(v)=\sigma(p(v))=\sigma(\root)$, $-\sigma(v)=\sigma(p(v))=\sigma(\root)$,
$\sigma(v)=-\sigma(p(v))=\sigma(\root)$, and
$-\sigma(v)=-\sigma(p(v))=\sigma(\root)$, respectively.
Then $X_{\ell-1}=d^{-(\ell-1)}(J_1+J_2)$,
$X_\ell=d^{-\ell}(K_1+K_2+K_3+K_4)$, $Y_{\ell-1}=\mu^{-(\ell-1)}(J_1-J_2)$, and
$Y_\ell=\mu^{-\ell}(K_1-K_2+K_3-K_4)$. Conditional on $\cF_{\ell-1}$,
$K_1,K_2,K_3,K_4$ are independent and distributed as 
\[K_1\sim \Pois(\tfrac{a}{2}J_1),\;\;
K_2\sim \Pois(\tfrac{b}{2}J_1),\;\;
K_3\sim \Pois(\tfrac{b}{2}J_2),\;\;
K_4\sim \Pois(\tfrac{a}{2}J_2).\]
Hence for any $t,s \in \R$,
\begin{align}
&\E[\exp(tX_\ell+sY_\ell) \mid \cF_{\ell-1}]\nonumber\\
&=\E\left[\exp\left(
\left(\frac{t}{d^\ell}+\frac{s}{\mu^\ell}\right)K_1+
\left(\frac{t}{d^\ell}-\frac{s}{\mu^\ell}\right)K_2+
\left(\frac{t}{d^\ell}+\frac{s}{\mu^\ell}\right)K_3+
\left(\frac{t}{d^\ell}-\frac{s}{\mu^\ell}\right)K_4\right) \Bigg|
\cF_{\ell-1}\right]
\nonumber\\
&=\exp\left(
\frac{a}{2}J_1(e^{\frac{t}{d^\ell}+\frac{s}{\mu^\ell}}-1)+
\frac{b}{2}J_1(e^{\frac{t}{d^\ell}-\frac{s}{\mu^\ell}}-1)+
\frac{b}{2}J_2(e^{\frac{t}{d^\ell}+\frac{s}{\mu^\ell}}-1)+
\frac{a}{2}J_2(e^{\frac{t}{d^\ell}-\frac{s}{\mu^\ell}}-1)\right)\nonumber\\
&=\exp\left(\frac{d}{2}\left(e^{\frac{t}{d^\ell}+\frac{s}{\mu^\ell}}
+e^{\frac{t}{d^\ell}-\frac{s}{\mu^\ell}}-2\right)(J_1+J_2)
+\frac{\mu}{2}\left(e^{\frac{t}{d^\ell}+\frac{s}{\mu^\ell}}
-e^{\frac{t}{d^\ell}-\frac{s}{\mu^\ell}}\right)(J_1-J_2)\right)
\nonumber\\
&=\exp\left(d^\ell\left(\exp\left(\frac{t}{d^\ell}\right)\cosh
\left(\frac{s}{\mu^\ell}
\right)-1\right)X_{\ell-1}+\mu^\ell\exp\left(\frac{t}{d^\ell}\right)
\sinh\left(\frac{s}{\mu^\ell}\right)Y_{\ell-1}\right).
\label{eq:conditionaljointmgf}
\end{align}
Denote the joint moment generating function of $(X_\ell,Y_\ell)$
by $M_\ell(t,s)=\E[\exp(tX_\ell+sY_\ell)]$.
Taking the full expectation of (\ref{eq:jointmgf}), for each $\ell \geq 1$,
\begin{equation}\label{eq:jointmgf}
M_\ell(t,s)=M_{\ell-1}\left(d^\ell\left(\exp\left(\frac{t}{d^\ell}\right)
\cosh\left(\frac{s}{\mu^\ell}\right)-1\right),\;
\mu^\ell\exp\left(\frac{t}{d^\ell}\right)\sinh\left(\frac{s}{\mu^\ell}\right)
\right).
\end{equation}
In particular, induction on $\ell$ shows $M_\ell(t,s)$ is finite for all
$t,s \in \R$ and $\ell \geq 0$.

Differentiating (\ref{eq:conditionaljointmgf}) at $(0,0)$,
\begin{align*}
\E[X_\ell \mid \cF_{\ell-1}]&=\tfrac{\partial}{\partial t}
\E[\exp(tX_\ell+sY_\ell) \mid \cF_{\ell-1}]_{t=0,s=0}=X_{\ell-1},\\
\E[Y_\ell \mid \cF_{\ell-1}]&=\tfrac{\partial}{\partial s}
\E[\exp(tX_\ell+sY_\ell) \mid \cF_{\ell-1}]_{t=0,s=0}=Y_{\ell-1},\\
\E[X_\ell^2 \mid \cF_{\ell-1}]&=\tfrac{\partial^2}{\partial t^2}
\E[\exp(tX_\ell+sY_\ell) \mid \cF_{\ell-1}]_{t=0,s=0}=X_{\ell-1}^2+d^{-\ell}
X_{\ell-1},\\
\E[Y_\ell^2 \mid \cF_{\ell-1}]&=\tfrac{\partial^2}{\partial s^2}
\E[\exp(tX_\ell+sY_\ell) \mid
\cF_{\ell-1}]_{t=0,s=0}=Y_{\ell-1}^2+d^\ell\mu^{-2\ell}X_{\ell-1}.
\end{align*}
Then $\{X_\ell\}$ and $\{Y_\ell\}$ are martingales with respect to $\{\cF_\ell\}$,
with $\E[X_\ell]=X_0=1$, $\E[Y_\ell]=Y_0=1$,
\[\E[X_\ell^2]=\E[X_{\ell-1}^2]+d^{-\ell}=\ldots=\sum_{k=0}^\ell d^{-k},\;\;\;\;
\E[Y_\ell^2]=\E[Y_{\ell-1}^2]+d^\ell\mu^{-2\ell}=\ldots=\sum_{k=0}^\ell
d^k\mu^{-2k}.\]
Hence $\{X_\ell\},\{Y_\ell\}$ are bounded in $L_2$, so they converge a.s.\ and
in $L_2$ to some $(X,Y) \in \cF_\infty$ by the martingale convergence theorem.
As $\E[X_\ell^2] \to d/(d-1)$ and $\E[Y_\ell^2] \to \mu^2/(\mu^2-d)$,
(\ref{eq:XYmoments}) follows.

For the tail bounds (\ref{eq:XYtails}), set $\alpha=1/6$ and define
\[T_0=\alpha\sqrt{d-1},\;\;\;\;S_0=\alpha\sqrt{\frac{\mu^2-d}{d}},\]
\[T_\ell=T_0-T_0^2\sum_{k=1}^\ell d^{-k}-S_0^2\sum_{k=1}^\ell d^k\mu^{-2k},\;\;\;\;
S_\ell=S_0-2S_0T_0\sum_{k=1}^\ell d^{-k}-S_0^3\sum_{k=1}^\ell \mu^{-2k}.\]
Note that for $d \geq 2$, $\alpha \leq T_0 \leq \alpha(d-1)$ and
$S_0^2 \leq \alpha^2(\mu^2-1)/2$. Then as $\ell \to \infty$,
\[T_\ell \downarrow T_\infty:=T_0-\frac{T_0^2}{d-1}-\frac{S_0^2d}{\mu^2-d}
=T_0-2\alpha^2 \geq (1-2\alpha)T_0,\]
\[S_\ell \downarrow S_\infty:=S_0-\frac{2S_0T_0}{d-1}-\frac{S_0^3}{\mu^2-1}
\geq (1-2\alpha-\alpha^2/2)S_0.\]

We claim by induction on $\ell$ that
for all $\ell \geq 0$ and $t,s \in \R$ with $|t| \leq T_\ell$ and $|s| \leq
S_\ell$,
\begin{equation}\label{eq:MGFinductivebound}
M_\ell(t,s) \leq \exp\left(t+s+|t|\sqrt{d-1}
\sum_{k=1}^\ell d^{-k}+|s|\sqrt{\frac{\mu^2-d}{d}}
\sum_{k=1}^\ell d^k\mu^{-2k}\right).
\end{equation}
As $X_0=Y_0=1$, (\ref{eq:MGFinductivebound}) holds with equality for $\ell=0$. Let
$\ell \geq 1$ and assume inductively that (\ref{eq:MGFinductivebound}) holds for
$\ell-1$. Let $t_\ell,s_\ell \in \R$ with $|t_\ell| \leq T_\ell$ and $|s_\ell|
\leq S_\ell$, and
denote
\[t_{\ell-1}=d^\ell\left(\exp\left(\frac{t_\ell}{d^\ell}\right)
\cosh\left(\frac{s_\ell}{\mu^\ell}\right)-1\right),\;\;\;\;
s_{\ell-1}=\mu^\ell\exp\left(\frac{t_\ell}{d^\ell}\right)
\sinh\left(\frac{s_\ell}{\mu^\ell}\right).\]
As $|t_\ell| \leq T_0 \leq \alpha d^\ell$ and $|s_\ell| \leq S_0 \leq \alpha\mu
\leq \alpha \mu^\ell$, and $|\exp(x)\cosh(y)-1-x| \leq x^2+y^2$ and
$|\exp(x)\sinh(y)-y| \leq 2|xy|+|y|^3$
for all $|x|,|y| \leq \alpha$, we obtain
\[\left|t_{\ell-1}-t_\ell\right| \leq
\frac{t_\ell^2}{d^\ell}+\frac{s_\ell^2d^\ell}{\mu^{2\ell}}
\leq \frac{T_0^2}{d^\ell}+\frac{S_0^2d^\ell}{\mu^{2\ell}},\;\;\;\;
\left|s_{\ell-1}-s_\ell\right| \leq
\frac{2|s_\ell t_\ell|}{d^\ell}+\frac{|s_\ell|^3}{\mu^{2\ell}}
\leq \frac{2S_0T_0}{d^\ell}+\frac{S_0^3}{\mu^{2\ell}}.\]
This implies $|t_{\ell-1}| \leq T_{\ell-1}$ and $|s_{\ell-1}| \leq S_{\ell-1}$.
Then (\ref{eq:jointmgf}) and the induction hypothesis imply
\[M_\ell(t_\ell,s_\ell)\leq \exp\left(t_{\ell-1}+s_{\ell-1}+|t_{\ell-1}|
\sqrt{d-1}\sum_{k=1}^{\ell-1} d^{-k}
+|s_{\ell-1}|\sqrt{\frac{\mu^2-d}{d}}\sum_{k=1}^{\ell-1}
d^k\mu^{-2k}\right).\]
To complete the induction, it suffices to show
\begin{align}
&|t_{\ell-1}-t_\ell|\left(1+\sqrt{d-1}\sum_{k=1}^{\ell-1} d^{-k}\right)
+|s_{\ell-1}-s_\ell|\left(1+\sqrt{\frac{\mu^2-d}{d}}
\sum_{k=1}^{\ell-1} d^k\mu^{-2k}\right)\nonumber\\
&\hspace{1in}\leq \sqrt{d-1}\frac{|t_\ell|}{d^\ell}
+\sqrt{\frac{\mu^2-d}{d}}\frac{|s_\ell|d^\ell}{\mu^{2\ell}}.
\label{eq:completeinduction}
\end{align}
Recalling $d \geq 2$, we may bound
\begin{align*}
|t_{\ell-1}-t_\ell|&\leq
\frac{t_\ell^2}{d^\ell}+\frac{s_\ell^2d^\ell}{\mu^{2\ell}}
\leq \alpha\sqrt{d-1}\frac{|t_\ell|}{d^\ell}+\alpha\sqrt{\frac{\mu^2-d}{d}}
\frac{|s_\ell|d^\ell}{\mu^{2\ell}},\\
1+\sqrt{d-1}\sum_{k=1}^{\ell-1} d^{-k} &\leq 1+\frac{1}{\sqrt{d-1}}
\leq 2,\\
|s_{\ell-1}-s_\ell|&\leq
\frac{2|s_\ell t_\ell|}{d^\ell}+\frac{|s_\ell|^3}{\mu^{2\ell}}
\leq 2\alpha\sqrt{\frac{\mu^2-d}{d}}\frac{|t_\ell|}{d^\ell}+
\alpha^2\frac{\mu^2-d}{d}\frac{|s_\ell|}{\mu^{2\ell}},\\
1+\sqrt{\frac{\mu^2-d}{d}}\sum_{k=1}^{\ell-1} d^k\mu^{-2k}
&\leq 1+\sqrt{\frac{d}{\mu^2-d}}.
\end{align*}
Combining the above, applying the bounds $1 \leq \sqrt{d-1} \leq d^\ell$ and
$\sqrt{(\mu^2-d)/d} \leq \sqrt{d-1} \leq d^\ell$, and recalling $\alpha=1/6$,
we obtain (\ref{eq:completeinduction}). This completes the induction
and our proof of (\ref{eq:MGFinductivebound}).

Finally, (\ref{eq:MGFinductivebound}) implies, in particular,
\begin{align*}
\E[\exp(tX_\ell)]&=M_\ell(t,0) \leq \exp\left(t+\frac{|t|}{\sqrt{d-1}}\right)\;\;
\forall |t| \leq \frac{\sqrt{d-1}}{9} \leq T_\infty,\\
\E[\exp(sY_\ell)]&=M_\ell(0,s) \leq \exp\left(s+|s|\sqrt{\frac{d}{\mu^2-d}}
\right)\;\;
\forall |s| \leq \frac{1}{10}\sqrt{\frac{\mu^2-d}{d}} \leq S_\infty.
\end{align*}
By Fatou's lemma, the same bounds hold for $\E[\exp(tX)]$ and
$\E[\exp(sY)]$. Choosing $t=\sqrt{d-1}/9$,
\[\P\left[|X-1| \geq \tfrac{\gamma}{\sqrt{d-1}}\right]
\leq e^{-t\frac{\gamma}{\sqrt{d-1}}}\E[e^{t|X-1|}]
\leq e^{-\gamma/9}(\E[e^{t(X-1)}]+\E[e^{t(1-X)}])
\leq 2e^{1/9}e^{-\gamma/9},\]
yielding the bound for $X$ in (\ref{eq:XYtails}). The same argument yields the
bound for $Y$, and this completes the proof of Lemma \ref{lemma:XYmomentstails}
in the case of $\delta=1$.

For $\delta \in (0,1)$, first note that the definition of $X_\ell$ does not
depend on the revealed labels, and hence $X_\ell \to X$ a.s.\ for the same limit
$X$ as in the case $\delta=1$. To show $Y_\ell \to Y$ a.s., denote
\[N_{\ell,\true}^+=|\{v:k(v)=\ell,\sigma_\true(v)=+1\}|,\;\;
N_{\ell,\true}^-=|\{v:k(v)=\ell,\sigma_\true(v)=-1\}|,\]
\[Y_{\ell,\true}=\sigma_\true(\root)\mu^{-\ell}(N_{\ell,\true}^+
-N_{\ell,\true}^-)\]
where $\sigma_\true(v) \in \{+1,-1\}$ denotes the `true label' of each vertex
$v$ (i.e.\ the vertex set of the hidden
partition containing $v$). Then the $\delta=1$ case implies
$Y_{\ell,\true} \to Y$ a.s. On the event that the tree $T$ goes extinct, we have
$Y_\ell=Y_{\ell,\true}=0$ for all sufficiently large $\ell$, so
$Y_\ell \to Y=0$. On the event that $T$ does not go
extinct, we have $X>0$ a.s., so in particular
$N_{\ell,\true}^++N_{\ell,\true}^- \to \infty$.
As $d>\mu$ and as $Y_{\ell,\true} \to Y<\infty$, this also
implies $N_{\ell,\true}^+/N_{\ell,\true}^- \to 1$, so in fact
$d^{-\ell}N_{\ell,\true}^+ \to X/2$ and $d^{-\ell}N_{\ell,\true}^- \to X/2$.
Conditional on $T$ and the true labels,
$N_\ell^+ \sim \Binom(N_{\ell,\true}^+,\delta)$, so Hoeffding's inequality
implies $|\delta^{-1}N_\ell^+-N_{\ell,\true}^+| \leq (\log N_{\ell,\true}^+)
\sqrt{N_{\ell,\true}^+}$ almost surely for all large $\ell$.
A similar bound holds for $N_\ell^-$, implying that
\[|Y_\ell-Y_{\ell,\true}| \leq \mu^{-\ell}\left((\log N_{\ell,\true}^+)
\sqrt{N_{\ell,\true}^+}+(\log N_{\ell,\true}^-)
\sqrt{N_{\ell,\true}^-}\right)\]
almost surely for all large $\ell$. As $\mu>\sqrt{d}$ and $d^{-\ell}
N_{\ell,\true}^+ \to X/2$ and $d^{-\ell}N_{\ell,\true}^- \to X/2$,
this implies $|Y_\ell-Y_{\ell,\true}| \to 0$.
Hence $Y_\ell \to Y$ a.s.\ also on the event that $T$ does not go extinct.

\section{Proof of Theorem \ref{thm:Harmonic}, Eq. (\ref{eq:ExpansionCond})}
\label{app:ExpansionCond}

Recall that, given an infinite rooted tree $(T,\root)$, we denote by $\cond(T,\root)$ its conductance.
We start by recalling some basic notions that can be found in \cite{lyons1997unsolved}.
Let $\cond^{(\ell)}(T,\root)$ be the conductance of the first $\ell$ generations of $(T,\root)$, 
i.e.\ the intensity of current flowing through the tree when
the root has potential equal to one, and the vertices at generation $\ell$
have potential equal to $0$. By definition,
$\cond^{(\ell)} (T,\root)$ is monotone non-increasing, and $\cond(T,\root) = \cond^{(\infty)}(T,\root)\equiv \lim_{\ell\to\infty}\cond^{(\ell)}(T,\root)$.
We omit the argument and write $\cond^{(\ell)}$, $\cond$ when $(T,\root)$ is a
Galton-Watson tree with $\Pois(d)$ 
offspring distribution. 

By the standard rules for series/parallel combinations of resistances, we get the distributional recursion
\begin{align}
\cond^{(\ell+1)} \ed \sum_{i=1}^L\frac{\cond^{(\ell)}_i}{1+\cond^{(\ell)}_i}\, ,\label{eq:RecursionCond}
\end{align}
where $L\sim\Pois(d)$ is independent of the i.i.d.\ collection $(\cond^{(\ell)}_i)_{i\ge 1}$, $\cond^{(\ell)}_i\ed \cond^{(\ell)}$.
This is to be complemented with the boundary condition $\cond^{(0)}=\infty$ (with the convention that $\infty/(1+\infty)=1$).
The limit conductance $\cond$ is a fixed point of the above recursion.

We start with a simple concentration estimate.
\begin{lemma}\label{lemma:ConcentrCond}
Let $h(x) \equiv (1+x)\log(1+x)-x$. Then for any $\ell\ge 1$, we have
\begin{align}
\P\Big\{\big|\cond^{(\ell)}-\E\cond^{(\ell)}\big|\ge t\Big\}\le 2\, e^{-d\, h(t/d)}\, .\label{eq:CCone}
\end{align}
In particular, for any $M>0$ there exists $d_0(M)$ such that, for all $d\ge d_0(M)$,
\begin{align}
\P\Big\{\big|\cond^{(\ell)}-\E\cond^{(\ell)}\big|\ge \sqrt{4\, M d\log
d}\Big\}\le \frac{2}{d^M}\, . \label{eq:CCtwo}
\end{align}
\end{lemma}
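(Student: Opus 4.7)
The plan is to exploit the distributional recursion (\ref{eq:RecursionCond}) to establish a sub-Poisson moment generating function (MGF) bound for $\cond^{(\ell)}$ with mean $\mu_\ell := \E\cond^{(\ell)}$, and then to combine this with the fact that $\mu_\ell \le d$ to reduce the bound to the target rate $d\,h(t/d)$.

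First I would set $f(x) := x/(1+x) \in [0,1]$ and observe that convexity of $y \mapsto e^{\lambda y}$ gives the chord inequality $e^{\lambda y} \le 1 + y(e^\lambda - 1)$ for every $y \in [0,1]$ and every $\lambda \in \R$ (sign-agnostic in $\lambda$, which matters for the lower tail). Taking expectations against $\cond^{(\ell)}$ and using $\E f(\cond^{(\ell)}) = \mu_{\ell+1}/d$ (which follows from the recursion and independence of $L\sim\Pois(d)$ from the i.i.d.\ subtree conductances), I obtain $\E e^{\lambda f(\cond^{(\ell)})} \le 1 + (\mu_{\ell+1}/d)(e^\lambda - 1)$. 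Then the compound Poisson identity, conditioning on $L$, yields
\begin{align*}
\E e^{\lambda\cond^{(\ell+1)}} = \exp\!\bigl(d(\E e^{\lambda f(\cond^{(\ell)})} - 1)\bigr) \le \exp\bigl(\mu_{\ell+1}(e^\lambda - 1)\bigr),
\end{align*}
which is exactly the MGF bound of a Poisson random variable with mean $\mu_{\ell+1}$. Optimizing the Chernoff bound in both directions then delivers the Bennett-type inequality $\P\{|\cond^{(\ell)} - \mu_\ell| \ge t\} \le 2\exp(-\mu_\ell h(t/\mu_\ell))$ for every $\ell\ge 1$ (the base case $\ell=1$ falls out automatically since $\cond^{(1)} \sim \Pois(d)$ via the convention $f(\infty)=1$). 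For the lower tail I would invoke the elementary inequality $h(-x)\ge h(x)$ on $(0,1)$ to keep the rate of the form $\mu_\ell h(t/\mu_\ell)$.

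Third, to replace $\mu_\ell$ by $d$ in the rate, I would use the monotonicity statement: $\mu \mapsto \mu\,h(t/\mu)$ is non-increasing on $(0,\infty)$ for fixed $t>0$, since its derivative is $\log(1+t/\mu) - t/\mu \le 0$. Combined with the trivial bound $\mu_\ell \le d$ (from $f\le 1$), this yields $\mu_\ell h(t/\mu_\ell) \ge d\,h(t/d)$, giving (\ref{eq:CCone}). The second bound (\ref{eq:CCtwo}) then follows by plugging $t = \sqrt{4Md\log d}$ into (\ref{eq:CCone}) and using the Taylor lower bound $h(x) \ge x^2/3$ valid for $x \le 1/2$ (since $h(x) = x^2/2 - x^3/6 + O(x^4)$) to obtain $d\,h(t/d) \ge (4/3)M\log d \ge M\log d$, provided $d$ is large enough that $t/d \le 1/2$.

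I do not anticipate any serious obstacles: the main conceptual point is recognizing that, even though $\cond^{(\ell)}$ is itself a complicated distributional fixed point, the compound Poisson structure of the recursion forces its MGF to be dominated by that of a Poisson with the same mean, after which everything reduces to routine Bennett-style manipulations. If there is any subtlety, it lies in verifying that the chord inequality for $e^{\lambda y}$ on $[0,1]$ really holds for $\lambda<0$ as well (so that the same MGF bound drives both tails), and in the brief monotonicity check for $\mu\mapsto \mu h(t/\mu)$, both of which are one-line computations.
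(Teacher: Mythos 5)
Your proof is correct and reaches the same bound, but it takes a genuinely different and somewhat cleaner route than the paper. The paper first replaces $L \sim \Pois(d)$ by $\Binom(m, d/m)$, giving a sum of $m$ bounded independent variables $X_i = B_i\,\cond^{(\ell-1)}_i/(1+\cond^{(\ell-1)}_i)$, then applies the textbook Bennett inequality (cited from Boucheron--Lugosi--Massart, Theorem 2.9) with the variance factor $\nu = \sum_i \E X_i^2 \le d$, and finally passes to the limit $m\to\infty$ via coupling. You instead exploit the compound-Poisson structure head-on: the chord bound $e^{\lambda y} \le 1+y(e^\lambda-1)$ for $y = f(\cond^{(\ell)}) \in [0,1]$ (which, as you note, is sign-agnostic in $\lambda$ because $y\mapsto e^{\lambda y}$ is convex for all $\lambda$) gives $\E e^{\lambda f(\cond^{(\ell)})} -1 \le (\mu_{\ell+1}/d)(e^\lambda - 1)$, and plugging this into $\E e^{\lambda\cond^{(\ell+1)}} = \exp\bigl(d\bigl(\E e^{\lambda f(\cond^{(\ell)})}-1\bigr)\bigr)$ yields the Poisson-type MGF bound $\exp(\mu_{\ell+1}(e^\lambda-1))$. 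The two-sided Chernoff optimization (with $h(-x)\ge h(x)$ for the left tail) then gives the rate $\mu_\ell\, h(t/\mu_\ell)$ directly. Both proofs close the argument identically, by noting $\mu_\ell \le d$ (from $f\le1$, which the paper phrases as $\nu\le d$) and that $\mu\mapsto \mu\, h(t/\mu)$ is non-increasing, so the rate is at least $d\,h(t/d)$. Your intermediate bound with $\mu_\ell$ is actually a bit sharper before that final step; what the paper's route buys, arguably, is the ability to cite a self-contained reference for the Bennett argument rather than re-deriving it. Your second part is also fine: you use $h(x)\ge x^2/3$ on $[0,1/2]$ where the paper uses $h(x)\ge x^2/4$ on $[0,1]$; either yields $d\,h(t/d)\ge M\log d$ at $t=\sqrt{4Md\log d}$ once $d$ is large.
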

\begin{proof}
Consider a modified random variable $\tcond^{(\ell)}$ defined by
\begin{align}
\tcond^{(\ell)} \ed \sum_{i=1}^m\frac{\cond^{(\ell-1)}_i}{1+\cond^{(\ell-1)}_i} \, B_i\, ,
\end{align}
where $m$ is an integer and $B_i\sim_{iid}$ Bernoulli$(d/m)$. By
Bennet's inequality \cite[Theorem 2.9]{boucheron2013concentration},
we have
\begin{align}
\P\Big\{\big|\tcond^{(\ell)}-\E\tcond^{(\ell)}\big|\ge t\Big\}\le
2 \, e^{-\nu\, h(t/\nu)}\, ,
\end{align}
where $\nu = \sum_{i=1}^m\E(X_i^2)$, $X_i\equiv B_i\cond^{(\ell-1)}_i/(1+\cond^{(\ell-1)}_i)$.
Claim (\ref{eq:CCone}) simply follows because $\tcond^{(\ell)}$ converges to
$\cond^{(\ell)}$ in distribution and in expectation
(for instance by coupling $\Pois(d)$ and $\Binom(m,d/m)$), and  using $\nu\le d$ together with the fact that $\nu\mapsto \nu\, h(t/\nu)$
is monotone non-increasing for all $t\ge 0$. 

Claim (\ref{eq:CCtwo}) follows by using $h(x) \ge x^2/4$ for $x\le 1$. 
\end{proof}

We next estimate the mean and variance of $\cond$.
\begin{lemma}\label{lemma:ExpVar}
With the above definitions, we have  $\E\cond\le d$, $\Var(\cond)\le d$ and, for large $d$,
\begin{align}
\E\cond = d-1+O_d\Big(\sqrt{\frac{\log d}{d}}\Big)\, ,\label{eq:ExpectC}\\
\Var(\cond) = d-O_d(1)\, . \label{eq:VarC}
\end{align}
\end{lemma}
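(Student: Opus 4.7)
The two inequalities follow immediately from the compound-Poisson structure of the recursion (\ref{eq:RecursionCond}): writing $\cond \ed \sum_{i=1}^L X_i$ with $L\sim\Pois(d)$ independent of the i.i.d.\ summands $X_i := \cond_i/(1+\cond_i) \in [0,1)$, the standard mean and variance formulas for compound Poisson sums give
\[
\E\cond = d\,\E X_1 \le d, \qquad \Var(\cond) = d\,\E X_1^2 \le d,
\]
where both bounds use $X_1\in[0,1)$.

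For the asymptotic expansions I will introduce $m:=\E\cond$, $v:=\Var(\cond)$, $\alpha:=\E[(1+\cond)^{-1}]$, $\beta:=\E[(1+\cond)^{-2}]$, and use $\cond/(1+\cond)=1-1/(1+\cond)$ to rewrite the two identities above as
\[
m=d-d\alpha, \qquad v=2m-d+d\beta.
\]
Jensen applied to the convex map $x\mapsto(1+x)^{-1}$ gives $\alpha\ge 1/(1+m)$, which after substitution into $m=d-d\alpha$ forces $m\le d-1$. The main step is the matching lower bound, which I will obtain by estimating $\alpha$ up to $O(1/d^2)$ through a second-order Taylor expansion of $f(x)=(1+x)^{-1}$ around $m$. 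On the concentration event $E:=\{|\cond-m|\le t\}$ with $t=\sqrt{4Md\log d}$ and $M$ fixed large (say $M=10$), the quadratic Taylor remainder is bounded pointwise by $(\cond-m)^2/(1+m-t)^3$, giving an $E$-contribution of $v/(1+m-t)^3=O(1/d^2)$ provided $m=\Omega(d)$; on $E^c$, Cauchy--Schwarz combined with the sub-Poisson moment bound $\E\cond^4=O(d^4)$, a consequence of the stochastic domination $\cond\le L$ implicit in (\ref{eq:RecursionCond}), and the tail $\P(E^c)\le 2/d^M$ from (\ref{eq:CCtwo}) controls the contribution by $O(d^{2-M/2})$. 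This gives $\alpha=(1+m)^{-1}+O(1/d^2)$; solving $m=d-d\alpha$ for $m$ then yields the sharper form $m=d-1+O(1/d)$ of (\ref{eq:ExpectC}).

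The chief obstacle is that the Taylor estimate just used is circular: its remainder bound needs the a priori input $m=\Omega(d)$. I plan to resolve this by running exactly the same argument not at the fixed point but at finite depth $\ell$, with $m_\ell:=\E\cond^{(\ell)}$. Lemma \ref{lemma:ConcentrCond} applies uniformly in $\ell$; the stochastic-domination bound $\cond^{(\ell)}\le L$ is preserved by the recursion; and the boundary condition $\cond^{(0)}=\infty$ gives $\cond^{(1)}=L$, hence $m_1=d$. The recursion $m_{\ell+1}=d-d\alpha_\ell$, combined with the Taylor--plus--concentration bound on $\alpha_\ell$, then supports a straightforward induction showing $m_\ell\ge d-1-C/d$ for all $\ell\ge 1$ and some absolute constant $C$ (the inductive hypothesis already ensures $m_\ell>t$ for large $d$, so that the bound on $\alpha_\ell$ is applicable). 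Since $\cond^{(\ell)}\downarrow\cond$, monotone convergence transfers this estimate to the limit and completes the proof of (\ref{eq:ExpectC}).

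Finally, for (\ref{eq:VarC}), a completely analogous Taylor--plus--concentration expansion applied to $h(x)=(1+x)^{-2}$ yields $\beta=(1+m)^{-2}+O(1/d^3)$, so $d\beta=O(1/d)$. Inserting this together with $m=d-1+O(1/d)$ into $v=2m-d+d\beta$ produces $v=d-2+O(1/d)=d-O_d(1)$, as required.
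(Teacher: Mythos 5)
Your argument is correct, and it goes by a genuinely different route than the paper's at the key step of lower-bounding $m:=\E\cond$. The preliminary bounds $\E\cond\le d$, $\Var(\cond)\le d$, and the Jensen upper bound $m\le d-1$ coincide with the paper's. For the lower bound, the paper restricts $\E[\cond^{(\ell)}/(1+\cond^{(\ell)})]$ to the concentration window of Lemma~\ref{lemma:ConcentrCond}, extracts an explicit monotone self-map $F(x)=\tilde d(x-\Delta)/(x-\Delta+1)$, and locates its larger fixed point $x_1(d)=d-1+O(\sqrt{\log d/d})$. You instead Taylor-expand $\alpha_\ell:=\E[(1+\cond^{(\ell)})^{-1}]$ to second order around $m_\ell:=\E\cond^{(\ell)}$ on the concentration window, control the tail contribution by Cauchy--Schwarz together with the stochastic domination $\cond^{(\ell)}\le L\sim\Pois(d)$, and then run an induction in $\ell$ along the recursion $m_{\ell+1}=d-d\alpha_\ell$. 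This properly resolves the apparent circularity (the Taylor remainder estimate needs an a priori bound $m_\ell=\Omega(d)$), since the induction starts from $m_1=d$ and all the inputs---the concentration of Lemma~\ref{lemma:ConcentrCond}, the uniform second moment $\Var(\cond^{(\ell)})\le d$, and the $\Pois(d)$ domination---hold uniformly in $\ell\ge 1$, and monotonicity $\cond^{(\ell)}\downarrow\cond$ passes the bound to the limit. Your route is more computational but yields the sharper error $m=d-1+O(1/d)$ in place of the paper's $O(\sqrt{\log d/d})$. For the variance, you repeat the quadratic expansion for $\beta:=\E[(1+\cond)^{-2}]$, whereas the paper finishes (\ref{eq:VarC}) with a single application of Jensen to $\Var(\cond)=d\,\E[(\cond/(1+\cond))^2]\ge\bc^2/d$---shorter, though less precise. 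No gaps.
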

\begin{proof}
First note that, by the recursion (\ref{eq:RecursionCond}), we have
\begin{align}
\E\cond^{(\ell+1)} = d\, \E\Big\{\frac{\cond^{(\ell)}}{\cond^{(\ell)}+1}\Big\}\, ,\;\;\;\;\;
\Var(\cond^{(\ell+1)}) = d\, \E\Big\{\Big(\frac{\cond^{(\ell)}}{\cond^{(\ell)}+1}\Big)^2\Big\}\, . \label{eq:RecursionE}
\end{align}
whence we get immediately $\E\cond\le d$ and $\Var(\cond)\le d$ (note that $\lim_{\ell\to\infty}\E(\cond^{(\ell)})= \E(\cond)$,
$\lim_{\ell\to\infty}\Var(\cond^{(\ell)})= \Var(\cond)$ by dominated
convergence, since $\cond^{(\ell)}$ is dominated by a $\Pois(d)$ random variable).  
 Also, by Jensen's inequality and using the
notation $\bc^{(\ell)}\equiv \E\cond^{(\ell)}$, we get
\begin{align}
\bc^{(\ell+1)} \le d\,\frac{\bc^{(\ell)}}{\bc^{(\ell)}+1}\, .
\end{align}
Iterating this bound from $\bc^{(0)} = \infty$, we obtain $\bc \le d-1$.

In order to prove a lower bound,  define $B_{\ell} = [\bc^{(\ell)}-\sqrt{4Md\log d},\bc^{(\ell)}+\sqrt{4Md\log d}]\equiv [\bc_1^{(\ell)},\bc_2^{(\ell)}]$, 
with $M$ to be fixed below.  We have
\begin{align}
\E\Big\{\frac{\cond^{(\ell)}}{\cond^{(\ell)}+1}\Big\}&\ge
\E\Big\{\frac{\cond^{(\ell)}}{\cond^{(\ell)}+1}\, \1\{\cond^{(\ell)}\in
B_{\ell}\}\Big\}\\
& \ge \frac{\bc^{(\ell)}_1}{\bc^{(\ell)}_1+1}\, \P \big(\cond^{(\ell)}\in B_{\ell}\big)\\
& \ge  \frac{\bc^{(\ell)}-\Delta}{\bc^{(\ell)}-\Delta+1}\, \Big(1-\frac{2}{d^M}\Big)
\end{align}
where $\Delta = \sqrt{4Md\log d}$ and we used (\ref{eq:CCtwo}). Defining
$\tilde{d} = d(1-2d^{-M})$, and 
the function $F(x) = \tilde{d}(x-\Delta)/(x-\Delta+1)$,
we obtain the  lower bounds
\begin{align}
\bc^{(\ell)}\ge F(\bc^{(\ell-1)})\, .
\end{align}
By calculus, we obtain that the fixed point equation $x=F(x)$ has two positive solutions $0<x_0(d)<x_1(d)<\infty$ for all $d$ large enough, with
\begin{align}
x_1(d) &= \frac{\td+\Delta-1}{2}+\sqrt{\frac{(\td+\Delta-1)^2}{4}-\td\Delta}\\
& = d-1+O\Big(\sqrt{\frac{\log d}{d}}\Big)\, ,
\end{align}
where, for the second equality, we used $M\ge 3/2$. Furthermore $x\mapsto F(x)$ is non-decreasing for $x\ge x_1(d)$. Using the initial condition
$\bc^{(0)} = \infty$, this implies $\bc^{(\ell)}\ge x_1(d)$ for all $\ell$, thus finishing the proof of (\ref{eq:ExpectC}).

In order to prove (\ref{eq:VarC}), recall that we already proved
$\Var(\cond)\le d$. Using Jensen's inequality in (\ref{eq:RecursionE}),
we get 
\begin{align}
\Var(\cond)\ge d\,  \E\Big\{\frac{\cond}{\cond+1}\Big\}^2  = d\Big(\frac{\bc}{d}\Big)^2\, ,
\end{align}  
and the claim follows from our estimate of $\bc$.
\end{proof}

We are now in position to prove the Taylor expansion in (\ref{eq:ExpansionCond}).
\begin{proof}[Proof of Theorem \ref{thm:Harmonic}, Eq.\ (\ref{eq:ExpansionCond})]
First we claim that $\Psi(x,y)\in [0,1]$ for all $x,y\ge 0$.
Indeed, it is clear that $\Psi(x,y)\ge 0$. Furthermore, $\lim_{x\to 0}\Psi(x,y) = 1$ for any $y>0$ and
\begin{align}
\frac{\partial \Psi}{\partial x}(x,y) &= -\frac{y(1-\sqrt{K})^2}{2(x+y+xy)^2\sqrt{1+x}}\le 0 \, ,
\end{align}
where $K\equiv (1+x)(1+y)$. This proves the claim $\Psi(x,y) \in [0,1]$.

Next, let $\bc =\E\cond$ and  $B = [\bc-\sqrt{4Md\log d},\bc+\sqrt{4Md\log d}]\equiv [\bc_1,\bc_2]$, with $M$ to be fixed below.
By the above calculation and Lemma \ref{lemma:ConcentrCond}, we get
\begin{align}
d\,\E\Psi(\cond_1,\cond_2) &= 2d\,
\E\Big\{\frac{\cond_1\sqrt{\cond_2+1}}{\cond_1+\cond_2+\cond_1\cond_2}\,
\1\{\cond_1\in B\}
\, \1\{\cond_2\in B\}\Big\} + O(d^{-M+1})\\
&  = 2d\,\E\Big\{\frac{\cond_1\sqrt{\cond_2+1}}{(\cond_1+1)(\cond_2+1)}\,
\1\{\cond_1\in B\}
\, \1\{\cond_2\in B\}\Big\} + O(d^{-3/2})\\
& = 2d\,\E\Big\{\frac{\cond}{\cond+1}\, \1\{\cond\in
B\}\Big\}\E\Big\{\frac{1}{\sqrt{\cond+1}}\, \1\{\cond\in B\}\Big\}+ O(d^{-3/2})\, .
\label{eq:ExpressionForAsymp}
\end{align}
In the second equality we took $M\ge 5/2$ and used the fact that there exists a constant $C=C(M)$ such that
\begin{align}
\left|\frac{\cond_1\sqrt{\cond_2+1}}{\cond_1+\cond_2+\cond_1\cond_2}-\frac{\cond_1\sqrt{\cond_2+1}}{(\cond_1+1)(\cond_2+1)}\right| = 
\left|\frac{\cond_1\sqrt{\cond_2+1}}{(\cond_1+\cond_2+\cond_1\cond_2)(\cond_1+1)(\cond_2+1)}\right| \le C\, d^{-5/2}\, ,
\end{align}
for all $\cond_1,\cond_2\in B$.

We are left with the task of evaluating the two expectations in (\ref{eq:ExpressionForAsymp}).
Consider the first one. We have
\begin{align}
\E\Big\{\frac{\cond}{\cond+1}\, \1\{\cond\in B\}\Big\}& =  \E\Big\{\frac{\cond}{\cond+1}\Big\} +O(d^{-M})\\
& = \frac{\bc}{d} +O(d^{-M})\\
& = 1-\frac{1}{d}+O_d\Big(\frac{(\log d)^{1/2}}{d^{3/2}}\Big)\, . \label{eq:FirstExpextation}
\end{align}
where the first equality follows from Lemma \ref{lemma:ConcentrCond}, the second from (\ref{eq:RecursionE}), and 
the last from Lemma \ref{lemma:ExpVar}.

Next let $f(x) = (1+x)^{-1/2}$. Note that $\sup_{x\in B}|f'''(x)| = O(d^{-7/2})$. Hence by the intermediate value theorem
(for $\xi$ a point in $B$), we have
\begin{align}
\E\big\{f(\cond)\,\1\{\cond\in B\}\big\} & 
= \E\Big\{\Big[f(\bc)
+f'(\bc)\,(\cond-\bc)+\frac{1}{2}f''(\bc)\,(\cond-\bc)^2+\frac{1}{6}f'''(\xi)(\cond-\bc)^3\Big]\,\1\{\cond\in
B\}\Big\}\\
& \stackrel{(a)}{=}\E\Big\{f(\bc) +f'(\bc)\,(\cond-\bc)+\frac{1}{2}f''(\bc)\,(\cond-\bc)^2\Big\} +O(d^{-M+1})+O\left(\frac{(\log d)^{3/2}}{d^2}\right)\\
& \stackrel{(b)}{=}\frac{1}{(1+\bc)^{1/2}} +\frac{3}{8}\frac{1}{(1+\bc)^{5/2}}\Var(\cond) + O\left(\frac{(\log d)^{3/2}}{d^2}\right)\\
& \stackrel{(c)}{=} \frac{1}{d^{1/2}}+\frac{3}{8d^{3/2}} + O\left(\frac{(\log d)^{3/2}}{d^2}\right)\, . \label{eq:SecondExpextation}
\end{align}
Here $(a)$ follows from Lemma \ref{lemma:ConcentrCond} and the above upper bound on $|f'''(x)|$, $(b)$ by taking  $M\ge 3$,
and $(c)$ from Lemma \ref{lemma:ExpVar}.

The proof is completed by substituting the estimates (\ref{eq:FirstExpextation}) and (\ref{eq:SecondExpextation}) in (\ref{eq:ExpressionForAsymp}).
\end{proof}

\bibliographystyle{alpha}
\newcommand{\etalchar}[1]{$^{#1}$}

\end{document}